\newif\iflong
\newif\ifshort
\newcommand{\shortLongAlternartive}[2]{\ifshort{#1}\else{#2}\fi}
\DeclareMathOperator{\skel}{sk}
\definecolor{lipicsblue}{rgb}{0.08235294118,0.3098039216,0.537254902}
\newcommand{\col}[1]{\textcolor{lipicsblue}{#1}}
\renewcommand{\emph}[1]{\textcolor{blue}{\em #1}\xspace}
\newtheorem{obs}[theorem]{Observation}
\newcommand{\tabitem}{~~\llap{\textbullet}~~}
\Crefname{observation}{Observation}{Observations}
\Crefname{algorithm}{Algorithm}{Algorithms}
\Crefname{algocf}{Algorithm}{Algorithms}
\Crefname{section}{Section}{Sections}
\Crefname{lemma}{Lemma}{Lemmata}
\Crefname{lemma2}{Lemma}{Lemmata}
\Crefname{note}{Note}{Notes}
\Crefname{claim}{Claim}{Claims}
\Crefname{property}{Property}{Properties}
\Crefname{enumi}{Condition}{Conditions}
\Crefname{figure}{Fig.}{Figs.}
\title{Simple Realizability of Abstract Topological Graphs}
\author{Giordano {Da Lozzo}}{Roma Tre University, Italy \and \url{http://www.dia.uniroma3.it/~dalozzo}}{giordano.dalozzo@uniroma3.it}{https://orcid.org/0000-0003-2396-5174}{}
\author{Walter Didimo}{University of Perugia, Italy\and \url{https://mozart.diei.unipg.it/didimo/}}{walter.didimo@unipg.it}{https://orcid.org/0000-0002-4379-6059}{}
\author{Fabrizio Montecchiani}{University of Perugia, Italy\and \url{https://mozart.diei.unipg.it/montecchiani/}}{fabrizio.montecchiani@unipg.it}{https://orcid.org/0000-0002-0543-8912}{}
\author{Miriam Münch}{University of Passau, Germany\and \url{https://www.fim.uni-passau.de/theoretische-informatik/lehrstuhlteam/miriam-muench}}{muenchm@fim.uni-passau.de}{https://orcid.org/0000-0002-6997-8774}{}
\author{Maurizio Patrignani}{Roma Tre University, Italy\and \url{https://compunet.ing.uniroma3.it/\#!/people/titto}}{maurizio.patrignani@uniroma3.it}{https://orcid.org/0000-0001-9806-7411}{}
\author{Ignaz Rutter}{University of Passau, Germany\and \url{https://www.fim.uni-passau.de/en/theoretical-computer-science/chair/prof-dr-ignaz-rutter}}{rutter@fim.uni-passau.de}{https://orcid.org/0000-0002-3794-4406}{}
\date{}
\authorrunning{Da Lozzo et al.}
\keywords{Abstract Topological Graphs, SPQR-Trees, Synchronized PQ-Trees}
\newcommand{\Chi}{\mathcal{X}\xspace}
\newcommand{\SATor}{{\sc SATR}\xspace} %alias for the new name
\newcommand{\ATRfull}{{\sc AT-graph Realizability}\xspace}
\newcommand{\SATRfull}{{\sc Simple AT-graph Realizability}\xspace}
\newcommand{\WATRfull}{{\sc Weak AT-graph Realizability}\xspace}
\newcommand{\ATR}{{\sc ATR}\xspace}
\newcommand{\SATR}{{\sc SATR}\xspace}
\newcommand{\WATR}{{\sc WATR}\xspace}
\newcommand{\threesat}{{\sc 3-SAT}\xspace}
\newcommand{\plthreesat}{{\sc Planar 3-SAT}\xspace}
\newcommand{\pthreesat}{{\sc 3-Connected Planar 3-SAT}\xspace}
\newcommand{\lcc}{\ensuremath{\mathrm{\lambda}}}
\newcommand{\acp}{{\sc ACP}\xspace}
\newcommand{\acpB}{{\sc 2-connected ACP}\xspace}
\newcommand{\acpG}{{\sc GACP}\xspace}
\newcommand{\acpGB}{{\sc 2-connected GACP}\xspace}
\soulregister{\acpG}{7}
\soulregister{\acpGB}{7}
\begin{document}

\maketitle

\begin{abstract}
An \emph{abstract topological graph} (\emph{AT-graph}) is a pair~$A=(G,\Chi)$, where~$G=(V,E)$ is a graph and~$\Chi \subseteq \binom{E}{2}$ is a set of pairs of edges of $G$. 
A \emph{realization of~$A$} is a drawing $\Gamma_A$ of $G$ in the plane such that any two edges~$e_1,e_2$ of~$G$ cross in~$\Gamma_A$ if and only if~$(e_1,e_2) \in \Chi$; $\Gamma_A$ is \emph{simple} if any two edges intersect at most once (either at a common endpoint or at a proper crossing).
The \emph{\ATRfull} (\emph{\ATR}) problem asks whether an input AT-graph admits a realization. The version of this problem that requires a simple realization is called \emph{\SATRfull} (\emph{\SATR}).
It is a classical result that both \ATR and \SATR are \NP-complete~\cite{DBLP:journals/jct/Kratochvil91a,Kratochvil1989}.

In this paper, we study the \SATR problem from a new structural perspective. More precisely, we consider the size $\lcc(A)$ of the largest connected component of the \emph{crossing graph} of any realization of $A$, i.e., the graph ${\cal C}(A) = (E, \Chi)$. This parameter represents a natural way to measure the level of interplay among edge crossings. First, we prove that \SATR is \NP-complete when $\lcc(A) \geq 6$. On the positive side, we give an optimal linear-time algorithm that solves \SATR when $\lcc(A) \leq 3$ and returns a simple realization if one exists. 
Our algorithm is based on several ingredients, in particular the reduction to a new embedding problem subject to constraints that require certain pairs of edges to alternate (in the rotation system), and a sequence of transformations that exploit the interplay between alternation constraints and the SPQR-tree and PQ-tree data structures to eventually arrive at a simpler embedding problem that can be solved with standard techniques.
\end{abstract}

%\setcounter{page}{1}
%%%%%%%%%%%%%%
% INTRODUCTION
%%%%%%%%%%%%%%
\section{Introduction}\label{se:intro}
A \emph{topological graph} $\Gamma$ is a geometric representation of a graph $G=(V, E)$ in the plane such that the vertices of $G$ are mapped to distinct points and the edges of $G$ are simple curves connecting the points corresponding to their end-vertices. For simplicity, the geometric representations of the elements of $V$ and $E$ in $\Gamma$ are called \emph{vertices} and \emph{edges} of $\Gamma$, respectively. It is required that: $(i)$ any intersection point of two edges in $\Gamma$ is either a common endpoint or a crossing (a point where the two edges properly cross); $(ii)$ any two edges of $\Gamma$ have finitely many intersections and no three edges pass through the same crossing point. 
Additionally, we say that $\Gamma$ is \emph{simple} if adjacent edges never cross and any two non-adjacent edges cross at most once.
The \emph{crossing graph} ${\cal C}(\Gamma)$ of a topological graph $\Gamma$ is a graph whose vertices correspond to the edges \mbox{of $\Gamma$ (and hence of~$G$)} and two vertices are adjacent if and only if their corresponding \mbox{edges cross in $\Gamma$.}

An \emph{abstract topological graph} (\emph{AT-graph}) is a pair~$A=(G,\Chi)$ such that~$G=(V, E)$ is a graph and~$\Chi \subseteq \binom{E}{2}$ is a set of pairs of edges of $G$. We say that~$A$ is \emph{realizable} if there exists a topological graph~$\Gamma_A$ isomorphic to~$G$ such that any two edges $e$ and $e'$ of~$G$ cross (possibly multiple times) in~$\Gamma_A$ if and only if~$(e,e') \in \Chi$. The topological graph $\Gamma_A$ is called a \emph{realization of~$A$}. Note that, by definition, $A$ is realizable if and only if the crossing graph of any realization $\Gamma_A$ of $A$ is isomorphic to the graph $(E,\Chi)$. Since such a crossing graph only depends on $A$ (i.e., it is the same for any realization of $A$), we denote it by ${\cal C}(A)$.

The \emph{\ATRfull} (\emph{\ATR}) problem asks whether an AT-graph~$A=(G,\Chi)$ is realizable. The \emph{\SATRfull} (\emph{\SATR}) problem is the version of \ATR in which the realization of $A$ is required to be simple; if such a realization exists, then $A$ is said to be \emph{simply realizable}. 
Since the introduction of the concept of AT-graphs~\cite{DBLP:journals/siamdm/KratochvilLN91}, establishing the complexity of the \ATR (and of the \SATR) problem has been the subject of an intensive research activity, also due to its connection with other prominent problems in topological and geometric graph theory. Clearly, if $\Chi = \emptyset$, both the \ATR and the \SATR problems are equivalent to testing whether~$G$ is planar, which is solvable in linear time~\cite{DBLP:journals/jcss/BoothL76,DBLP:journals/jacm/HopcroftT74}. However, for $\Chi \neq \emptyset$, a seminal paper by Kratochv\'il~\cite{DBLP:journals/jct/Kratochvil91a} proves that \ATR is \NP-hard and that this problem is polynomially equivalent to recognizing \emph{string graphs}. We recall that a graph $S$ is a string graph if there exists a system of curves (called \emph{strings}) in the plane whose crossing graph is isomorphic to $S$.
In the same paper, Kratochv\'il proves the \NP-hardness of the \emph{\WATRfull} (\emph{\WATR}) problem, that is, deciding whether a given AT-graph $A = (G, \Chi)$ admits a realization where a pair of edges may cross only if it belongs to $\Chi$.
He also proves that recognizing string graphs remains polynomial-time reducible to \WATR. Subsequent results focused on establishing decision algorithms for \WATR; it was first proven that this problem belongs to NEXP~\cite{DBLP:journals/jcss/SchaeferS04} and then to NP~\cite{sss-rsgnp-03}. This also implies the \NP-completeness of recognizing string graphs (which is polynomial-time reducible to \WATR) and of \ATR (which is polynomially equivalent to string graph recognition).  
Concerning the simple realizability setting for AT-graphs, it is known that the \SATR problem remains \NP-complete, still exploiting the connection with recognizing string graphs~\cite{Kratochvil1989,DBLP:journals/jct/KratochvilM94}. On the positive side, for those AT-graphs $A=(G,\Chi)$ for which~$G$ is a complete $n$-vertex graph, \SATR is solvable in polynomial-time, with an~$O(n^6)$-time algorithm~\cite{DBLP:journals/dcg/Kyncl11,DBLP:journals/dcg/Kyncl20}. Refer to~\cite{DBLP:journals/dcg/Kyncl11} for the complexity of other variants of ATR, and to~\cite{DBLP:conf/wg/GassnerJPSS06} for a connection with the popular {\sc Simultaneous Graph Embedding} problem.

\subparagraph{Our contributions.} In this paper, we further investigate the complexity of the simple realizability setting, i.e., of the \SATR problem. We remark that focusing on simple drawings is a common scenario in topological graph theory, computational geometry, and graph drawing (see, 
e.g.,~\cite{DBLP:journals/csur/DidimoLM19,DBLP:reference/cg/2004,DBLP:books/sp/20/Hong20,DBLP:reference/crc/StreinuBPDCKAF99}), because avoiding crossings between adjacent edges, as well 
as multiple crossings between a pair of non-adjacent edges, is a requirement for minimal edge crossing layouts. 
Specifically, we study the simple realizability problem for an AT-graph $A$ from a new structural perspective, namely looking at  the number of vertices of the largest connected component of the crossing graph ${\cal C}(A)$, which we denote by $\lcc(A)$. This parameter is a natural measure of the level of interplay among edge crossings. Namely, \SATR is trivial on instances for which $\lcc(A) \le 2$, that is, instances in which the number of crossings is unbounded but each edge is crossed at most once. On the other hand, the problem becomes immediately nontrivial as soon as $\lcc(A) \geq 3$.  
Precisely, our results are as follows:

\begin{itemize}
    \item We prove that \SATRfull is \NP-complete already for instances~$A$ for 
 which~$\lcc(A) = 6$ (which, in fact, implies the hardness for every fixed value of $\lcc(A) \geq 6$); see \cref{se:hardness}. A consequence of our result is that, unless $\P = \NP$, the problem is not fixed-parameter tractable with respect to $\lcc(A)$ and, thus, with respect to any graph parameter bounded by $\lcc(A)$, such as the maximum node degree, the treewidth or even the treedepth. As the results in~\cite{DBLP:journals/jct/Kratochvil91a,Kratochvil1989,DBLP:journals/jct/KratochvilM94}, our hardness proof uses a reduction from \pthreesat. However, the reduction in~\cite{DBLP:journals/jct/Kratochvil91a} does not deal with the simplicity of the realization, whereas the reduction in \cite{Kratochvil1989,DBLP:journals/jct/KratochvilM94} may lead to instances~$A$ for which $\lcc(A)$ is greater than six and, actually, is even not bounded by~a~constant.
	
   \item We prove that \SATRfull can be solved efficiently \mbox{when~$\lcc(A) \leq 3$.} More precisely, we give an optimal $O(n)$-time testing algorithm, which also finds a simple realization if one exists; see \cref{se:triplets}. We remark that the only polynomial-time algorithm previously known in the literature for the \SATR problem is restricted to complete graphs and has high complexity~\cite{DBLP:journals/dcg/Kyncl11,DBLP:journals/dcg/Kyncl20}. Our algorithm is based on several ingredients, including the reduction to a new embedding problem subject to constraints that require certain pairs of edges to alternate (in the rotation system), and a sequence of transformations that exploit the interplay between alternation constraints and the SPQR-tree~\cite{DBLP:journals/algorithmica/BattistaT96,dt-opl-96} and PQ-tree~\cite{DBLP:journals/jcss/BoothL76,booth1975pq} data structures to eventually arrive at a simpler embedding problem that can be solved with standard techniques. We remark that the alternation constraints we encounter in our problem are rather opposite to the more-commonly encountered consecutivity constraints~\cite{DBLP:journals/talg/BlasiusFR23,BlasiusR16,DBLP:journals/jgaa/GutwengerKM08,Schaefer13} and cannot be handled straightforwardly with~PQ-trees.
    \end{itemize}
    
\smallskip

\section{Basic Definitions and Tools}\label{se:basic}

For basic definitions about graphs and their drawings, refer to~\cite{DBLP:books/ph/BattistaETT99,DBLP:books/daglib/0030488}.
We only consider simple realizations and thus we often omit the qualifier ``simple'' when clear from the~context.

Let $A=(G,\Chi)$ be an AT-graph, with $G=(V,E)$, and let~$\Gamma_A$ be a realization of~$A$. A \emph{face} of~$\Gamma_A$ is a region of the plane bounded by maximal uncrossed portions of the edges in~$E$.
A set~$E' \subseteq E$ of~$k$ edges pairwise crossing in~$\Gamma_A$ is a \emph{$k$-crossing}of~$\Gamma_A$.
As we focus on simple realizations, we assume that the edges in~$E'$ are pairwise non-adjacent in~$G$.
\mbox{For a~$k$-crossing~$E'$,} denote by~$V(E')$ the set of~$2k$ endpoints of the~$k$ edges in~$E'$. The \emph{arrangement} of~$E'$, denoted by~$C_{E'}$, is the arrangement of the curves \mbox{representing the edges of~$E'$ in~$\Gamma_A$.} 
A~$k$-crossing~$E'$ is \emph{untangled} if, in the arrangement~$C_{E'}$, all~$2k$ vertices in~$V(E')$ are incident to a common face (see \cref{fig:3crossing-untangled}); otherwise, it is \emph{tangled} (see \cref{fig:3crossing-tangled}). \mbox{The next lemma
will turn useful in \cref{se:triplets}.}

\begin{lemma}\label{lem:untangling}
An AT-graph $A$ with $\lcc(A) \leq 3$ admits a simple realization if and only if it admits a simple realization in which all~$3$-crossings are untangled.
\end{lemma}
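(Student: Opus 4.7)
The backward direction is immediate: a simple realization of $A$ in which every $3$-crossing is untangled is, in particular, a simple realization of $A$. For the forward direction, the plan is to start from any simple realization $\Gamma$ of $A$ and modify it in stages, strictly decreasing the number of tangled $3$-crossings at each stage, until none remain. The key topological fact to exploit is that, by $\lcc(A)\leq 3$, whenever $E'=\{e_1,e_2,e_3\}$ is a $3$-crossing of $\Gamma$, the edges in $E'$ form an entire connected component of~${\cal C}(A)$, and hence none of $e_1,e_2,e_3$ is crossed by any edge outside~$E'$. Viewing $\Gamma$ on the sphere~$S^2$, the arrangement $C_{E'}$ therefore partitions $S^2$ into exactly two regions: the inner ``triangle'' $R_1$ bounded by the three middle arcs, and its complement $R_2$. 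Every other vertex and every other edge of~$\Gamma$ lies entirely inside one of $R_1, R_2$; in particular, every other $3$-crossing $E''$ is wholly contained in one of these regions, because its three pairwise-crossing edges cannot lie in different regions without at least one of them crossing an edge of~$E'$.

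Given a tangled $E'$, let $H_1, H_2$ denote the subdrawings of $\Gamma\setminus E'$ in $R_1, R_2$, respectively. The idea is to ``pull $H_1$ out'' of $R_1$ into $R_2$ and then replace the old copy of $E'$ by an untangled one. Concretely, pick a small closed disk $D\subset R_2$ contained in a face of $H_2$ whose boundary meets~$C_{E'}$ (such a face exists because the middle arcs of $C_{E'}$ lie on the boundary of $R_2$ and $H_2$ is finite), and transplant $H_1$ into~$D$ via a homeomorphism of $S^2$ that maps $R_1$ onto $D$ and is the identity outside a small neighborhood of $R_1\cup D$. Then delete the old copy of $e_1,e_2,e_3$ and redraw them as a canonical untangled arrangement placed adjacent to $D$, routing each of the six tails through the empty part of the chosen face to reach its endpoint: three endpoints now lie in the transplanted copy of $H_1\subset D$, three in $H_2$ at their original positions, but all of them are in the outer region of the new arrangement, so $E'$ becomes untangled. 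The transplanting homeomorphism preserves the crossing pattern of $H_1$, and the redrawn $E'$ is not crossed by anything outside $E'$ by construction, so the resulting drawing $\Gamma'$ still realizes~$A$ simply.

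Every other $3$-crossing $E''$ of $\Gamma$ is either entirely contained in $H_1$, in which case it is rigidly carried by the transplanting homeomorphism and retains its arrangement (and thus its tangled/untangled status), or entirely contained in $H_2$, in which case it is untouched. Hence $\Gamma'$ has strictly fewer tangled $3$-crossings than $\Gamma$, and finitely many iterations produce the desired realization. The main technical point to verify will be that the six tails of the redrawn $E'$ can be simultaneously routed inside the chosen face without mutual crossings or crossings with $H_1\cup H_2$; this will rely on that face being a topological disk in which the transplanted copy of $H_1$ sits in a small sub-disk, so the new triangle can be placed adjacent to~$D$ and each tail can be drawn as a short simple arc with the desired cyclic order around the new crossings.
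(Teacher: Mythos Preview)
Your overall strategy---process tangled $3$-crossings one at a time, using that $E'=\{e_1,e_2,e_3\}$ is an entire connected component of $\mathcal{C}(A)$ so that nothing in $\Gamma\setminus E'$ crosses $E'$---is correct and is exactly what the paper does. But your execution diverges from the paper and, as written, contains an error and a real gap.

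First, the claim that a tangled $3$-crossing puts ``three endpoints'' in each face of $C_{E'}$ is false. An Euler count gives exactly two faces, and at every crossing $c_{ij}$ the two outgoing tails lie on the \emph{same} side of the middle cycle (the two middle arcs are adjacent in the rotation at $c_{ij}$, so both tails are in the complementary wedge). Hence in the tangled case exactly one crossing has its two tails inside and the other two crossings have theirs outside, giving a $2$--$4$ split; moreover the two inner endpoints come from two \emph{different} edges (say $v_1,v_2$), exactly as the paper states.

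Second, the ``main technical point to verify'' is the crux, and your proposed justification (``the face is a topological disk, so each tail is a short arc'') is not enough. After transplanting $H_1$ into a small disk $D$, the four outer endpoints $u_1,u_2,u_3,v_3$ sit at fixed positions on the boundary of the face $F$ of $H_2$, in a cyclic order you do not control. In an untangled $3$-crossing the six endpoints must appear in the cyclic pattern $a,b,c,a',b',c'$ with $(a,a'),(b,b'),(c,c')$ the edge-pairs; since $v_1,v_2$ are forced to be consecutive (both in the small disk $D$), this is achievable only if $u_1$ and $u_2$ are adjacent among the four on $\partial F$. That does turn out to hold---it follows from analysing the rotations at the two ``outer'' crossings---but it needs an argument. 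If the cyclic order were, say, $(u_1,u_3,u_2,v_3)$, then any drawing of $e_3$ in $F$ would separate $u_1$ from $u_2$, forcing $v_1$ and $v_2$ to lie on opposite sides of $e_3$; no placement of $D$ could then work.

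The paper sidesteps all of this with a much more surgical move: it keeps $e_2,e_3$ and \emph{both} subgraphs $G_f,G_h$ fixed and redraws only $e_1$. The new $e_1$ is assembled explicitly from four arcs that follow pieces of the old $e_1$, pieces of the old $e_2$, and a closed curve $Q$ enclosing $G_f$; one checks directly that it crosses each of $e_2,e_3$ exactly once and that the arrangement is now untangled. No transplanting, no global routing, no cyclic-order analysis.
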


\begin{proof}
The sufficiency is obvious, as any simple realization of an AT-graph in which all $3$-crossings are untangled is a specific simple realization. For the necessity,
let $A$ be an AT-graph with $\lcc(A) \leq 3$ and let 
$\Gamma_A$ be a simple realization of $A$ that contains a tangled $3$-crossing~$E'$. Since the edges in $E'$ pairwise cross and $\lcc(A) \leq 3$, the vertices of ${\cal C}(A)$ corresponding to the edges in $E'$ induce a maximal connected component of ${\cal C}(A)$ that is a 3-cycle (the edges in $E'$ do not cross any edge in $E \setminus E'$).
We show how to obtain a new simple realization $\Gamma'_A$ of $A$ that coincides with $\Gamma_A$ except for the drawing of one of the edges in $E'$ and such that $E'$ is untangled. Repeating such a transformation for each tangled $3$-crossing eventually yields the desired simple realization of $A$ with no tangled $3$-crossings.

Consider the arrangement $C_{E'}$ of $E'$ in $\Gamma_A$. Since $\Gamma_A$ is simple and $|E'| = 3$, we have that $C_{E'}$ splits the plane into two faces, one bounded and one unbounded. 
Let $e_1 = (u_1,v_1)$,~$e_2 = (u_2,v_2)$, and $e_3 = (u_3,v_3)$ be the edges in $E'$.
Since $E'$ is tangled, one of such faces, denoted by $f$, contains an endpoint of two edges of $E'$ and the other face, denoted by $h$, contains the remaining four endpoints. In the remainder, we assume that $f$ is bounded (refer to \cref{fig:3crossing}). This is not a loss of generality, as a stereographic projection can be applied to $\Gamma_A$ to satisfy this assumption. 
W.l.o.g., we also assume that the endpoints of the edges of $E'$ incident to $f$ are $v_1$ and $v_2$.
Denote by $G_f$ (resp.\ by $G_h$) the subgraph of $G$ consisting of the vertices and of the edges of $G$ that lie in the interior of $f$ (resp.\ of $h$); refer to \cref{fig:3crossing}.
Let $Q$ be a closed curve passing through $v_1$ and $v_2$ that encloses the drawing of $G_f$ in $\Gamma_A$, without intersecting any other vertex or edge; refer to \cref{fig:3crossing-untangled}.
Let us follow the drawing of the edge $e_2$ from $u_2$ to $v_2$. Suppose that, as in \cref{fig:3crossing}, at the intersection point $p$ between $e_2$ and $e_1$, we see the endpoint $u_1$ of $e_1$ to the left; the case in which $u_1$ is to the right being symmetric. 
Consider three points $a$, $b$, and $c$ along $e_1$ in $\Gamma_A$, encountered in this order going from $u_1$ to~$v_1$, such that $a$ and $b$ lie arbitrarily close to $p$, and
$c$ lies arbitrarily close to $v_1$. 
To obtain $\Gamma'_A$, replace the drawing of $e_1$ in $\Gamma_A$ with the union of four curves $\lambda_1,\lambda_3,\lambda_3,\lambda_4$ defined as follows (refer to \cref{fig:3crossing-curves}): 
\begin{itemize}
\item $\lambda_1$ coincides with the drawing of $e_1$ in $\Gamma_A$ between $u_1$ and $a$ (it only intersects $e_3$);
\item $\lambda_2$ starts at $a$, follows the drawing of $e_2$ between $p$ and $v_2$, and then follows clockwise the border of $Q$ between $v_2$ and $v_1$ until it reaches $c$ (it intersects no edge);
\item $\lambda_3$ coincides with the drawing of $e_1$ between $c$ and $b$ (it only intersects $e_2$); and
\item $\lambda_4$ starts at $b$, follows the drawing of $e_2$ between $p$ and $v_2$, and then follows clockwise the border of $Q$ between $v_2$ and $v_1$ until it reaches $v_1$ (it intersects no edge).
\end{itemize}
Since the drawing of $\Gamma'_A$ coincides with $\Gamma_A$ except for the drawing of $e_1$, and since we have redrawn $e_1$ in such a way that it only intersects (once) $e_2$ and $e_3$, we have that $\Gamma'_A$ is a simple realization of $A$. Moreover, when traversing $e_2$ from $u_2$ to $v_2$, we now have that, at the intersection point between $e_2$ and $e_1$, we see the endpoint $u_1$ of $e_1$ to the right. Therefore, all the endpoints of the edges in $E'$ lie in the same face of the arrangement they form in $\Gamma'_A$, that is, $E'$ is untangled in $\Gamma'_A$.
\end{proof}

\begin{figure}[t!]
    \centering
    \begin{subfigure}[t]{.3\textwidth}
\includegraphics[page=1,width=\textwidth]{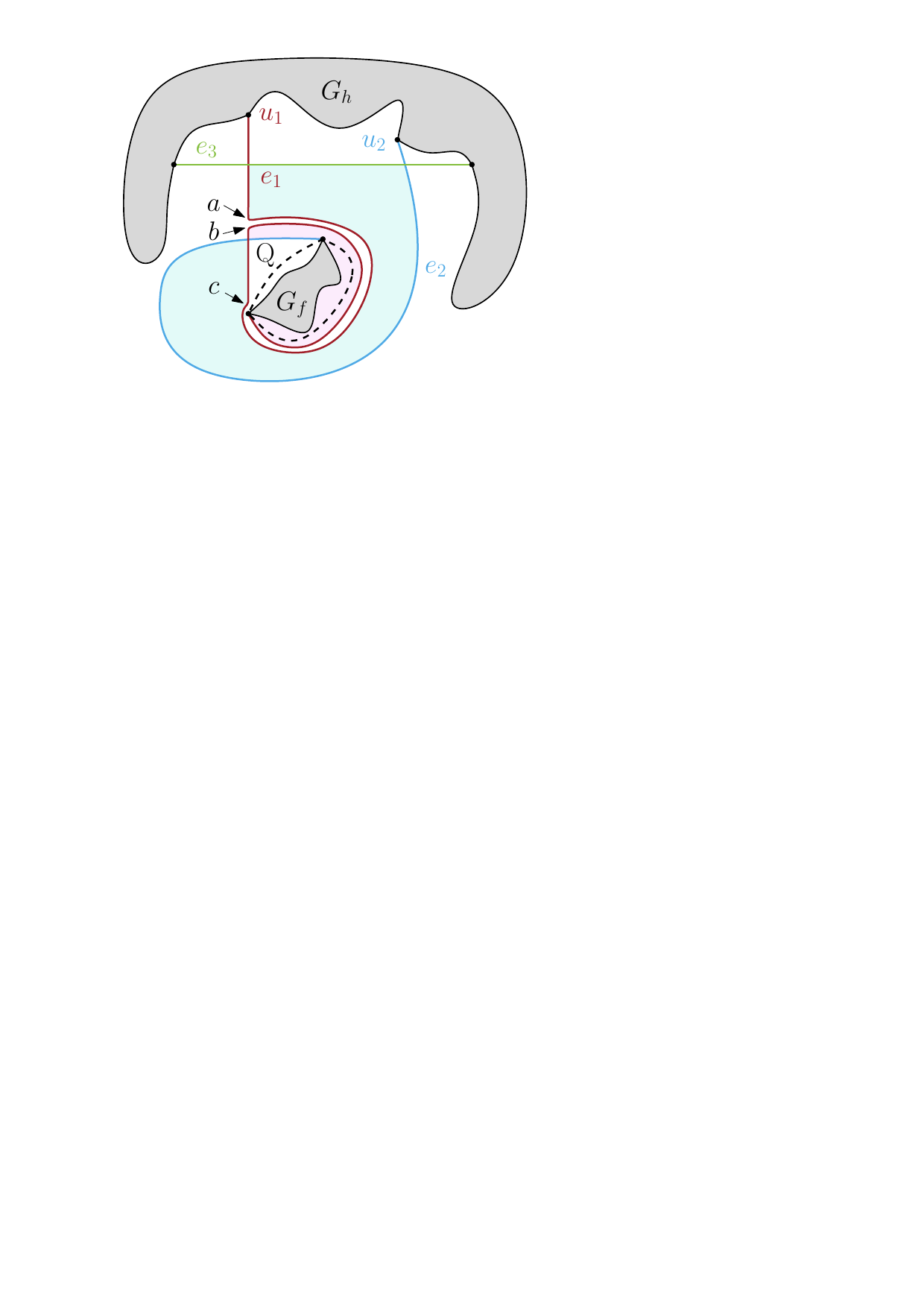}
    \caption{Realization $\Gamma_A$}
    \label{fig:3crossing-tangled}
    \end{subfigure}
    \hfil
    \begin{subfigure}[t]{.3\textwidth}
\includegraphics[page=2,width=\textwidth]{figs/3crossings.pdf}
\caption{Realization $\Gamma'_A$}
    \label{fig:3crossing-untangled}
    \end{subfigure}
    \hfil
    \begin{subfigure}[t]{.17\textwidth}
\includegraphics[page=3,width=\textwidth]{figs/3crossings.pdf}
\caption{Curves of $e_1$}
    \label{fig:3crossing-curves}
    \end{subfigure}
\caption{Illustrations for the proof of \cref{lem:untangling}. (a) A schematic representation of a simple realization $\Gamma_A$ of an AT-graph~$A$ with a tangled $3$-crossing $E' = \{e_1, e_2, e_3\}$. (b) The simple realization $\Gamma'_A$ obtained from $\Gamma_A$, where $E'$ is untangled.
(c) The curves forming $e_1$ in~$\Gamma'_A$.}
    \label{fig:3crossing}
\end{figure}

\subsection{Connectivity.}
Let $G$ be a graph. 
The graph $G$ is \emph{k-connected} if it contains at least $k$ vertices and the removal of any $k-1$ vertices leaves it connected; $2$- and $3$-connected graphs are also called \emph{biconnected} and \emph{triconnected}, respectively. 
A \emph{cut vertex} (resp. \emph{separation pair}) of $G$ is a vertex (resp. a pair of vertices) whose removal disconnects $G$. 
Therefore, the graph $G$ is biconnected (resp.\ triconnected) if it has no cut vertex (resp.\ no separation pair). 
A \emph{biconnected component} (or \emph{block}) of~$G$ is a maximal (in terms of vertices and edges) biconnected subgraph of~$G$. 
The \emph{cut components} of a cut vertex $v$ of $G$ are the subgraphs of $G$ induced by $v$ together with the maximal subsets of the vertices of $G$ that are not disconnected by the removal of $v$.

\subsection{PQ-trees.}
PQ-trees were first introduced by Booth and Lueker~\cite{booth1975pq,DBLP:journals/jcss/BoothL76}.
An \emph{unrooted PQ-tree} over a finite set~$U$ is an unrooted tree whose leaves are the elements of~$U$ and
whose internal nodes are labeled as P- or Q-nodes.
A P-node is usually depicted
as a circle, a Q-node as a rectangle; see \cref{fig:PQ-examples}.
Consider an unrooted PQ-tree~$T$ and let $L$ be the circular order of its leaves. 
Then~$T$ represents the set of those circular orders of its leaves that can be obtained from~$L$ by~$(i)$ 
arbitrarily permuting the children of arbitrarily many P-nodes and~$(ii)$ reversing the order of arbitrarily
many Q-nodes. Since all PQ-trees we consider in the paper are unrooted, we will refer to~\emph{unrooted PQ-trees} 
if we write PQ-trees.
A~\emph{synchronized} PQ-tree contains a pair of Q-nodes that have to be flipped simultaneously; see~\cref{fig:PQ-examples}\col{$(c)$}.

\begin{figure}[h!]
    \centering
    \includegraphics{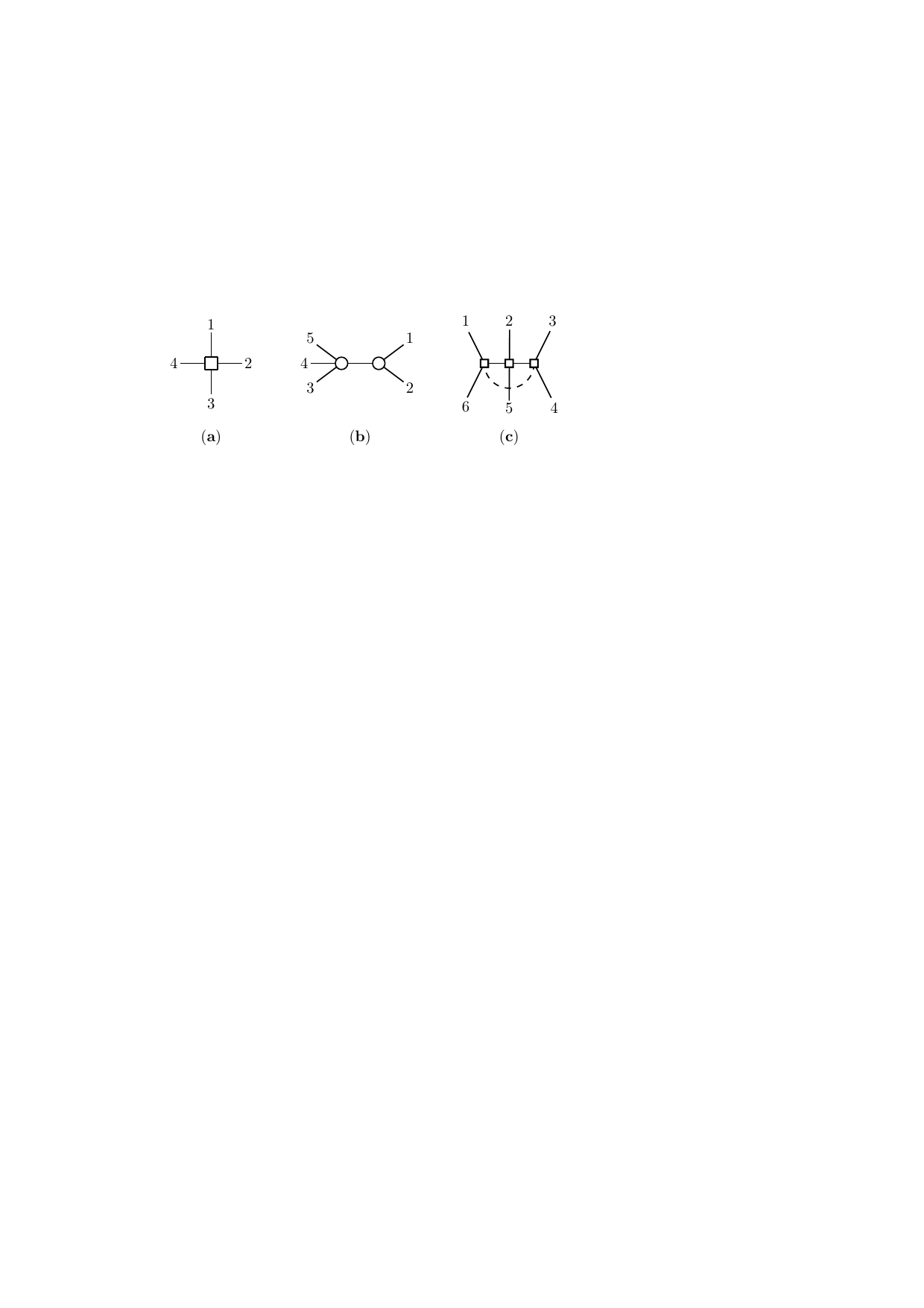}
    \caption{PQ-trees representing the circular orders $(a)$ $(1, 2, 3, 4)$ and its reverse, $(b)$ of $1, 2, 3, 4, 5$, such that $1, 2$ are consecutive, $(c)$ $(6, 1, 2, 3, 4, 5)$, $(6, 1, 5, 3, 4, 2)$, $(1, 6, 2, 4, 3, 5)$, $(1, 6, 5, 4, 3, 2)$.}
    \label{fig:PQ-examples}
\end{figure}

\subsection{SPQR-trees.}
A \emph{split pair} of a graph~$G$ is either a pair of adjacent vertices or a \emph{separation pair}, i.e., a pair of vertices whose removal disconnects~$G$.
The \emph{split components}~of~$G$ \emph{with respect to} a split pair $\{u,v\}$ are defined as follows.
If $(u,v)$ is an edge of $G$, then it is a split component of $G$ with respect to $\{u,v\}$.
Also, let $G_1,\dots,G_k$ be the connected components of $G \setminus \{u,v\}$. The subgraphs of $G$ induced by $V(G_i) \cup \{u,v\}$, minus the edge $(u,v)$, are components of $G$ with respect to $\{u,v\}$, for $i=1,\dots,k$.
The SPQR-tree~$T$ of~a biconnected graph~$H$, introduced by Di Battista and
Tamassia~\cite{DBLP:journals/algorithmica/BattistaT96,dt-opl-96}, represents a decomposition of~$H$ into triconnected components along its split pairs.
Each node~$\mu$ of~$T$ contains a graph~$\skel(\mu)$, called \emph{skeleton of~$\mu$}. The edges of~$\skel(\mu)$ are either edges of~$H$, which we call \emph{real edges}, or newly introduced edges, called \emph{virtual edges}.
The tree~$T$ is initialized to a single node~$\mu$, whose skeleton, composed only of real edges, is~$H$. 
Consider a split pair~$\{u, v\}$ of the skeleton of some node~$\mu$ of~$T$,
and let~$H_1,\dots,H_k$ be the split components of~$H$ with respect to~$\{u,v\}$ such that~$H_1$ is not a virtual edge and, if~$k=2$, also~$H_2$ is not a virtual edge.  
We introduce a node~$\nu$ adjacent to~$\mu$ whose skeleton is the graph~$H_1 + e_{\nu,\mu}$, where~$e_{\nu,\mu} = (u,v)$ is a virtual edge, and replace~$\skel(\mu)$ with the graph~$\bigcup_{i \neq 1} H_i + e_{\mu,\nu}$, where~$e_{\mu,\nu} = (u,v)$ is a virtual edge. We say that $e_{\nu,\mu}$ is the \emph{twin virtual edge} of $e_{\mu,\nu}$
and vice versa.
Applying this replacement iteratively produces a tree with more nodes but smaller skeletons associated with the nodes. Eventually, when no further replacement is possible, the skeletons of the nodes of~$T$ are of four types: parallels of at least three virtual edges ($P$-nodes), parallels of exactly one virtual edge and one real edge ($Q$-nodes), cycles ($S$-nodes), and triconnected planar graphs ($R$-nodes).  See \Cref{fig:spqr} for an example. The two vertices in the skeleton of an $P$-node~$\mu$ are called the \emph{poles} of~$\mu$.

The \emph{merge} of two adjacent nodes~$\mu$ and~$\nu$ in~$T$, replaces~$\mu$ and~$\nu$ in~$T$ with a new node~$\tau$ that is adjacent to all the neighbors of~$\mu$ and~$\nu$, and whose skeleton is~$\skel(\mu) \cup \skel(\nu) \setminus \{ e_{\mu,\nu}, e_{\nu,\mu}\})$, where the end-vertices of~$e_{\mu,\nu}$ and~$e_{\nu,\mu}$ that correspond to the same vertices of~$H$ are identified. By iteratively merging adjacent~$S$-nodes, we eventually obtain the (unique) SPQR-tree data structure as introduced by Di Battista and Tamassia~\cite{DBLP:journals/algorithmica/BattistaT96,dt-opl-96}, where the skeleton of an~$S$-node is a cycle. The crucial property of this decomposition is that a planar embedding of~$H$ uniquely induces a planar embedding of the skeletons of its nodes and that, arbitrarily and independently, choosing planar embeddings for all the skeletons uniquely determines an embedding of~$H$. Observe that the skeletons of~$S$- and~$Q$-nodes have a unique planar embedding, that the skeleton of~$R$-nodes have two planar embeddings (where one is the reverse of the other), and that~$P$-nodes have as many planar embedding as the permutations of their virtual edges.
If~$H$ has~$n$ vertices, then~$T$ has~$O(n)$ nodes and the total number of virtual edges in the skeletons of the nodes of~$T$ is in~$O(n)$.  From a computational complexity perspective,~$T$ can be constructed in~$O(n)$ time~\cite{DBLP:conf/gd/GutwengerM00}.

\begin{figure}[tb]
    \centering
    \includegraphics{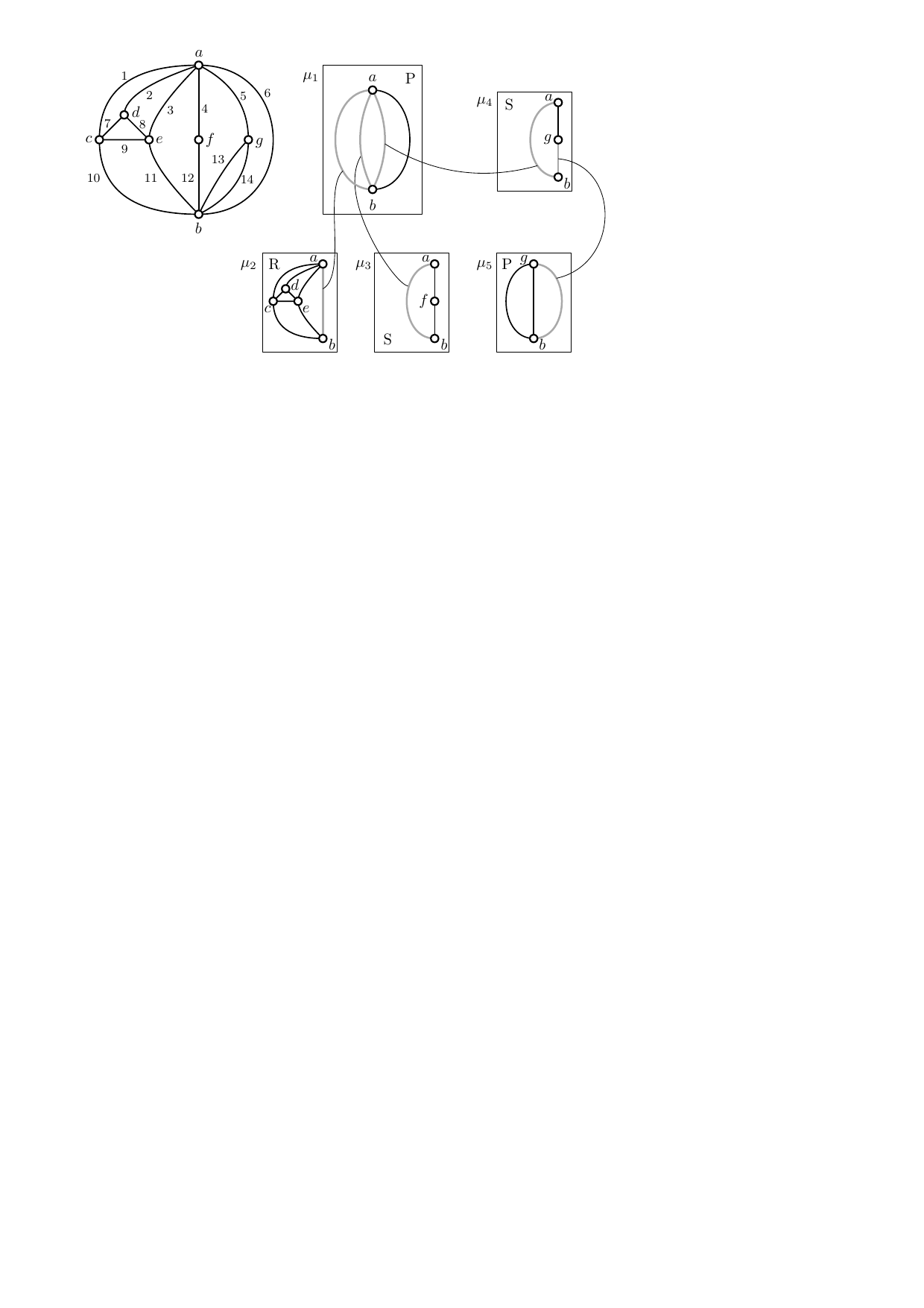}
    \caption{A graph~$H$ and its SPQR-tree decomposition (virtual edges drawn in gray), where Q-nodes are omitted in favor of allowing real edges (drawn in black) in the skeletons of the other nodes.}
    \label{fig:spqr}
\end{figure}

\section{NP-completeness for AT-Graphs with~\texorpdfstring{$\mathbf{\lcc(A) \geq 6}$}{lcc(A) >= 6}}\label{se:hardness}

In this section, we show that the \SATor problem is \NP-complete for an AT-graph~$A$ even when the largest component of the crossing graph~${\cal C}(A)$ has bounded size; specifically, when $\lcc(A) = 6$ (see \cref{th:hardness}). 
We will exploit a reduction from the \NP-complete problem \pthreesat~\cite{k-spspc-94}.

\noindent
 
Let~$\varphi$ be a Boolean formula in conjunctive normal form. The \emph{variable-clause graph}~$G_\varphi$ \emph{of}~$\varphi$ is the bipartite (multi-)graph that has a node for each variable and for each clause, and an edge between a variable-node and a clause-node if a positive or negated literal of the variable appears in the clause.
If each clause of~$\varphi$ has exactly three literals corresponding to different variables and~$G_\varphi$ is planar and triconnected, then~$\varphi$ is an instance of \pthreesat. Observe that in this case $G_\varphi$ is a simple graph. Our proof exploits several gadgets described hereafter, which are then combined to obtain the desired reduction. 

Intuitively, in the instance $A_\varphi$ of \SATor corresponding to $\varphi$, we encode truth values into the clockwise or counter-clockwise order in which some edges cross suitable cycles of the subgraphs (called ``variable gadgets'') representing the variables of $\varphi$ in $A_\varphi$. These edges connect the variable gadgets to the subgraphs (called ``clause gadgets'') representing those clauses that contain a literal of the corresponding variable. Only if at least one of the literals appearing in a clause gadget encodes a \texttt{true} value, the clause gadget admits a simple realization. We start by describing the ``skeleton'' of $A_\varphi$, that is the part of $A_\varphi$ that encloses all the gadgets and ensures that they are properly connected. Next, we describe the ``split gadget'' which, in turn, is used to construct the variable gadget. If a variable $v$ has literals in $k$ clauses, we have $k$ pairs of edges leaving the corresponding variable gadget and entering the $k$ clause gadgets. The clause gadgets always receive three truth values, corresponding to the three literals of the corresponding clause.

\smallskip

\subparagraph{Skeleton.} 
Arbitrarily choose a planar embedding~$\mathcal E_\varphi$ of~$G_\varphi$.
The \emph{skeleton}~$H_\varphi$ \emph{of}~$\varphi$ is a~$4$-regular~$3$-connected plane graph obtained from~$\mathcal E_\varphi$ as follows; refer to \cref{fig:skeleton}\shortLongAlternartive{ in \cref{apx:se:hardness}}{}. For each degree-$k$ variable~$v$ of~$\varphi$, the graph~$H_\varphi$ contains a~$k$-cycle formed by the sequence of edges~$(e_{v,c_1}, e_{v,c_2},\dots,e_{v,c_k})$, which we refer to as the \emph{variable cycle} of~$v$, where~$c_1,\dots,c_k$ are the clause-nodes of~$G_\varphi$ adjacent to~$v$ in the clockwise order in which they appear around~$v$ in~$\mathcal{E}_\varphi$.\linebreak
For each clause~$c$ of~$\varphi$, the graph~$H_\varphi$ contains a~$3$-cycle formed by the sequence of edges~$(e_{c,v_1}, e_{c,v_2},e_{c,v_3})$, which we refer to as the \emph{clause cycle} of~$c$, where~$v_1,v_2,v_3$ are the variable-nodes of~$G_\varphi$ adjacent to~$c$ in the clockwise order in which they appear around~$c$ in~$\mathcal{E}_\varphi$.\linebreak
For each edge~$(v_i,c_j)$ of~$G_\varphi$, the graph~$H_\varphi$ contains two edges, which we 
refer to as the \emph{pipe edges} of the edge~$(v_i,c_j)$,
connecting the endpoints of~$e_{v_i,c_j}$ and~$e_{c_j,v_i}$ without crossings.
We need the following lemma to show that the skeleton~$H_\varphi$ obtained from~$\mathcal{E}_\varphi$ is triconnected.

\begin{lemma}\label{lem:3-paths}
Let $G$ be a triconnected graph, and let $u$ and $v$ be two vertices of $G$. Also, let~$e_u$ and $e_v$ be two edges incident to $u$ and to $v$, respectively. Then, $G$ contains three vertex disjoint paths connecting $u$ and $v$ that use both $e_u$ and $e_v$.
\end{lemma}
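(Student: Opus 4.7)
The plan is to invoke Menger's theorem to obtain three internally vertex-disjoint $u$-$v$ paths from the $3$-connectivity of $G$, and then to modify them in two successive rerouting steps so that one of the paths begins with $e_u$ and one of them ends with $e_v$. First I will dispose of the degenerate case $e_u = e_v$, which forces $e_u = e_v = (u,v)$: since deleting a single edge drops the local vertex-connectivity between its endpoints by at most one, we have $\kappa_{G-e_u}(u,v) \geq \kappa_G(u,v) - 1 \geq 2$, so applying Menger inside $G - e_u$ yields two internally vertex-disjoint $u$-$v$ paths which, together with the edge $e_u$ itself, form the three required paths. The same observation handles the subcase in which exactly one of $e_u, e_v$ equals the direct edge $(u,v)$: this edge serves as one of the three paths, and the rerouting argument below is applied inside $G$ with this edge removed to produce the remaining two.

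For the main case $e_u \neq e_v$ with $e_u = (u,x)$, $x \neq v$, and $e_v = (y,v)$, $y \neq u$, I would start from three internally vertex-disjoint $u$-$v$ paths $P_1, P_2, P_3$ provided by Menger's theorem. The first reroute places $e_u$ as the initial edge of some path: if $x$ is already an internal vertex of some $P_i$, I truncate the prefix of $P_i$ and replace it with $e_u$ followed by the suffix of $P_i$ from $x$ to $v$; since the new set of internal vertices is a subset of the old one, disjointness from the other two paths is preserved. If $x$ lies on no $P_i$, I use the connectedness of $G - u$ (which follows from $3$-connectivity) to find a path $R$ from $x$ to some vertex $r \in \bigcup_i V(P_i) \setminus \{u\}$ whose internal vertices avoid every $P_i$, and splice $e_u$, $R$, and the suffix of the $P_i$ containing $r$ from $r$ to $v$ into a new path. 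The second reroute is the symmetric operation near $v$, using the connectedness of $G - v$ to force some path to end with $e_v$.

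The step I expect to be most delicate is verifying that the second reroute does not undo the progress of the first when both reroutes happen to modify the same path. This will be resolved by the observation that the first reroute only alters a path's prefix up to the first splice point, whereas the second only alters a path's suffix from the last splice point. If both modifications fall on the same path, the resulting path has the form $u, x, \ldots, y, v$, and its middle portion consists of vertices that either already lay on the original $P_i$ or were supplied as fresh vertices by the connectivity argument in $G - v$; in either case no newly introduced internal vertex can coincide with an internal vertex of the other two paths. Hence the final three paths remain internally vertex-disjoint and jointly contain both $e_u$ and $e_v$, as required.
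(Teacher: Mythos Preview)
Your proposal is correct and follows essentially the same two-step rerouting strategy as the paper: start from three Menger paths and splice in $e_u$ and $e_v$ one at a time, the paper merely phrasing each reroute as ``take a $u$--$v$ path through $e_u$ and walk to its first intersection with the $P_i$'' rather than invoking connectivity of $G-u$. One small point worth making explicit in your write-up is that if the second splice point $r'$ happens to equal $u$, you should attach the new suffix to $P_2$ or $P_3$ (which you may, since $u$ lies on all three paths) rather than to the path carrying $e_u$, so that the first reroute is not undone.
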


\begin{proof}
Recall that, by Menger's theorem~\cite{Menger1927}, since $G$ is triconnected, it contains three vertex-disjoint paths $P_1$, $P_2$, and $P_3$ connecting $u$ and $v$. If such paths use both $e_u$ and $e_v$, there is nothing to prove. Thus, suppose that at least one of $e_u$ and $e_v$, w.l.o.g., say $e_u$, does not appear in any of such paths. We distinguish two cases: 
{\bf Case 1}: $e_v$ belongs to one of~$P_1$,~$P_2$, and $P_3$; and
{\bf Case 2}:  $e_v$ does not belong to any of $P_1$, $P_2$, and $P_3$.

In {\bf Case 1}, assume w.l.o.g. that $e_v$ belongs to $P_1$.
Consider a path $P_u$ between $u$ and~$v$ that contains $e_u$. If $P_u$ is disjoint, except at its endpoints, from all other paths, then~$P_u$,~$P_1$, and any of $P_2$ and $P_3$ satisfy the requirements of the statement, since such paths are vertex-disjoint and since $e_u \in E(P_u)$ and $e_v \in E(P_1)$.
If $P_u$ intersects one of the other paths at an internal vertex, then let $x$ be the first vertex of $P_u$, encountered when traversing this path from $u$ to $v$, that belongs to a path $P_i$, with $i \in \{1,2,3\}$.
Let $P_x$ be the path obtained by concatenating the subpath of $P_u$ between $u$ and $x$ and the subpath of $P_i$ between $x$ and $v$. Then, 
if $i=1$, the path $P_x$ contains both $e_u$ and $e_v$, and it is vertex disjoint from $P_2$ and $P_3$. Thus, $P_x$, together with $P_2$ and $P_3$ satisfy the requirements of the statement.
Otherwise, the path $P_x$ together $P_1$ and the path $P_j$ with $j \in \{1,2,3\} \setminus \{i,1\}$ satisfy the requirements of the statement.

In {\bf Case 2}, consider a path $P_u$ between $u$ and $v$ that contains $e_u$.
We show how to modify paths $P_1$, $P_2$, and $P_3$ so that the conditions of {\bf Case 1} apply to $P_1$, $P_2$, and $P_3$ (up to renaming $u$ and $v$) and we can thus proceed as described above.
If $P_u$ is disjoint, except at its endpoints, from all other paths, then we set $P_1 = P_x$. 
If $P_u$ intersects one of the other paths at an internal vertex, we define the vertex $x$ and the path $P_x$ as above. Let $i$ be the index in~$\{1,2,3\}$ of the path $P_i$ containing $x$. Then, we set $P_i = P_x$. Since $P_x$ contains~$e_u$, {\bf Case 1} holds for the updated paths $P_1$, $P_2$, and $P_3$ (up to renaming $u$ and $v$). This concludes the proof.
\end{proof}

\begin{lemma}\label{lem:skeleton}
The skeleton~$H_\varphi$ obtained from~$\mathcal{E}_\varphi$ is triconnected.
\end{lemma}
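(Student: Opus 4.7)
The plan is to prove triconnectivity of $H_\varphi$ by exhibiting, for every pair of distinct vertices $x, y \in V(H_\varphi)$, three internally vertex-disjoint paths between them; by Menger's theorem, this yields $3$-connectivity. For each $z \in V(H_\varphi)$, let $\pi(z) \in V(G_\varphi)$ denote the variable or clause whose cycle $Z_{\pi(z)}$ in $H_\varphi$ contains $z$. Recall that each vertex of $H_\varphi$ has degree~$4$: two edges along its own zone, and two pipe edges leading to the zones of two consecutive neighbors of $\pi(z)$ in the cyclic order around $\pi(z)$ in $\mathcal{E}_\varphi$.

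I distinguish two cases. In Case~A, where $\pi(x) = \pi(y) = u$, the cycle $Z_u$ already provides two internally disjoint paths between $x$ and $y$ (its two arcs). For the third path, I take a pipe edge from $x$ into some zone $Z_c$ (with $c$ a neighbor of $u$ in $G_\varphi$), traverse a path through zones corresponding to a walk in $G_\varphi - u$ from $c$ to some neighbor $c'$ of $u$, and finally re-enter $y$ via a pipe edge from $Z_{c'}$. Such a walk exists because $G_\varphi - u$ is connected, a consequence of the triconnectivity of $G_\varphi$. In Case~B, where $\pi(x) = u \ne w = \pi(y)$, I apply \cref{lem:3-paths} to $G_\varphi$ with endpoints $u, w$, choosing the specified edges $e_u$ and $e_w$ so that their corresponding pipe edges in $H_\varphi$ are incident to $x$ and $y$, respectively. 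This yields three vertex-disjoint paths $Q_1, Q_2, Q_3$ between $u$ and $w$ in $G_\varphi$, with $Q_1$ using both $e_u$ and $e_w$. Each $Q_i$ is then lifted to a path in $H_\varphi$: $Q_1$ begins at $x$ via the pipe edge for $e_u$ and ends at $y$ via the pipe edge for $e_w$, traversing intermediate zones via pipe edges; $Q_2$ and $Q_3$ start at $x$ by following the two cycle edges of $Z_u$ in opposite cyclic directions until reaching the endpoints of the cycle edges corresponding to their first edges at $u$ in $G_\varphi$, then exit through pipe edges (and analogously at $y$).

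The main obstacle is guaranteeing that the three lifted paths in Case~B are internally vertex-disjoint within $Z_u$ and $Z_w$. Since a cycle only admits two internally disjoint paths between any two of its vertices, it is essential that one of the lifted paths (the one corresponding to $Q_1$) begin by traversing a pipe edge at $x$ rather than a cycle edge; the edge $e_u$ prescribed by \cref{lem:3-paths} ensures this, and the analogous setup at $y$ via $e_w$ handles $Z_w$. Careful bookkeeping is then needed to verify that the cycle traversals of $Q_2$ and $Q_3$ within $Z_u$ can be routed in opposite cyclic directions without overlap; in delicate configurations, one may additionally use the remaining pipe edge at $x$ (or $y$) to reroute one of $Q_2, Q_3$ directly, trading a cycle traversal for a direct pipe-edge exit into an adjacent zone, and then following a walk through $G_\varphi - u$ (or $G_\varphi - w$) to reconnect.
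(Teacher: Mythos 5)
Your proposal follows the same overall route as the paper's proof: Menger's theorem, a split into the ``same cycle'' case and the ``distinct cycles'' case, and in the latter case an invocation of \cref{lem:3-paths} followed by lifting the three paths of $G_\varphi$ to paths of $H_\varphi$ by routing arcs inside the variable/clause cycles. Case~A is essentially identical to the paper's Case~1.

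There are, however, two issues in Case~B. First, you read \cref{lem:3-paths} as producing a path $Q_1$ containing \emph{both} $e_u$ and $e_w$; the lemma only guarantees that the three paths collectively use $e_u$ and $e_w$, which may well end up in different paths (indeed the paper's own proof of \cref{lem:3-paths} constructs e.g.\ $P_u$ containing $e_u$ and $P_1$ containing $e_v$). Fortunately this does not break the lifting argument — what matters is that at $Z_u$ exactly one lifted path enters through a pipe edge at $x$ (the one containing $e_u$) and similarly at $Z_w$ — but it should be stated correctly. The second issue is a genuine gap: you never rule out $e_u = e_w$. When $u$ and $w$ are adjacent in $G_\varphi$, it can happen that the only pipe edge at $x$ you consider and the only pipe edge at $y$ you consider are the two pipe edges of the single edge $(u,w)$. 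Then the path of the three that uses $(u,w)$ is the single edge $(u,w)$ itself, and its lift must use a nontrivial arc inside $Z_u$ or inside $Z_w$ (the two pipe edges of $(u,w)$ do not meet both $x$ and $y$ directly). That forces all three lifted paths to traverse arcs of $Z_u$ (or of $Z_w$), and a cycle cannot accommodate three pairwise internally disjoint arcs meeting at one vertex. The paper explicitly blocks this by observing that $x$ (and $y$) each have two pipe edges, corresponding to two distinct incident $G_\varphi$-edges, so one may always choose $e_u$ and $e_w$ corresponding to distinct $G_\varphi$-edges; you should add this choice. Once both points are fixed, the remaining ``bookkeeping'' you flag about routing arcs of $Q_2,Q_3$ on opposite sides of $x$ does go through, because each of $Q_2,Q_3$ has two candidate exit vertices (the two endpoints of a cycle edge of $Z_u$), and one can always pick exits and arc directions so the two arcs from $x$ are internally disjoint.
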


\begin{proof}
\begin{figure}[t!]
    \centering
    \begin{subfigure}{.3\textwidth}
    \includegraphics[page=8,width=\textwidth]{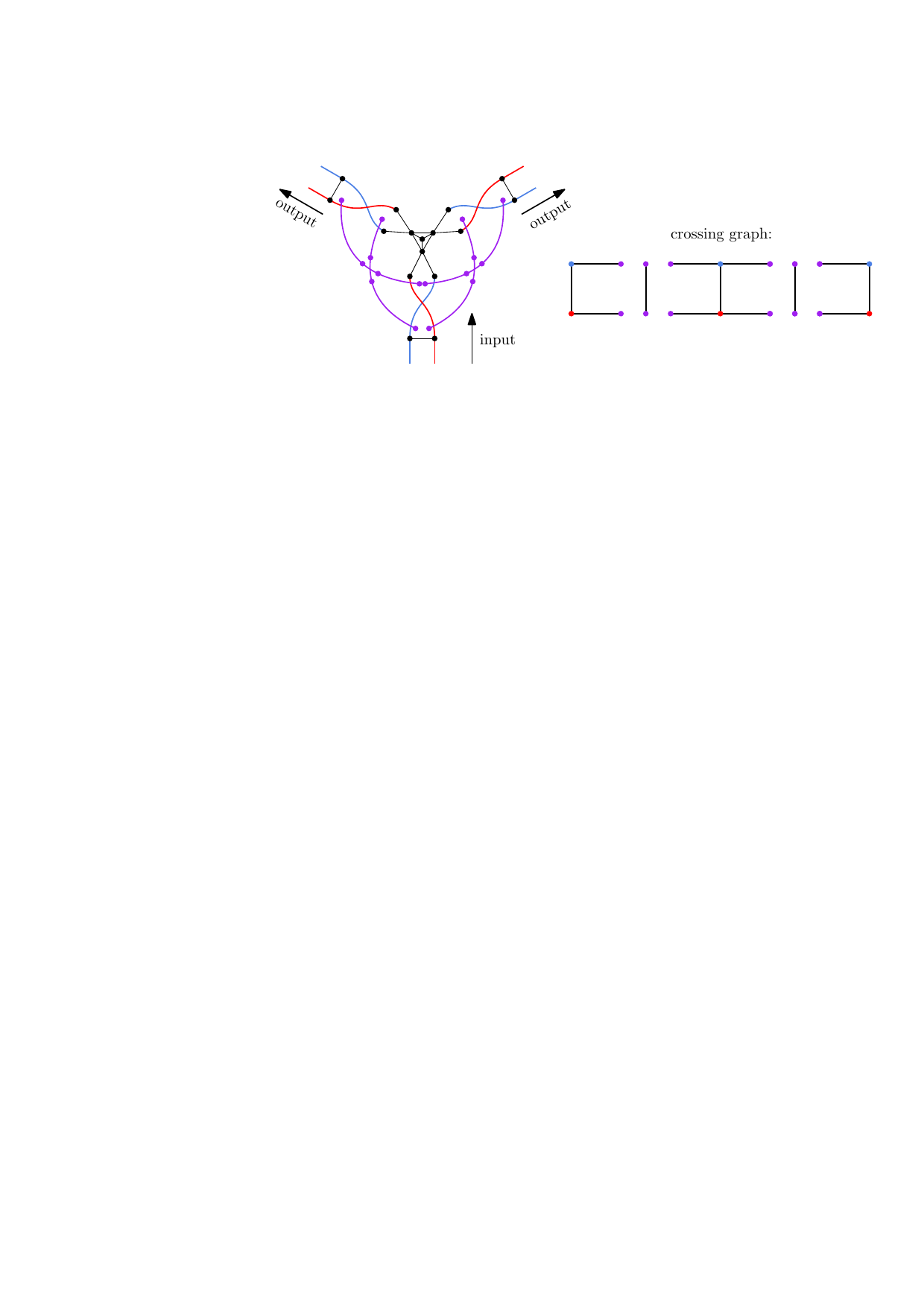}
    \end{subfigure}
    \hfill
    %$\Rightarrow$ %\hfil
    \begin{subfigure}{.6\textwidth}
    \includegraphics[page=7,width=\textwidth]{figs/splitter.pdf}
    \end{subfigure}
    \caption{(Left) A planar $3$-connected variable-clause graph~$G_\varphi$ and (Right) the corresponding skeleton~$H_\varphi$.}
    \label{fig:skeleton}
\end{figure}
Recall that, by Menger's theorem~\cite{Menger1927}, a graph is triconnected if and only if it contains three vertex-disjoint paths between any two vertices, and that, by construction, each vertex of~$H_\varphi$ belongs either to a variable cycle or to a clause cycle stemming from a vertex of~$G_\varphi$. Therefore, in order to prove the triconnectivity of $G_\varphi$, we show the existence of three vertex-disjoint paths between any two vertices $u$ and~$v$ of~$H_\varphi$. 

Consider first the case in which~$u$ and~$v$ belong to the same variable or clause cycle~$C$ of $H_\varphi$. Then there exist already two vertex-disjoint paths between~$u$ and~$v$ along~$C$. Also, since removing the vertex corresponding to~$C$ from~$G_\varphi$ leaves the resulting graph connected, we also have that removing from~$H_\varphi$ the edges of~$C$ together with the vertices of~$C$ different from $u$ and $v$ leaves the resulting sub-graph of~$H_\varphi$ connected. This implies that there exists in~$G_\varphi$ a vertex-disjoint path between~$u$ and~$v$ that avoids the edges of~$C$ and the remaining vertices of~$C$.

Consider now the case that~$u$ and~$v$ belong to distinct variable or clause cycles of $H_\varphi$, which we denote by~$C_1$ and~$C_2$, respectively. The cycles $C_1$ and $C_2$ stem from a node $n_1$ and a node $n_2$ of $G_\varphi$, respectively. Let $e_u$ (resp.\ $e_v$) be one of the two edges of $H_\varphi$ incident to $u$ (resp.\ to $v$) and not belonging to $C_1$ (resp.\ to $C_2$). Observe that $e_u$ and $e_v$ are pipe edges of some edge $e_1$ of $G_\varphi$ (incident to $n_1$) and of some edge $e_2$ of $G_\varphi$ (incident to $n_2$), respectively. By carefully choosing $e_u$ and $e_v$ we can enforce that $e_1 \neq e_2$. 
It follows that $e_u$ and $e_v$ are never the two pipe edges of the same edge of $G_\varphi$.
By \cref{lem:3-paths}, there exist three vertex-disjoint paths $P_1$, $P_2$, and $P_3$ in $G_\varphi$ between $n_1$ and $n_2$ that use $e_1$ and $e_2$. We show that there exist three vertex-disjoint paths in $H_\varphi$ between $u$ and $v$. 
Let $P_i$, $i \in \{1,2,3\}$ be one such path. For each edge $e$ of $P_i$ we select an edge $e_H$ of $H_\varphi$ such that $e_H$ is a pipe edge of $e$: If $e$ is incident to $n_1$ (resp. to $n_2$), $e_H = e_u$ (resp. $e_H = e_v$); otherwise, $e_H$ is arbitrarily chosen. 
It is possible to construct a path $p_i$ in $H_\varphi$ composed of the edges of $H_\varphi$ defined for~$P_i$, where for each pair of edges $e'_H$ and $e''_H$ originating from two edges of $G_\varphi$ incident to the same vertex $n_j$, we add the edges of any of the two paths of $H_\varphi$ connecting the endpoints of $e'_H$ and $e''_H$ along $C_j$. The construction of $p_i$ fails only in the case in which the two edges $e_u$ and $e_v$ are the two pipe edges of the same edge $(n_1,n_2)$, which is ruled out by the way we choose $e_u$ and $e_v$. We have obtained three vertex-disjoint paths $p_1$, $p_2$, and~$p_3$ in $H_\varphi$ from vertices of $C_1$ to vertices of $C_2$ that do not use edges of $C_1$ and $C_2$ and such that~$u$ and $v$ are both extremes of some of these paths. By using the edges of $C_1$ and $C_2$ it is possible to construct three vertex-disjoint paths in $H_\varphi$ between $u$ and $v$.
\end{proof}

\begin{figure}[t!]
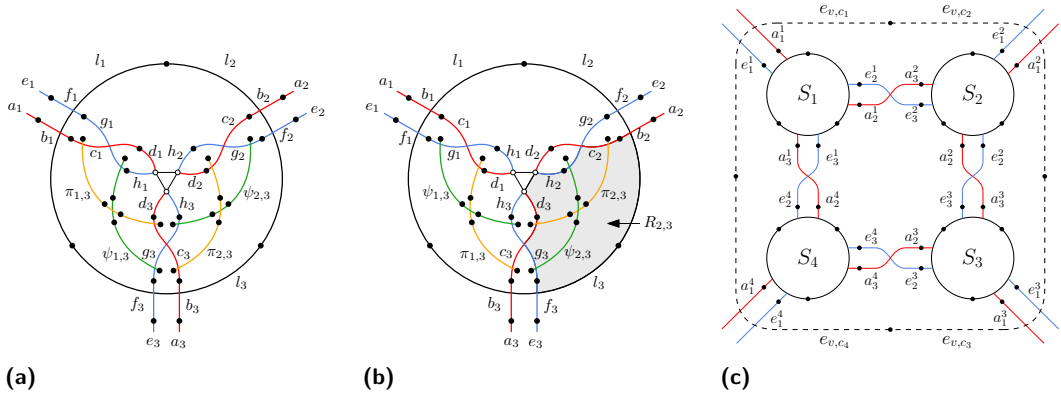

        \begin{subfigure}[t]{.3\textwidth}
\includegraphics[page=2,scale=.6]{figs/splitter.pdf}
    \caption{}
    \label{fig:splitter-a}
    \end{subfigure}
    \hfil
\begin{subfigure}[t]{.3\textwidth}
\includegraphics[page=3,scale=.6]{figs/splitter.pdf}
    \caption{}
    \label{fig:splitter-b}
\end{subfigure}
\hfil
\begin{subfigure}[t]{.3\textwidth}
\includegraphics[page=5,scale=.48]{figs/splitter.pdf}
    \caption{}
    \label{fig:variable}
    \end{subfigure}    
    \caption{(a,b) The split gadget~$S$ : The clockwise circular order of the edges leaving the gadget is either~$b_1$,~$f_1$,~$b_2$,~$f_2$,~$b_3$,~$f_3$ (a) or~$f_1$,~$b_1$,~$f_2$,~$b_2$,~$f_3$,~$b_3$  (b).
    (c) The variable gadget~$\mathcal{V}_v$. The dashed edges belong to the variable~cycle~of~$v$ in the skeleton $H_\varphi$.
    } 
\end{figure}

\subparagraph{Split gadget.} The split gadget~$S$ is the AT-graph defined as follows; refer to \cref{fig:splitter-a,fig:splitter-b}. The underlying graph of~$S$ consists of six connected components:  
(1) a~$3$-cycle formed by the edges~$l_1$, $l_2$, and $l_3$, which we call \emph{outer cycle} of~$S$; 
(2) a~$3$-cycle formed by the vertices~$v_1$, $v_2$, and $v_3$ (filled white in \cref{fig:splitter-a,fig:splitter-b}) such that, for~$i=1,2,3$, the vertex~$v_i$ is the endpoint of the two paths formed by the sequence of edges~$(a_i,b_i,c_i,d_i)$ (red paths in \cref{fig:splitter-a,fig:splitter-b}) and~$(e_i,f_i,g_i,h_i)$ (blue paths in \cref{fig:splitter-a,fig:splitter-b});
(3) four length-$3$ paths~$\pi_{1,3}$,~$\pi_{2,3}$,~$\psi_{1,3}$, and~$\psi_{2,3}$. 
We denote the first, intermediate, and last edge of a length-$3$ path~$p$ as~$p'$,~$p''$, and~$p'''$, respectively.
The crossing graph~${\cal C}(S)$ of~$S$ consists of several connected components. Next, we describe the eight \emph{non-trivial connected components} of~${\cal C}(S)$, i.e., those that are not isolated vertices, determined by the following crossings of the edges of~$S$ (see \cref{fig:components-variable}):
(i) For~$j=1,2,3$, edge~$l_j$ crosses both~$b_j$ and~$f_j$;
(ii) For~$j=1,2$, edge~$\pi''_{j,3}$ crosses~$\psi''_{j,3}$;
(iii) For~$j=1,2$, edge $c_j$ crosses~$g_j$ and $\pi'_{j,3}$, further $g_j$ crosses~$\psi'_{j,3}$;
finally (iv) edge~$c_3$ crosses~$g_3$,~$\pi'''_{1,3}$, and~$\pi'''_{2,3}$, further~$g_3$ crosses~$\psi'''_{1,3}$ and~$\psi'''_{2,3}$.

\begin{figure}[t!]
    \centering
    \begin{subfigure}{\textwidth}
    \centering
    \includegraphics[page=6,scale=.8]{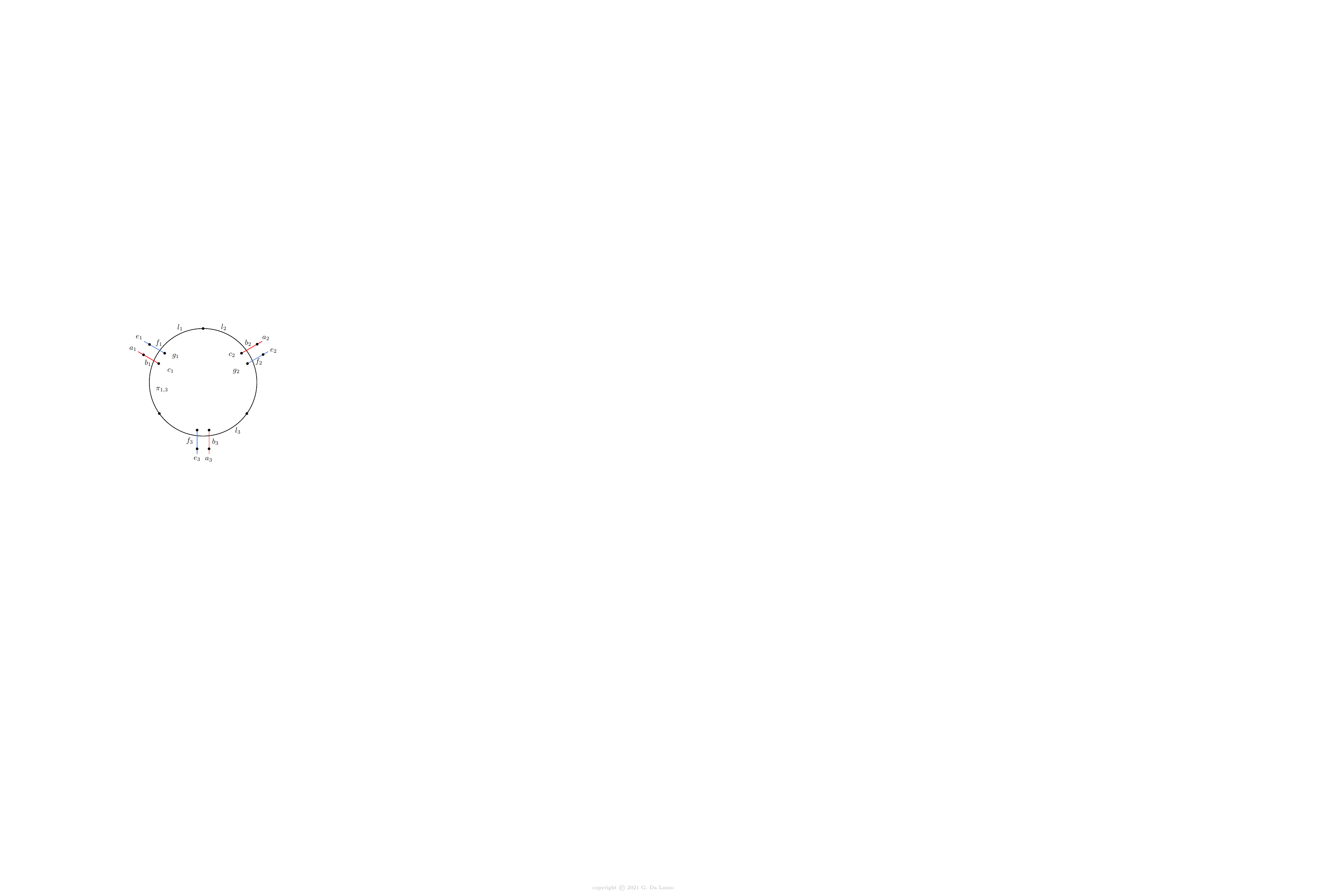}
    \caption{Non-trivial connected components of $\mathcal{C}(S)$.}
    \label{fig:components-variable}
    \end{subfigure}
    
    \begin{subfigure}{\textwidth}
    \centering
    \includegraphics[page=9,scale=.8]{figs/clause.pdf}
    \caption{The non-trivial connected component of $\mathcal{C}(\mathcal{Q}_c)$.}
    \label{fig:components-clause}
    \end{subfigure}
    
    \begin{subfigure}{\textwidth}
    \centering
    \includegraphics[page=12,scale=.8]{figs/clause.pdf}
    \caption{
    The non-trivial connected components of $\mathcal{C}(A_\varphi)$ involving an edge of the skeleton.}
    \label{fig:components-skeleton}
    \end{subfigure}
\caption{Illustration of the non-trivial connected components of (a) the crossing graph of the split gadget $S$ and (b) the crossing graph of the clause gadget $\mathcal{Q}_c$. (c) The non-trivial connected components of $\mathcal{C}(A_\varphi)$ involving an edge of the variable cycle (left) and of the clause cycle (right).
}
    \label{fig:components-crossing-graph}
\end{figure}

\begin{lemma}
\label{lem:splitter}
    In any simple realization of $S$, the circular (clockwise or counterclockwise) order of the edges crossing the outer cycle of~$S$ is either~$b_1$,~$f_1$,~$b_2$,~$f_2$,~$b_3$,~$f_3$ or~$f_1$,~$b_1$,~$f_2$,~$b_2$,~$f_3$,~$b_3$.
\end{lemma}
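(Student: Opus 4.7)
My plan is to analyze an arbitrary simple realization $\Gamma$ of $S$ and reduce the claim to a local consistency statement that is forced by the auxiliary length-$3$ paths $\pi_{j,3}$ and $\psi_{j,3}$. First, since no edge of $S$ is listed as crossing any $l_j$ except for $b_j$ and $f_j$, and the edges $l_1, l_2, l_3$ share endpoints pairwise, the outer cycle is drawn as a simple closed Jordan curve bounding two disks in $\Gamma$. Because the inner $3$-cycle on $\{v_1, v_2, v_3\}$ has no crossings with the outer cycle, it lies entirely in one of these disks; after a stereographic projection if necessary, assume it lies inside. For each $j$, the edge $l_j$ is crossed exactly by $b_j$ and $f_j$, hence on $l_j$ these two crossings appear in one of two possible orders. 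Let $\sigma_j \in \{+,-\}$ record this order (say, $+$ if $b_j$ precedes $f_j$ when traversing the outer triangle clockwise). The two cyclic orders listed in the lemma correspond precisely to $\sigma_1=\sigma_2=\sigma_3 = +$ and $\sigma_1=\sigma_2=\sigma_3 = -$, so the claim is equivalent to the statement that $\sigma_1 = \sigma_2 = \sigma_3$.

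The gadget has an obvious symmetry under swapping the indices $1$ and $2$, so it suffices to establish $\sigma_1 = \sigma_3$ and $\sigma_2 = \sigma_3$; by symmetry I focus on $j = 1$. The red and blue paths from $v_1$ exit the outer triangle through $l_1$ via $b_1$ and $f_1$, and then (because $c_1$ crosses $g_1$) cross one another exactly once outside the triangle. Since $d_1$ and $h_1$ have free endpoints and no further crossings with the edges under consideration, this local arrangement is determined up to reflection by $\sigma_1$: it pins down the rotation of the red vs.\ blue exit arcs just outside $l_1$. The analogous statement holds at $v_3$, with $\sigma_3$ governing the local arrangement of $c_3, g_3, d_3, h_3$ in a neighborhood of $l_3$.

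The key synchronization step is to use $\pi_{1,3}$ and $\psi_{1,3}$ to tie $\sigma_1$ to $\sigma_3$. The path $\pi_{1,3}$ begins at a free endpoint, crosses $c_1$, then crosses $\psi_{1,3}$ exactly once in its middle edge, then crosses $c_3$, then terminates at another free endpoint; $\psi_{1,3}$ is analogous, with $g_1, g_3$ in place of $c_1, c_3$. Concatenating these two paths with short arcs along $c_1, g_1$ near $v_1$ and along $c_3, g_3$ near $v_3$, and resolving only the single local crossing between $\pi''_{1,3}$ and $\psi''_{1,3}$, produces a closed curve in the plane whose self-intersection pattern is fully prescribed by the listed crossings. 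A Jordan-curve / parity argument then shows that this hypothetical closed curve can be traced without any extra crossing between $\pi_{1,3}$ and $\psi_{1,3}$ if and only if $\sigma_1 = \sigma_3$; the opposite alignment would force an additional intersection, contradicting the crossing graph.

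I expect the main obstacle to be making this topological synchronization step fully rigorous, since the configuration of curves outside the outer triangle is fairly rich (the red and blue portions of the paths from $v_1$ and $v_3$, the paths $\pi_{1,3}, \psi_{1,3}$, and all of their prescribed crossings must be simultaneously tracked). A clean route is probably to pass to the planarization of the relevant subgraph (replacing each crossing by a degree-$4$ vertex), verify it is planar, and then read off the rotation-system consequences around the $\pi''_{1,3} \times \psi''_{1,3}$ vertex and around the two triangle-side vertices on $l_1$ and $l_3$ to conclude that $\sigma_1 = \sigma_3$ must hold.
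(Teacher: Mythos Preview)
Your reduction to showing $\sigma_1=\sigma_3$ and $\sigma_2=\sigma_3$ (with the two target cyclic orders corresponding to $\sigma_1=\sigma_2=\sigma_3$) is correct and is exactly how the paper frames the argument. However, the remainder of your proposal has both a factual misreading of the gadget and a genuine gap at the decisive step.

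First, the location of the relevant crossings is not where you place it. Since $v_i$ is the endpoint of $d_i$ and $h_i$, and neither $c_i,d_i,g_i,h_i$ crosses any $l_j$, the edges $c_i,g_i$ (and hence the crossing $c_i\times g_i$) lie \emph{inside} the outer cycle, not outside. Likewise the paths $\pi_{j,3},\psi_{j,3}$ live entirely inside. So the picture you sketch, with the red and blue paths crossing ``just outside $l_1$'' and the $\pi,\psi$ paths threading between neighborhoods of $l_1$ and $l_3$ from the exterior, is not the actual configuration.

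Second, and more importantly, the ``Jordan-curve / parity argument'' you outline is exactly the crux of the lemma, and you have not supplied it. The paper's proof does the work you defer: it first shows that the inner triangle $\Delta=v_1v_2v_3$ bounds a face of the drawing with the $\pi,\psi$ paths removed and that its orientation agrees with that of the outer cycle; this carves the annulus between $\Delta$ and the outer cycle into three regions $R_{1,2},R_{2,3},R_{1,3}$. Assuming, say, $\sigma_2\neq\sigma_3$, the paper then reads off the clockwise order $g_3,c_3,d_3,(v_3,v_2),d_2,c_2,g_2$ along $\partial R_{2,3}$. But $\pi''_{2,3}\times\psi''_{2,3}$ forces the four terminal edges $\pi'_{2,3},\psi'_{2,3},\pi'''_{2,3},\psi'''_{2,3}$ to hit $\partial R_{2,3}$ in an alternating pattern, which is incompatible with the prescribed crossings $\pi'_{2,3}\times c_2$, $\psi'_{2,3}\times g_2$, $\pi'''_{2,3}\times c_3$, $\psi'''_{2,3}\times g_3$ given that boundary order. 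Your proposal to pass to the planarization and ``read off rotation-system consequences'' would, if carried out, amount to re-deriving exactly this region-boundary argument; as written it is not yet a proof.
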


\begin{proof}
Let~$C$ be the outer cycle of~$S$, which consists of the edges~$l_1, l_2,$ and~$l_3$. Consider a simple realization~$\Gamma$ of~$S$, such that ~$l_1$,~$l_2$, and~$l_3$ are encountered in this order when clockwise traversing~$C$ (refer to \cref{fig:splitter-a,fig:splitter-b}); the case in which these edges are encountered in the reverse order in~$\Gamma$ being symmetric.
    Recall that~$v_i$ is the vertex shared by the edges~$h_i$ and~$d_i$, for~$i=1,2,3$, and let~$\Delta$ be the triangle induced by~$v_1$,~$v_2$, and~$v_3$ in the underlying graph of~$S$. Observe that, the edges of~$\Delta$ are not involved in any crossings. Therefore, in~$\Gamma$, the triangle~$\Delta$ must lie either in the interior or in the exterior of~$C$. W.l.o.g., we assume that~$\Delta$ lies in the interior of~$C$, the case in which~$\Delta$ lies in the exterior of~$C$ being symmetric.
    It follows that, for~$i=1,2,3$, the two paths composed of the edges~$h_i, g_i$ and the edges~$d_i,c_i$, respectively, lie in the interior of~$C$. This is due to the fact that these paths are incident to~$v_i$ and that their edges do not cross the edges of~$C$. This, in turn, implies that the paths~$\pi_{i,3}$ and~$\psi_{i,3}$, with~$i \in \{1,2\}$, also lie in the interior of~$C$.

    Consider the drawing~$\Gamma'$ obtained by removing from~$\Gamma$ the paths~$\pi_{i,3}$ and~$\psi_{i,3}$, with~$i \in \{1,2\}$. 
    We show two properties of $\Gamma'$: (a) $\Delta$ bounds an internal face of~$\Gamma'$ and (b) $v_1$, $v_2$, and $v_3$ appear in this order when clockwise traversing~$\Delta$.
    To prove Property~(a), recall that $\Delta$ lies inside $C$ by hypothesis and that its edges cannot be crossed. Moreover, since the edges $b_i$ and $f_i$ must cross the edge $l_i$, for $i = 1,2,3$, the paths $(b_i,c_i,d_i)$ and~$(f_i,g_i,h_i)$ must be drawn in $\Gamma'$ in the exterior of $\Delta$. Therefore, $\Delta$ bounds an internal face of~$\Gamma'$.
    To prove Property~(b), recall that $l_1$,~$l_2$, and~$l_3$ are encountered in this order when clockwise traversing~$C$ by hypothesis and that the edge $b_i$ crosses the edge $l_i$, for $i= 1,2,3$.
    Therefore, since the paths $b_i,c_i,d_i$ and $b_j,c_j,d_j$, with $i \neq j$, do not cross each other, we have that $v_1$, $v_2$, and $v_3$ must appear in this order when clockwise traversing~$\Delta$. 
    Properties (a) and (b) allow us to define the following regions.
    Denote by~$R_{1,2}$ the bounded region of the plane incident to the edge~$(v_1,v_2)$ and to the vertex shared by~$l_1$ and~$l_2$.  Analogously, denote by~$R_{2,3}$ the bounded region of the plane incident to the edge~$(v_2,v_3)$ and to the vertex shared by~$l_2$ and~$l_3$ (see the gray region in \cref{fig:splitter-a,fig:splitter-b}), and denote by~$R_{1,3}$ the bounded region of the plane incident to the edge~$(v_3,v_1)$ and to the vertex shared by~$l_3$ and~$l_1$. 
    
    To prove the statement it suffices to show that when clockwise traversing~$C$ the following sequences of edges may not occur
    (i)~$b_2$,~$f_2$,~$f_3$, and~$b_3$;
    (ii)~$f_2$,~$b_2$,~$b_3$, and~$f_3$;
    (iii)~$b_1$,~$f_1$,~$f_3$, and~$b_3$;
    (iv)~$f_1$,~$b_1$,~$b_3$, and~$f_3$.
    Suppose, for a contradiction, that case (i) occurs; the other cases being analogous. Since~$\Gamma$ (and thus~$\Gamma'$) is simple and since~$c_i$ crosses~$g_i$, for~$i=2,3$, we have that the edges~$g_3, c_3, d_3$,~$(v_3,v_2)$,~$d_2, c_2, g_2$ are encountered in this order when clockwise traversing the boundary of~$R_{2,3}$. Consider now the two paths~$\pi_{2,3}$ and~$\psi_{2,3}$ in~$\Gamma$. Since the edges~$\pi''_{2,3}$ and~$\psi''_{2,3}$ cross in~$\Gamma$, we have that  the endpoints of~$\pi_{2,3}$ alternate with the endpoints of~$\psi_{2,3}$ in an Eulerian clockwise tour of the drawing of~$\pi_{2,3}  \cup \psi_{2,3}$ in~$\Gamma$. This, in particular, implies that when clockwise traversing the boundary of~$R_{2,3}$ either the
    edges~$\pi'_{2,3}$,~$\psi'_{2,3}$,~$\pi'''_{2,3}$, and~$\psi'''_{2,3}$ are encountered in this order or in the order~$\psi'_{2,3}$,~$\pi'_{2,3}$,~$\psi'''_{2,3}$, and~$\pi'''_{2,3}$.
    Recall that, in~$\Gamma$, the edge~$\pi'_{2,3}$, $\psi'_{2,3}$, $\pi'''_{2,3}$, and~$\psi'''_{2,3}$ must cross the edge~$c_2$, $g_2$, $c_3$, and~$g_3$, respectively. However, this is impossible due to the fact that~$g_3$,~$c_3$,~$c_2$, and~$g_2$ appear in this order when clockwise traversing~$R_{2,3}$.
\end{proof}

\subparagraph{Variable gadget.}
For each variable~$v$ of degree~$k$ in~$\varphi$ and incident to clauses~$c_1, c_2, \dots, c_k$ in~$G_\varphi$, the \emph{variable gadget}~$\mathcal{V}_v$ is an AT-graph defined as follows; refer to \cref{fig:variable}. Assume, w.l.o.g., that~$c_1, c_2, \dots, c_k$ appear in this clockwise order around~$v$ in~$\mathcal E_\varphi$. 
The underlying graph of~$\mathcal{V}_v$ is composed of~$k$ split gadgets~$S_1, S_2, \dots, S_k$. For each split gadget~$S_i$, with~$i = 1, \dots, k$, rename the edges~$a_j$ and~$e_j$ of~$S_i$ as~$a^i_j$ and~$e^i_j$, respectively, with~$j \in \{1,2,3\}$. For~$i=1,\dots,k$, we identify the edges~$a^i_j$ and~$a^{i+1}_{j+1}$ and the edges~$e^i_j$ and~$e^{i+1}_{j+1}$, where~$k+1=1$. 
The crossing graph~${\cal C}(\mathcal{V}_v)$ of~$\mathcal{V}_v$ consists of all vertices and edges of the crossing graphs of~$S_i$, with~$i=1, \dots, k$. Moreover, for $i=1,\dots,k$, it contains a non-trivial connected component consisting of the single edge~$(a^i_2,e^i_2)$ (which coincides with~$(a^{i+1}_3,e^{i+1}_3)$,~$i+1 = 1$ when~$i=k$).

\begin{lemma}
 \label{lem:variable}
    In any simple realization of~$\mathcal{V}_v$ together with the variable cycle of $v$ in which both~$a^i_1$ and~$e^i_1$ cross~$e_{v,c_i}$, for~$i=1,\dots,k$, the clockwise circular order of the edges crossing the variable cycle %~$(e_{v,c_1},e_{v,c_2}, \dots, e_{v,c_k})$ 
    of~$v$ in $\mathcal{V}_v$ is either~$a^1_1$,~$e^1_1$,~$a^2_1$,~$e^2_1$,\dots,~$a^k_1$,~$e^k_1$ or~$e^1_1$,~$a^1_1$,~$e^2_1$,~$a^2_1$,\dots,~$e^k_1$,~$a^k_1$.
\end{lemma}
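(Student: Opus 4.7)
The plan is to combine Lemma~\ref{lem:splitter}, applied independently to each split gadget $S_i$, with the identifications between consecutive split gadgets, thereby propagating the local alternation guaranteed by Lemma~\ref{lem:splitter} around the whole ring of gadgets.

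First, I would reduce the statement to a within-pair orientation question. Since the variable cycle of~$v$ is, by construction, the $k$-cycle formed by the edges $e_{v,c_1}, e_{v,c_2}, \dots, e_{v,c_k}$ in this cyclic order, and the hypothesis specifies that only $a^i_1$ and $e^i_1$ cross $e_{v,c_i}$, the pairs $\{a^i_1, e^i_1\}$ appear around the variable cycle exactly in the cyclic order $1, 2, \dots, k$ (the choice of clockwise versus counterclockwise corresponds to the two cases in the statement). It remains to argue that, within each pair, the relative order of $a^i_1$ and $e^i_1$ is the same for every~$i$.

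Second, I would apply Lemma~\ref{lem:splitter} to each $S_i$, obtaining a fixed alternating cyclic order of $b^i_1, f^i_1, b^i_2, f^i_2, b^i_3, f^i_3$ (or its swap) around the outer cycle of $S_i$. Because the path $(a^i_j, b^i_j, c^i_j, d^i_j)$ leaves the outer cycle via the crossing of $b^i_j$ with $l^i_j$ and terminates at the free endpoint of $a^i_j$, and analogously for the blue paths and $e^i_j$, this alternation transfers to an alternating cyclic order of the free endpoints of $a^i_1, e^i_1, a^i_2, e^i_2, a^i_3, e^i_3$ around $S_i$. In other words, each $S_i$ is in one of exactly two ``types'' (call them \emph{$a$-first} and \emph{$e$-first}) that describe the within-pair orientation common to its three pairs.

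Third, I would propagate the type across adjacent split gadgets. Because the identifications $a^i_2 \equiv a^{i+1}_3$ and $e^i_2 \equiv e^{i+1}_3$ hold, the pair at index $j=2$ of $S_i$ and the pair at index $j=3$ of $S_{i+1}$ are literally the same pair of edges, sharing the same free endpoint. Hence the within-pair orientation chosen in $S_i$ at this shared pair must match the within-pair orientation chosen in $S_{i+1}$ at the same shared pair, coupling the type of $S_i$ with the type of $S_{i+1}$. Iterating cyclically around the ring, all $S_1, \dots, S_k$ must share a common type, which forces a common within-pair orientation on the free pairs $\{a^i_1, e^i_1\}$, yielding one of the two cyclic orders in the statement.

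The main obstacle will lie in the third step: arguing rigorously that, across the identified pair, the ``same orientation'' in $S_i$ and $S_{i+1}$ really correspond to each other in the plane embedding rather than being swapped by a local reflection. This is a topological bookkeeping argument about how the two outer cycles of $S_i$ and $S_{i+1}$ are arranged relative to the shared edges and to the variable cycle, and is what transforms the independent local constraints from Lemma~\ref{lem:splitter} into a single global alternation.
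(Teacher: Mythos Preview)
Your overall architecture matches the paper's: apply the splitter lemma to each $S_i$, then propagate the ``type'' across adjacent gadgets via the shared pair at indices $j=2$ and $j=3$. But your third step has a genuine gap, not just bookkeeping. The identifications $a^i_2 \equiv a^{i+1}_3$ and $e^i_2 \equiv e^{i+1}_3$ by themselves do \emph{not} couple the types of $S_i$ and $S_{i+1}$: the orders given by Lemma~\ref{lem:splitter} are orders around two different outer cycles, and nothing you have invoked prevents $S_{i+1}$ from sitting in a reflected position relative to $S_i$, reversing the within-pair orientation at the shared pair. The piece of data you are missing is that $\mathcal{C}(\mathcal{V}_v)$ contains, for each $i$, the extra crossing $(a^i_2, e^i_2)$ (equivalently $(a^{i+1}_3, e^{i+1}_3)$), which is \emph{not} part of $\mathcal{C}(S_i)$ or $\mathcal{C}(S_{i+1})$ but is added precisely in the definition of $\mathcal{V}_v$. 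The paper's proof uses exactly this crossing: because $a^i_2$ and $e^i_2$ must cross, the four ``free'' endpoints of $S_i \cup S_{i+1}$ on the outer face of this union can occur in only two of the four a~priori possible clockwise orders, namely $e^i_1, a^i_1, e^{i+1}_1, a^{i+1}_1$ or $a^i_1, e^i_1, a^{i+1}_1, e^{i+1}_1$. Without that crossing your coupling argument cannot be completed.

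A second, smaller omission: the paper first argues that $\mathcal{V}_v^- := \mathcal{V}_v \setminus \bigcup_i \{a^i_1, e^i_1\}$ lies entirely on one side of the variable cycle $C$, via a chain of adjacent-or-crossing edges none of which cross $C$. You need something like this to convert the order around the outer cycles of the $S_i$ (which is what Lemma~\ref{lem:splitter} delivers) into an order around $C$ (which is what the statement asserts). Your first step asserts the pairs are correctly grouped along $C$, which is fine, but your second step's transfer from the $b^i_j, f^i_j$ order to the $a^i_j, e^i_j$ endpoints implicitly relies on knowing which side of the outer cycle those endpoints lie on.
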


\begin{proof}
Let~$C$ be the variable cycle of~$v$ in $H_\varphi$, which consists of the edges~$e_{v,c_i}$, for~$i=1,\dots,k$. Consider a simple realization~$\Gamma$ of~$\mathcal{V}_v \cup C$ in which both~$a^i_1$ and~$e^i_1$ cross~$e_{v,c_i}$, for~$i=1,\dots,k$, and the edges~$(e_{v,c_1},e_{v,c_2}, \dots, e_{v,c_k})$ appear in this order when clockwise traversing~$C$ (refer to \cref{fig:variable}); the case in which these edges are encountered in the reverse order in~$\Gamma$ being symmetric. 

Consider the graph~$\mathcal{V}^-_v: = \mathcal{V}_v \setminus \bigcup^k_{i=1} \{a^i_1, e^i_1\}$.
We have that, for any two edges~$e$ and~$e'$ of~$\mathcal{V}^-_v$, there exists a chain~$r_1,\dots,r_h$ of edges of~$\mathcal{V}^-_v$, such that~$r_1=e$,~$r_h=e'$, and, for~$i=1,\dots,h-1$, it holds that either~$r_i$ and~$r_{i+1}$ are adjacent or~$r_i$ and~$r_{i+1}$ cross. This property and the fact that no edge of~$\mathcal{V}^-_v$ crosses an edge of~$C$ implies that~$\mathcal{V}^-_v$ lies either entirely in the interior or entirely in the exterior of~$C$ in~$\Gamma$. W.l.o.g, we assume that~$\mathcal{V}^-_v$ lies in the interior of~$C$, the case in which it lies in the exterior of~$C$ being symmetric.
To prove the statement, it suffices to show that, for~$i=1,\dots,k-1$, either~$e^i_1$,~$a^i_1$,~$e^{i+1}_1$,~$a^{i+1}_1$ appear in this clockwise order around~$C$ or such edges appear the clockwise order~$a^i_1$,~$e^i_1$,~$a^{i+1}_1$,~$e^{i+1}_1$ around~$C$.

Consider a pair of consecutive split gadget~$S_i$ and~$S_{i+1}$, for~$i=1,\dots, k-1$, and let~$\Gamma'$ be the drawing of the graph~$\mathcal{N}:= S_i \cup S_{i+1}$ 
in~$\Gamma$.
By \cref{lem:splitter}, for~$i=1,\dots,k$, the edges~$e^i_1$,~$a^i_1$,~$e^i_2$,~$a^i_2$,~$e^i_3$, and~$a^i_3$ appear in this clockwise order or in the clockwise order~$a^i_1$,~$e^i_1$,~$a^i_2$,~$e^i_2$,~$a^i_3$, and~$e^i_3$  around the outer cycle of~$S_i$.
This leaves us with four possible combinations of orders for the edges exiting~$\mathcal{N}$. However,
since~$e^i_2=e^{i+1}_3$ crosses~$a^i_2=a^{i+1}_3$, we have that in a clockwise visit of the external face of~$\Gamma'$ only two combinations are feasible; refer to \cref{fig:variable}. Namely, we either encounter the endpoints of~$e^i_1$,~$a^i_1$,~$e^{i+1}_1$, and~$a^{i+1}_1$ in this order or the endpoints of~$a^i_1$,~$e^i_1$,~$a^{i+1}_1$, and~$e^{i+1}_1$ in this order. This concludes the proof.
\end{proof}

In the proof of \cref{th:hardness}, the two circular orders for the edges $\bigcup^k_{i=1} \{a^i_1,e^i_1\}$ of $\mathcal{V}_v$ considered in \cref{lem:variable} will correspond to the two possible truth assignment of the variable~$v$.

\subparagraph{Clause gadget.}
For each clause $c$ in $\varphi$, the \emph{clause gadget}~$\mathcal{Q}_c$ is the AT-graph, whose construction is inspired by a similar gadget used in~\cite{Kratochvil1989}, defined as follows; see \cref{fig:clause}. The underlying graph of~$\mathcal{Q}_c$ consists of six length-$3$ paths: For~$v \in \{x,y,z\}$, we have a path formed by the edges~$(a_v,b_v,c_v)$ (red paths in \cref{fig:clause}) and a path formed by the edges~$(e_v,f_v,g_v)$ (blue paths in \cref{fig:clause}).
The crossing graph~${\cal C}(\mathcal{Q}_c)$ of~$\mathcal{Q}_c$ consists of one non-trivial connected component (see \cref{fig:components-clause})
formed by the triangles~$(c_x,c_y,c_z)$ and~$(g_x,g_y,g_z)$, and the edges~$(c_x,g_z)$,~$(c_y,g_x)$, and~$(c_z,g_y)$.

%\begin{figure}[tb!]
%\includegraphics[page=10,width=\textwidth]{figs/clause.pdf}
%    \caption{Illustrations for the existence of simple realizations of the clause gadget $\mathcal{Q}_c$ together with the clause cycle of $c$ (dashed edges) when for at least one pair~$f_v, b_v$, with~$v \in \{x,y,z\}$, we have that~$b_v$ precedes~$f_v$ {along~$e_{c,v}$,} while traversing the clause cycle of~$c$~clockwise. }
%    \label{fig:clause-mini}
%\end{figure}

\begin{figure}[tb!]
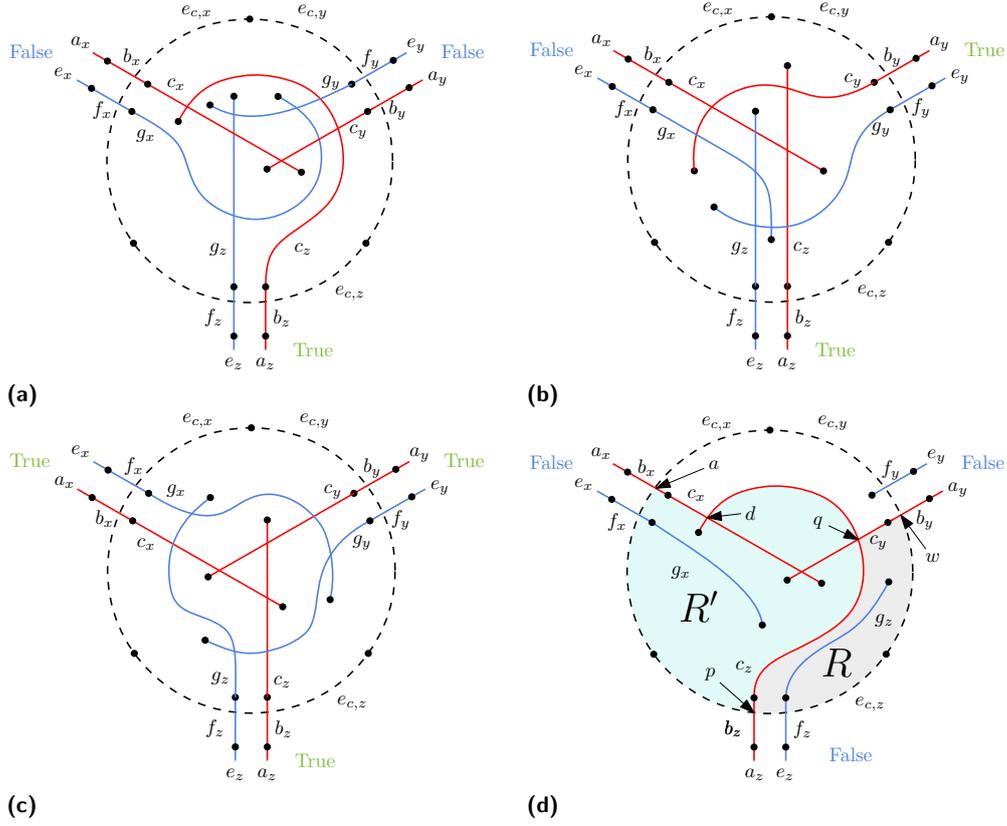

    \centering
        \centering
    \begin{subfigure}{.45\textwidth}
\includegraphics[page=3,width=\textwidth]{figs/clause.pdf}
\vspace{-5mm}
    \caption{}
    \label{fig:clause-true-1}
    \end{subfigure}
    \hfil
    \begin{subfigure}{.45\textwidth}
\includegraphics[page=4,width=\textwidth]{figs/clause.pdf}
\vspace{-5mm}
\caption{}
    \label{fig:clause-true-2}
    \end{subfigure}
\\
    \begin{subfigure}{.45\textwidth}
\includegraphics[page=5,width=\textwidth]{figs/clause.pdf}
\vspace{-5mm}
\caption{}
    \label{fig:clause-true-3}
    \end{subfigure}
\hfil
    \begin{subfigure}{.45\textwidth}
\includegraphics[page=7,width=\textwidth]{figs/clause.pdf}
\vspace{-5mm}
\caption{}
    \label{fig:clause-false}
    \end{subfigure}
    \caption{The clause gadget. Illustrations corresponding to a satisfied clause (a-c). Partial illustration of an unsatisfied clause (d).}
    \label{fig:clause}
\end{figure}

\begin{lemma}
\label{lem:clause-necessity}
    The clause gadget~$\mathcal{Q}_c$ admits a simple realization together with the clause cycle of $c$ in which, for~$v \in \{x,y,z\}$, both~$b_v$ and~$f_v$ {cross~$e_{c,v}$,} and in which the {edges~$e_{c,x}$,~$e_{c,y}$, and~$e_{c,z}$} appear in this order when traversing clockwise the clause cycle of~$c$ if and only if for at least one pair~$f_v, b_v$, with~$v \in \{x,y,z\}$, we have that~$b_v$ precedes~$f_v$ {along~$e_{c,v}$} \mbox{when traversing the clause cycle of~$c$~clockwise.}
\end{lemma}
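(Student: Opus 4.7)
The plan is to prove the two directions of the biconditional separately.

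\textbf{Sufficiency.} I would show by explicit construction that whenever at least one literal is true, a simple realization exists. Using the cyclic symmetry of $\mathcal{Q}_c$ under the permutation $x\mapsto y\mapsto z\mapsto x$, which preserves the prescribed crossing set, it suffices to consider three cases according to whether the set of true literals has, up to relabeling, one, two, or three elements. In each case, an explicit simple realization meeting the prescribed nine crossings can be drawn inside the disk bounded by the clause cycle; these are depicted in \cref{fig:clause-true-1,fig:clause-true-2,fig:clause-true-3}. A routine visual inspection verifies that each drawing realizes exactly the prescribed crossings and no other crossings among the edges of $\mathcal{Q}_c$.

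\textbf{Necessity.} For the converse, I would argue by contradiction. Suppose a simple realization exists in which $f_v$ precedes $b_v$ along $e_{c,v}$ clockwise for every $v\in\{x,y,z\}$. A stereographic projection lets us assume that $\mathcal{Q}_c$ is drawn inside the disk $D$ bounded by the clause cycle. Under the all-false hypothesis, the six points at which the paths enter $D$ appear along the clause cycle in the clockwise cyclic order $f_x,b_x,f_y,b_y,f_z,b_z$, so the six ``dangling'' edges reach into $D$ in the alternating pattern $g_x,c_x,g_y,c_y,g_z,c_z$. The goal is to show that this alternating pattern is incompatible with the prescribed crossing structure on these six edges, which in $\mathcal{C}(\mathcal{Q}_c)$ forms a triangular prism: the two triangles $(c_x,c_y,c_z)$ and $(g_x,g_y,g_z)$ together with the matching crossings $(c_x,g_z),(c_y,g_x),(c_z,g_y)$, and no other crossings among them.

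To derive the contradiction, I would first analyze the arrangement of the three pairwise-crossing edges $c_x,c_y,c_z$ inside $D$: depending on whether this $3$-crossing is tangled or untangled, $D$ is subdivided into a central region plus several peripheral regions, and the cyclic order of the boundary endpoints of the $c$'s on the clause cycle determines which peripheral region borders each boundary arc between two consecutive $g$-entry points. Next, for each $g_v$ I would track its forced routing through this arrangement. Writing indices modulo $3$ in the order $x,y,z$, the prescribed pattern forces $g_v$ to cross exactly $c_{v+1}$, $g_{v-1}$, and $g_{v+1}$, while its entry point on the clause cycle sits between the entries of the two $c$-edges $c_{v-1}$ and $c_v$ that it must not cross. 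A case analysis on the tangled/untangled status of each of the two triangles then shows that, under the all-false boundary order, the three routings of $g_x,g_y,g_z$ cannot be simultaneously realized without either creating a forbidden crossing with some $c$-edge or failing to realize one of the required $g$--$g$ crossings; \cref{fig:clause-false} illustrates the resulting schematic conflict. The hard part, and the main obstacle, is executing this case analysis carefully: enumerating every combination of tangled/untangled statuses of the $c$- and $g$-triangles together with the possible positions of the interior endpoints of the $c$-edges relative to the central region, and, for each, showing that the forced routings of the $g$-edges are incompatible with the all-false cyclic order.
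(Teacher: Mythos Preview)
Your sufficiency argument matches the paper's exactly.

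For necessity, your route diverges from the paper's and is considerably heavier. You propose to analyze the arrangement of the $c$-triangle inside the disk, split into tangled/untangled cases for both the $c$- and the $g$-triangles, and then trace each $g_v$ through the resulting regions. You yourself flag this case analysis as ``the main obstacle,'' and indeed you have not carried it out; as stated, the necessity direction is only a plan, not a proof.

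The paper avoids this case analysis entirely with a short region-trapping argument. Assuming all three literals are false, it builds a closed region~$R$ bounded by pieces of $b_z$, $c_z$, $c_y$, $b_y$, $e_{c,y}$, $e_{c,z}$; none of these six edges may be crossed by $g_z$, and since the $f_z$--$g_z$ endpoint lies in~$R$ (because $f_z$ precedes $b_z$ along $e_{c,z}$), all of $g_z$ is trapped in~$R$. Symmetrically, a region~$R'$ bounded by pieces of $b_z$, $c_z$, $c_x$, $b_x$, $e_{c,x}$, $e_{c,z}$ traps $g_x$. The two regions are interior-disjoint by construction, so $g_x$ and $g_z$ cannot cross, contradicting $(g_x,g_z)\in\mathcal{C}(\mathcal{Q}_c)$. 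This uses only the \emph{non}-crossings of the $g$-edges with the $b$'s, the clause-cycle edges, and two of the three $c$'s, never needing to know whether either triangle is tangled. I would recommend you replace your proposed case analysis with this argument.
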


\begin{proof}
\cref{fig:clause-true-1,fig:clause-true-2,fig:clause-true-3} show that, up to a renaming of the variables $x$, $y$, and $z$, if for at least one pair~$f_v, b_v$, with~$v \in \{x,y,z\}$, we have that~$b_v$ precedes~$f_v$ {along~$e_{c,v}$} when traversing the clause cycle of~$c$ clockwise, then
the clause gadget~$\mathcal{Q}_c$ admits a simple realization together with the clause cycle $c$ in which, for~$v \in \{x,y,z\}$, both~$b_v$ and~$f_v$ {cross~$e_{c,v}$, and in which the edges~$e_{c,x}$,~$e_{c,y}$, and~$e_{c,z}$} appear in this order when traversing clockwise the clause cycle of~$c$.

For the opposite direction, consider a simple realization $\Gamma$ of $\mathcal{Q}_c$ together with the clause cycle $c$ in which, for~$v \in \{x,y,z\}$, both~$b_v$ and~$f_v$ {cross~$e_{c,v}$, and in which the edges~$e_{c,x}$,~$l_{c,y}$, and~$l_{c,z}$} appear in this order when traversing clockwise the clause cycle of $c$. Also, assume that, in $\Gamma$, there exists no pair $f_v, b_v$, with~$v \in \{x,y,z\}$, such that~$b_v$ precedes~$f_v$ {along~$e_{c,v}$} when traversing the clause cycle of~$c$ clockwise; refer to \cref{fig:clause-false}.

Consider the region $R$ delimited by the following curves:
(i) the part of $b_z$ between its intersection with {$e_{c,z}$,} which we denote by $p$, and its common endpoint with $c_z$, 
(ii) the part of $c_z$ between such an endpoint and the intersection of $c_z$ with $c_y$, which we denote by $q$,
(iii) the part of $c_y$ between $q$ and its common endpoint with $b_y$,
(iv) the part of $b_y$ between such an endpoint and the intersection of $b_y$ with {$e_{c,y}$,} which we denote by $w$,
(v) the part of {$e_{c,y}$} between $w$ and its common endpoint with {$e_{c,z}$,} and
(vii) the part of {$e_{c,z}$} between such an endpoint and $p$.
Observe that the edges whose parts compose the boundary of $R$, that is, the edges {$e_{c,y}$, $e_{c,z}$,} $b_z$, $c_z$, $c_y$, and $b_y$ may not be crossed by $g_z$.
Also, since $f_z$ precedes $b_z$ along~{$e_{c,z}$} when traversing the clause cycle of~$c$ clockwise, we have that the common endpoint of $f_z$ and $g_z$ lies in $R$. It follows that the drawing of $g_z$ in $\Gamma$ lies in the interior of $R$.

Consider now the region $R'$ delimited by the following curves:
(i) the part of $b_z$ between~$p$ and its common endpoint with $c_z$, 
(ii) the part of $c_z$ between such an endpoint and the intersection of $c_z$ with $c_x$, which we denote by $d$,
(iii) the part of $c_x$ between $d$ and its common endpoint with $b_x$,
(iv) the part of $b_x$ between such an endpoint and the intersection of $b_x$ with~{$e_{c,x}$,} which we denote by $a$,
(v) the part of~{$e_{c,x}$} between $a$ and its common endpoint with~{$e_{c,z}$,} and
(vii) the part of {$e_{c,z}$} between its common endpoint with {$e_{c,x}$} and $p$.
Observe that the edges whose parts compose the boundary of $R'$, that is, {the edges $e_{c,x}$, $e_{c,z}$,} $b_z$, $c_z$,~$c_x$, and $b_x$ may not be crossed by $g_x$.
Also, since $f_x$ precedes $b_x$ along {$e_{c,x}$} when traversing the clause cycle of~$c$ clockwise, we have that the common endpoint of $f_x$ and $g_x$ lies in $R'$. It follows that the drawing of $g_x$ in $\Gamma$ lies in the interior of $R'$.

Since, by definition, $R$ and $R'$ are interior disjoint, and since we have shown that, in~$\Gamma$, the drawing of $g_z$ lies in $R$ and the drawing of $g_x$ lies in $R'$, it follows that $g_x$ and $g_z$ cannot cross in~$\Gamma$, which contradicts the fact that $\Gamma$ is a simple realization of $\mathcal{Q}_c$ satisfying the conditions of the statement.
\end{proof}

Based on \cref{lem:clause-necessity}, we associate the \texttt{True} value with a literal of a variable $v \in \{x,y,z\}$ appearing in $c$ when $b_v$ precedes~$f_v$ {along~$e_{c,v}$} while traversing the clause cycle of~$c$~clockwise, and \texttt{False} otherwise; see \cref{fig:clause}.
We can finally prove the main result of the section.

\begin{theorem}
\label{th:hardness}
    \SATor is \NP-complete for instances $A$ with $\lcc(A) = 6$. %when the largest connected component of the crossing graph has~size~$6$.
\end{theorem}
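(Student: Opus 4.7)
The plan is to establish NP-hardness via a polynomial reduction from \pthreesat~\cite{k-spspc-94}; NP-membership is routine, since a simple realization of $A$ is certified by its combinatorial embedding (the rotation system at vertices together with the crossing orders along each edge), which can be verified against $\Chi$ in polynomial time.

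Given a \pthreesat instance $\varphi$, I construct an AT-graph $A_\varphi$ by starting from the skeleton $H_\varphi$ (triconnected by \cref{lem:skeleton}) and attaching gadgets along its cycles: each variable cycle of $v$ is augmented with the variable gadget $\mathcal{V}_v$ so that, for every $i$, the edges $a^i_1$ and $e^i_1$ of $\mathcal{V}_v$ are the only edges crossing the variable-cycle edge $e_{v,c_i}$; each clause cycle of $c$ is augmented with $\mathcal{Q}_c$ so that, for $u \in \{x,y,z\}$, the edges $b_u$ and $f_u$ of $\mathcal{Q}_c$ are the only edges crossing $e_{c,v}$, where $u$ denotes the slot of $v$ in $c$. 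The pipe edges of $H_\varphi$ pair $\{a^i_1, e^i_1\}$ with $\{b_u, f_u\}$, with the crucial design choice that for a \emph{positive} literal the pipes connect $a^i_1$ to $b_u$ and $e^i_1$ to $f_u$, while for a \emph{negative} literal the pairing is swapped. The whole construction has polynomial size in $|\varphi|$.

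For correctness, I use the structural lemmas already proved. In any simple realization of $A_\varphi$, \cref{lem:variable} forces the edges crossing the variable cycle of $v$ to follow one of exactly two alternating clockwise sequences, which I interpret as the two truth values of $v$. Since pipe edges are never crossed, this choice propagates unambiguously to the order in which $b_u$ precedes or follows $f_u$ along $e_{c,v}$, and the sign-dependent pairing ensures that $b_u$ precedes $f_u$ along $e_{c,v}$ (while the clause cycle of $c$ is traversed clockwise) if and only if the corresponding literal evaluates to \texttt{True} under the induced assignment. By \cref{lem:clause-necessity}, the clause gadget together with its cycle admits a simple realization precisely when at least one literal in $c$ is \texttt{True}, so $A_\varphi$ is simply realizable iff $\varphi$ is satisfiable. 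Conversely, starting from a satisfying assignment of $\varphi$, one builds a simple realization of $A_\varphi$ gadget by gadget, using the realizable directions of \cref{lem:splitter,lem:variable,lem:clause-necessity} together with the planarity of $H_\varphi$.

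The part requiring the most care is verifying that $\lcc(A_\varphi) = 6$. I enumerate the non-trivial components of $\mathcal{C}(A_\varphi)$, all of which are displayed in \cref{fig:components-crossing-graph}. From each split gadget one obtains three stars $\{l_j, b_j, f_j\}$ of size~$3$, two single crossings $\{\pi''_{j,3}, \psi''_{j,3}\}$ of size~$2$, two trees $\{c_j, g_j, \pi'_{j,3}, \psi'_{j,3}\}$ of size~$4$, and one double star $\{c_3, g_3, \pi'''_{1,3}, \pi'''_{2,3}, \psi'''_{1,3}, \psi'''_{2,3}\}$ of size~$6$. The variable gadget adds only the $2$-vertex crossings $\{a^i_2, e^i_2\}$ at each identification. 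The skeleton contributes the $3$-vertex components $\{e_{v,c_i}, a^i_1, e^i_1\}$ and $\{e_{c,v}, b_v, f_v\}$. Finally, each clause gadget contributes a single $6$-vertex component consisting of two vertex-disjoint triangles on $\{c_x, c_y, c_z\}$ and $\{g_x, g_y, g_z\}$ joined by a perfect matching. The crucial observation underlying the bound, and the main technical obstacle, is that the interface edges $a^i_1, e^i_1, b_u, f_u$ participate in no crossings \emph{within} their own gadgets, so attaching the gadgets to $H_\varphi$ never merges previously separate components. This confirms that the largest component has exactly $6$ vertices, completing the proof.
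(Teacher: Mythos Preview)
Your proof is correct and follows essentially the same route as the paper: reduction from \pthreesat via the skeleton $H_\varphi$ plus the variable and clause gadgets, correctness from \cref{lem:skeleton,lem:variable,lem:clause-necessity}, and the same component-by-component enumeration of $\mathcal{C}(A_\varphi)$ to certify $\lcc(A_\varphi)=6$. Two small discrepancies worth tightening: the paper links the gadgets by \emph{identifying} the edge $a^i_1$ with $a_y$ (resp.\ $e_y$) of $\mathcal{Q}_c$ rather than routing them through the pipe edges as you describe, and you should state explicitly that the role of triconnectivity (\cref{lem:skeleton}) is to force all variable and clause cycles to carry a \emph{consistent} clockwise orientation in every simple realization---this is what makes the truth-value propagation between \cref{lem:variable} and \cref{lem:clause-necessity} well-defined.
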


\begin{proof}
  We give a reduction from the \NP-complete problem \pthreesat~\cite{k-spspc-94}. Let $\varphi$ be an instance of \pthreesat.
  We construct an instance~$A_\varphi = (G',\Chi')$ of \SATor that is simply realizable if and only if $\varphi$ is satisfiable.
  We initialize $G' = H_\varphi$ and $\Chi' = \emptyset$.
  Then, for each variable $v$, we extend $A_\varphi$ to include~$\mathcal{V}_v$ as follows: For each clause $c_i$ that contains a literal of~$v$, add to $\Chi'$ the pair of edges $\{a^i_1,e_{v,c_i}\}$ and $\{e^i_1,e_{v,c_i}\}$, where  $a^i_1$ and $e^i_1$ belong to $\mathcal{V}_v$, and $e_{v,c_i}$ belongs to $H_\varphi$.
  Also, for each clause~$c$, we extend $A_\varphi$ to include $\mathcal{Q}_c$ as follows:
  For each variable $v \in \{x,y,z\}$ whose literals belong to $c$, we add to $\Chi'$ the pair of {edges $\{f_v, e_{c,v}\}$ and $\{b_v,e_{c,v}\}$,} where $f_v$ and $b_v$ belong to~$\mathcal{Q}_c$, and {$e_{c,v}$} belongs to $H_\varphi$.
  Finally, for each occurrence of a literal of a variable $v$ to a clause~$c_i$, we
  identify edges of $\mathcal{V}_v$ with edges of $\mathcal{Q}_v$ as follows: If $v$ appears as a positive (resp. negated) literal in $c_i$, then we identify the edge $a^i_1$ of $\mathcal{V}_v$ with the edge $a_y$ (resp.\ $e_y$) of $\mathcal{Q}_v$ and
  we identify the edge $e^i_1$ of $\mathcal{V}_v$ with the edge $e_y$ (resp.~$a_y$) of $\mathcal{Q}_v$.
  Observe that we do not allow the edges $a^i_1$  and $e^i_1$ to cross.
  Clearly, $A_\varphi$ can be constructed in polynomial time.
  The equivalence between $A_\varphi$ and $\varphi$ immediately follows from \cref{lem:variable,lem:clause-necessity}, and from the fact that, by \cref{lem:skeleton}, in any simple realization of $A_\varphi$, all the variable cycles and all the clause cycles maintain the same circular orientation.

  Observe that the size of the largest connected component of the crossing graph $\mathcal{C}(A_\varphi)$ is six; see also \cref{fig:components-crossing-graph}.
  As discussed at the start of the section, this implies \NP-hardness for any~$\lcc(A) = k$, for any $k \geq 6$.
  In particular, such bound is achieved by a connected component of $\mathcal{C}(S)$, for each split gadget $S$ used to construct $\mathcal{V}_v$, for any variable $v$, and by a connected component of $\mathcal{C}(\mathcal{Q}_c)$, for any clause $c$. The component of $\mathcal{C}(S)$ of size six is defined by the 
  edges $c_3$, $g_3$, $\pi'''_{1,3}$, $\psi'''_{1,3}$, $\pi'''_{2,3}$, and $\psi'''_{2,3}$ (see \cref{fig:splitter-a,fig:splitter-b}). 
  The component of $\mathcal{C}(\mathcal{Q}_c)$ of size six is defined by the 
  edges $c_r$ and $g_r$, with $r \in \{x,y,z\}$ (see \cref{fig:clause}). 

  Finally, to show that \SATor is in \NP, observe that one can test whether an instance~$A = (G,\Chi)$ admits a simple realization, by considering all possible orders of the crossings of the edges of $G$ that form the pairs in $\Chi$, run a linear-time planarity testing of the corresponding planarization of $G$, and return \texttt{YES} if and only if at least one such tests succeeds. This concludes the proof that \SATor is \NP-complete.
\end{proof}

We remark that the \NP-hardness of \cref{th:hardness} holds for instances whose crossing graph is planar, and has maximum degree~$3$ and treewidth~$3$; see \cref{fig:components-crossing-graph}. Moreover, it implies that \SATor is \NP-complete when~$\lcc(A) \geq k$, for any~$k \geq 6$. 
Finally, since our reduction yields instances whose size is linear in the size of the input (planar)~$3$-SAT formula, we have the following.

\begin{corollary}
Unless ETH fails, \SATor has no~$2^{o(\sqrt{n})}$-time algorithm, where $n$ is the number of vertices of the input AT-graph.
\end{corollary}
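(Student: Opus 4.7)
The plan is to chain the reduction from \cref{th:hardness} with the standard ETH-based lower bound for \pthreesat. Under ETH, \threesat admits no $2^{o(n+m)}$-time algorithm on instances with $n$ variables and $m$ clauses (via the sparsification lemma), and the quadratic speed-up lower bound for planar problems further yields that \plthreesat, and in turn \pthreesat, admits no $2^{o(\sqrt{n+m})}$-time algorithm under ETH. Since the variable-clause graph of a \pthreesat instance is planar, $m = O(n)$, so this bound simplifies to $2^{o(\sqrt{n})}$.

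Next, I would verify that the reduction described in the proof of \cref{th:hardness} is size-linear. The skeleton $H_\varphi$ has size proportional to the number of vertices and edges of $G_\varphi$, which is $O(n+m)$. For each variable $v$ of degree $k$ in $\varphi$, the variable gadget $\mathcal{V}_v$ consists of $k$ split gadgets of constant size, contributing $O(k)$ vertices and constraint pairs in $\Chi'$; summing over all variables gives $O(\sum_v \deg(v)) = O(m)$. Each clause gadget $\mathcal{Q}_c$ has constant size, contributing $O(m)$ in total. Finally, the identifications between variable and clause gadgets and the crossings added with the skeleton contribute $O(m)$ additional pairs to $\Chi'$. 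Using planarity of $G_\varphi$ to conclude $m = O(n)$, the resulting AT-graph $A_\varphi$ has $N = O(n)$ vertices, so $\sqrt{N} = O(\sqrt{n})$.

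Finally, I would close with the contrapositive: if \SATor admitted a $2^{o(\sqrt{N})}$-time algorithm, then composing it with the polynomial-time reduction of \cref{th:hardness} would decide any instance $\varphi$ of \pthreesat in time $2^{o(\sqrt{N})} = 2^{o(\sqrt{n})}$, contradicting the ETH lower bound. The only nontrivial step is the size accounting, since nothing in the statement of \cref{th:hardness} explicitly records a linear bound on $|V(G')| + |\Chi'|$; however, since every gadget involved has constant size and each incidence of $G_\varphi$ spawns a constant number of new vertices and constraints, the verification is routine once the construction is revisited with this goal in mind.
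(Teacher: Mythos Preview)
Your argument is correct and follows essentially the same route as the paper: invoke the Sparsification Lemma for \threesat, pass through the quadratic blow-up of Lichtenstein's reduction to \plthreesat and then the linear reduction to \pthreesat, and finally use that the construction of \cref{th:hardness} has linear size. The paper states the chain slightly more explicitly (spelling out the $O((N+M)^2)$ size of the \plthreesat instance rather than citing the $2^{o(\sqrt{n})}$ lower bound for \plthreesat as known), while you are more careful in itemizing why each gadget contributes only linearly; but the substance is the same.
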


\begin{proof}
    By the {\em Sparsification Lemma}~\cite{DBLP:journals/jcss/ImpagliazzoPZ01}, it holds that \threesat cannot be solved in~$2^{o(N+M)}$ time, where~$N$ and~$M$ are the number of variables and clauses of the formula, respectively, unless the Exponential Time Hypothesis (ETH) fails.  On the other hand, there exists a polynomial-time reduction from \threesat to \plthreesat that transforms \threesat instances with~$N+M$ variables and clauses into equivalent \plthreesat instances with~$O((N+M)^2)$ variables and clauses~\cite{DBLP:journals/siamcomp/Lichtenstein82}.  Moreover, there is a reduction from \plthreesat to \pthreesat that only requires a linear blow-up in the size of the instance~\cite{k-spspc-94}. Altogether, given an instance of \threesat with~$N+M$ variables and clauses, there is a polynomial-time transformation that maps such instance into an equivalent instance of \pthreesat with~$O((N+M)^2)$ variables and clauses. Finally, starting from an instance of \pthreesat, our reduction constructs a graph whose number of vertices and edges is~$n+m \in O(n) = O((N+M)^2)$. Therefore, an algorithm solving \SATor in~$2^{o(\sqrt{n})}$ time would contradict ETH.
\end{proof}

\section{A Linear-Time Algorithm for AT-Graphs with~\texorpdfstring{$\mathbf{\lcc(A) \leq 3}$}{lcc(A) <= 3}}\label{se:triplets}
  
In this section we show that the problem \SATR can be solved in linear-time for AT-graphs~$A$ with~$\lcc(A) \leq~3$; see \cref{th:main-theorem}.
 We first give a short high-level overview of the overall strategy but note that proper definitions will only be given later in the detailed description of the algorithm.
The first step is to reduce \SATR to a constrained embedding problem where each vertex~$v$ may be equipped with alternation constraints that restrict the allowed orders of its incident edges around~$v$.
Next, we further reduce to the biconnected variant of the embedding problem which leads to new types of alternation constraints. It will turn out that many of these constraints 
can be transformed into constraints that can be expressed in terms of PQ-trees and are therefore easier to handle. Finally, we show that, when no further such transformations are possible, all the remaining alternation constraints have a simple structure that allows for an efficient test.

We now start with reducing \SATR to a constrained embedding problem.
Let~$A=(G,\Chi)$ be an~$n$-vertex AT-graph such that~$\lcc(A) \leq 3$. 
We construct from~$G$ an auxiliary graph~$H$ as follows.  
For each connected component~$X$ of~${\cal C}(A)$ that is not an isolated vertex, denote by~$E(X)$ the set of edges of~$G$ corresponding to the vertices of~$X$, and by~$V(X)$ the vertices of~$G$ that are end-vertices of the edges in~$E(X)$. Remove from~$G$ the edges in~$E(X)$ and add a \emph{crossing vertex}~$v_X$ adjacent to all vertices in~$V(X)$; see \cref{fig:crossing vertex}.
Since no two crossing vertices are adjacent, the graph~$H$ does not depend on the order in which we apply these operations.  The edges incident to~$v_X$ are partitioned into pairs of edges where two edges~$(a,v_X)$ and~$(b,v_X)$ form a pair if~$(a,b)$ is an edge of~$G$ corresponding to a vertex of~$X$.  We call~$(a,v_X)$ and~$(b,v_X)$ the \emph{portions} of~$(a,b)$ and say that~$(a,v_X)$ and~$(b,v_X)$ \emph{stem from}~$(a,b)$ . 
Note that since~$\lcc(A) \leq 3$, a crossing vertex~$v_X$ has either degree~$4$ or~$6$. In the first case~$X$ is a~$K_2$ and in the latter case~$X$ is either an induced~$3$-path~$P_3$ or a triangle~$K_3$.
If~$X = K_2$, we color its two vertices red and blue, respectively. If~$X = K_3$ we color its three vertices red, blue, and purple, respectively. If~$X = P_3$, its vertex of degree~$2$ is colored purple, whereas we color red and blue the remaining two vertices, respectively.
Based on \cref{lem:untangling}, we observe the following.

\begin{figure}[t]
    \centering
    \includegraphics[page=1,scale=.92]{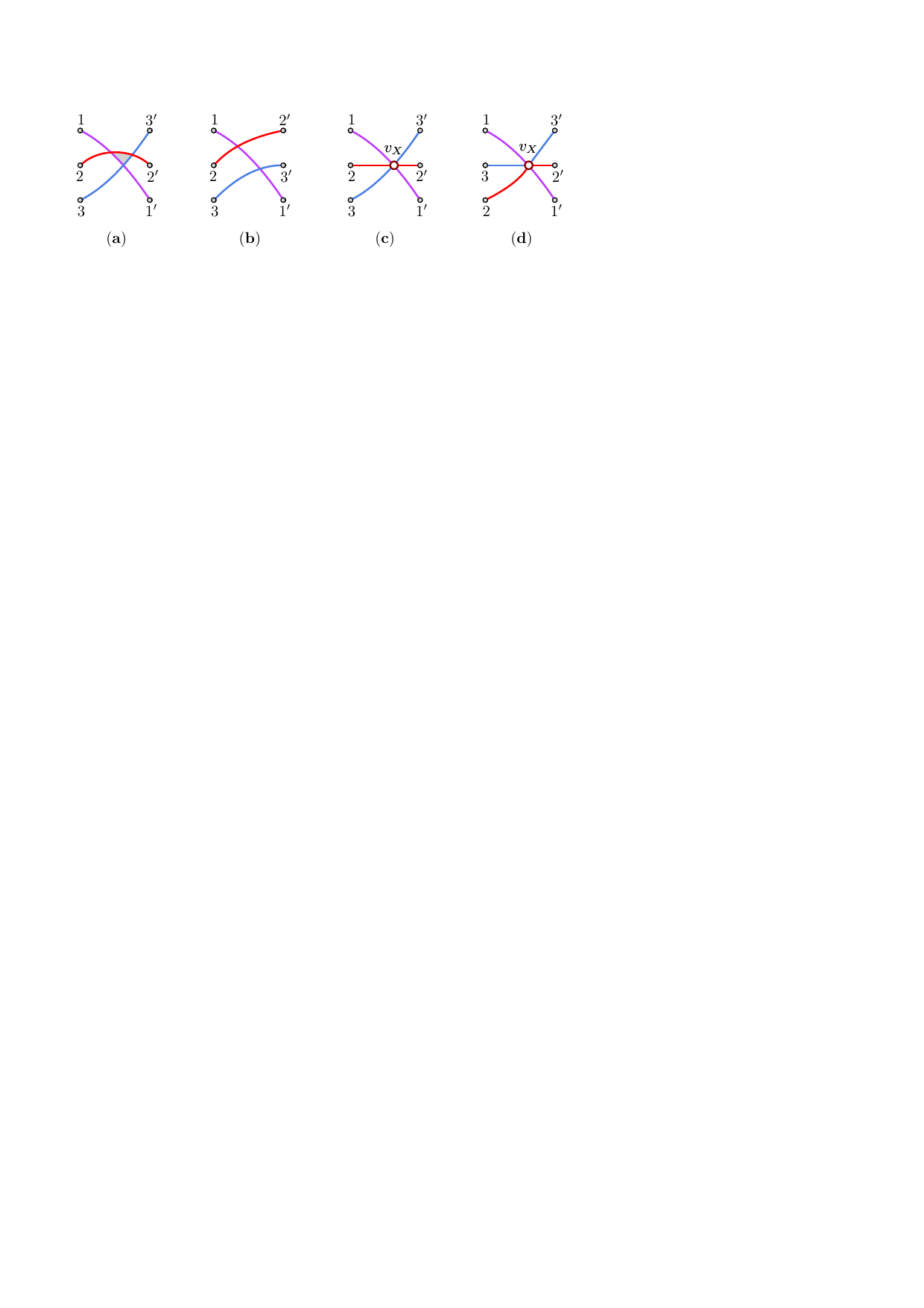}
    \caption{$(a)$ A~$K_3$-crossing.~$(b)$ A~$P_3$-crossing. A circular order of the neighbors around a crossing vertex~$v_X$ satisfying~$(c)$ a~$K_3$-constraint but not a~$P_3$-constraint,~$(d)$ a~$P_3$- but not a~$K_3$-constraint.}
    \label{fig:crossing vertex}
\end{figure}

\begin{obs}
\label{ob:h-is-planar}
    If~$A$ admits a simple realization, then~$H$ is planar.
\end{obs}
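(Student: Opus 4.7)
The plan is to take a simple realization of $A$, repeatedly contract the arrangement of each non-trivial crossing component into a single vertex, and verify that the resulting drawing of $H$ has no crossings. By \cref{lem:untangling}, we may assume without loss of generality that $A$ admits a simple realization $\Gamma_A$ in which every $3$-crossing is untangled; this assumption will be crucial when handling components of type $K_3$.

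Process the non-trivial components $X_1, X_2, \dots, X_t$ of $\mathcal{C}(A)$ one at a time, producing a sequence of drawings $\Gamma_0 = \Gamma_A, \Gamma_1, \dots, \Gamma_t$. At step $i$, consider the arrangement $C_{E(X_i)}$ of the edges of $X_i$ in $\Gamma_{i-1}$; since $\lcc(A) \le 3$, the edges of $X_i$ cross only each other, so $C_{E(X_i)}$ is disjoint from every other edge and vertex of $\Gamma_{i-1}$. Pick a small closed topological disk $D_i$ that contains all crossings of $C_{E(X_i)}$ in its interior and meets each edge of $X_i$ in a single arc, while being disjoint from all vertices and from every other edge of $\Gamma_{i-1}$. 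Remove from $\Gamma_{i-1}$ the interiors of the edges of $X_i$ inside $D_i$, place the crossing vertex $v_{X_i}$ in the interior of $D_i$, and, for each portion $(a, v_{X_i})$ stemming from an edge $(a,b) \in E(X_i)$, connect $v_{X_i}$ to the point where the original drawing of $(a,b)$ meets $\partial D_i$ on the side of $a$, by a curve inside $D_i$; then extend this curve outside $D_i$ by the unchanged piece of $(a,b)$ that runs from $\partial D_i$ to $a$. The result is a drawing $\Gamma_i$ of the graph obtained from the underlying graph of $\Gamma_{i-1}$ by the $i$-th contraction.

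It remains to check that $D_i$ can be chosen and the portions can be routed inside $D_i$ so that $\Gamma_i$ is again a drawing (no spurious crossings are introduced) and, after all $t$ steps, $\Gamma_t$ is planar. The routings outside $D_i$ are pieces of the original edges, hence do not mutually cross (they correspond to edge pairs not in $\chi$). Inside $D_i$, I handle the three cases:
\begin{itemize}
\item If $X_i \cong K_2$, then $D_i$ contains a single crossing point; place $v_{X_i}$ at that point so that the four portions are trivially drawn without crossings.
\item If $X_i \cong P_3$ with middle edge $e_2$ crossing $e_1$ and $e_3$, then $C_{E(X_i)}$ has exactly two crossings lying on $e_2$; one can choose $D_i$ small enough that $e_1$ and $e_3$ each meet $\partial D_i$ in two points lying in consecutive arcs of $\partial D_i \setminus e_2$, and then route the six portions star-like from $v_{X_i}$ without crossings.
\item If $X_i \cong K_3$, then $C_{E(X_i)}$ has three crossings; since the $3$-crossing is untangled, the bounded face of the arrangement incident to these three crossings contains none of the six endpoints. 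Choose $D_i$ so that this bounded face lies in its interior and each of the six outward arcs of the edges meets $\partial D_i$ in exactly one point; place $v_{X_i}$ inside the bounded face and route the six portions star-like to the six boundary points. Because the endpoints appear in the cyclic order induced by the bounded face, these six portions can be drawn inside $D_i$ without crossings.
\end{itemize}
In all three cases the replacement turns the crossings inside $D_i$ into a single degree-$4$ or degree-$6$ vertex with no crossings, and leaves the rest of the drawing untouched. Iterating over all $t$ components removes every crossing of $\Gamma_A$, so $\Gamma_t$ is a planar drawing of~$H$.

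The main obstacle is the $K_3$ case: without the untangling hypothesis the bounded face of the arrangement may contain an endpoint of one of the three edges, and then there is no common face of $C_{E(X_i)}$ inside which $v_{X_i}$ can be placed and connected star-like to all six endpoints without forcing a crossing. This is precisely why we first apply \cref{lem:untangling}.
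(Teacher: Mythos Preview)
Your proof is correct and follows essentially the same approach as the paper: both invoke \cref{lem:untangling} to assume untangled $3$-crossings, then locally replace each crossing configuration by a single vertex (the paper phrases this as planarizing and contracting dummy vertices/triangles, while you phrase it as excising a disk and routing portions star-like). The only cosmetic difference is that the paper's contraction language makes the $P_3$ case a one-line edge contraction, whereas your disk description requires a slightly more careful choice of $D_i$; conversely, your explicit mention that a tangled $K_3$ obstructs the star routing is a nice justification that the paper leaves implicit.
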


\begin{proof}
    By \cref{lem:untangling}, we can consider an embedding of~$G$ induced by a simple realization of~$A$ in which all~$3$-crossings are untangled. In this embedding, replace each crossing with a dummy vertex.  For components of~${\cal C}(A)$ of size~2, the dummy vertex corresponds to the crossing.  For components of~${\cal C}(A)$ of size~3, we distinguish whether the component is a path or a 3-cycle.  In the case of a path, we can contract the edge between the two dummy vertices to obtain the corresponding crossing vertex.  For a 3-cycle, by \cref{lem:untangling} we can further assume that the triangle enclosed by the edges connecting the three dummy vertices is empty, and we can therefore contract them into a single vertex.
\end{proof}

\cref{ob:h-is-planar} gives an immediate necessary condition for the realizability of~$A$, which is, however, not sufficient.  Indeed, a planar embedding of~$H$ obtained from contracting edges in a realization of~$A$ satisfies an additional property: 
for each crossing vertex~$v_X$ the portions stemming from two distinct edges~$e, f$ of~$G$ alternate around~$v_X$ {\em if and only} if the two vertices corresponding to~$e, f$ in~$X$ are adjacent. 
To keep track of this requirement, we equip every crossing vertex~$v_X$ with an \emph{alternation constraint} that~(i) colors its incident edges with colors r(ed), b(lue), p(urple) so that a portion of an edge in~$G$ gets the same color as the corresponding vertex in~$X$, and (ii) specifies which pairs of colors must alternate around~$v$; see \cref{fig:crossing vertex} for an example. 
For a \emph{$K_2$-constraint} there are no purple edges, and red and blue must alternate. For a \emph{$K_3$-constraint} all pairs of colors must alternate. For a \emph{$P_3$-constraint}, red and purple as well as purple and blue must alternate, whereas red and blue must not alternate.  Each component~$X$ of~$\mathcal{C}(A)$ with the coloring described above naturally translates to a constraint for~$v_X$.  For~$X = K_2$, we obtain a~$K_2$-constraint, for~$X = P_3$ we get a~$P_3$-constraint, and for~$X = K_3$ we get a~$K_3$-constraint; see \cref{fig:crossing vertex}.
The auxiliary graph~$H$ with alternation constraints is \emph{feasible}
if it admits a planar embedding that satisfies the alternation constraints of all vertices.
Thus, we have the following.
\begin{lemma}
    \label{lem:reduction-realizability-acp}
    An AT-graph~$A =(G,\Chi)$ with~$\lcc(A) \le 3$ is simply realizable if and only if the corresponding auxiliary graph~$H$ with alternation constraints is feasible.
\end{lemma}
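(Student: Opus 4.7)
The plan is to prove both directions by explicitly translating between a simple realization of $A$ and a planar embedding of $H$ satisfying the alternation constraints, so that the combinatorial contraction/expansion at each crossing vertex exactly mirrors the geometric formation/resolution of the crossings in $\Gamma_A$.

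For the forward direction, by \cref{lem:untangling} I may assume $A$ has a simple realization $\Gamma_A$ in which every $3$-crossing is untangled. Following the argument of \cref{ob:h-is-planar}, I replace each proper crossing of $\Gamma_A$ by a dummy vertex and then contract as follows: for a component $X = K_2$ the dummy vertex itself plays the role of $v_X$; for $X = P_3$ I contract the portion of the middle (purple) edge lying between its two dummy vertices; for $X = K_3$ I use the untangled assumption to contract the empty triangle bounded by the three dummies into a single vertex. This produces a planar embedding of $H$, and it remains to verify the alternation constraints. For $X = K_2$ the four portions around $v_X$ alternate between the two edges, matching the $K_2$-constraint. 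For $X = P_3$, writing out the rotations at the two dummy vertices (red and purple meeting at one, blue and purple meeting at the other) and removing the contracted purple segment yields the cyclic order $r,p,r,b,p,b$ around $v_X$, which satisfies the $P_3$-constraint (red--purple alternates, purple--blue alternates, red--blue does not). For $X = K_3$, contracting the empty triangle collapses the three rotations into a cyclic pattern in which all three pairs of colors alternate, matching the $K_3$-constraint.

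For the backward direction, given a planar embedding $\mathcal E_H$ of $H$ satisfying the constraints, I reverse the above construction at each crossing vertex $v_X$. If $v_X$ carries a $K_2$-constraint, I replace $v_X$ by a single crossing point of the two edges of $G$ whose portions meet at $v_X$; the alternation of red and blue guarantees this replacement is planar-consistent. If $v_X$ carries a $P_3$-constraint, I split $v_X$ along a short purple segment into two dummy vertices, one at which the red and purple portions meet and one at which the blue and purple portions meet; the red--purple and blue--purple alternation conditions allow the split to be performed while keeping the embedding planar, and the failure of red--blue alternation guarantees the red and blue edges do not need to cross. If $v_X$ carries a $K_3$-constraint, I split $v_X$ into three dummy vertices forming a small empty triangle so that each side corresponds to one of the three pairwise crossings; the simultaneous alternation of all three pairs is precisely what makes this split realizable in the plane. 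Replacing each dummy vertex by an honest proper crossing then produces a topological graph isomorphic to $G$ in which two edges cross if and only if they belong to the same component of $\mathcal C(A)$, i.e.\ if and only if the pair lies in $\Chi$. Since each pair crosses at most once and no pair of adjacent edges crosses, the resulting topological graph is a simple realization of $A$.

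The main obstacle is purely bookkeeping: verifying that the cyclic orders produced by contraction in the forward direction and by expansion in the backward direction really do match the three constraint patterns, particularly in the $K_3$ case where the untangled hypothesis of \cref{lem:untangling} is needed to ensure the enclosed triangle is empty and therefore contractible without altering planarity. Once these three local equivalences are established, the biconditional of the lemma follows immediately from the fact that the constructions in the two directions are inverse to each other on the relevant combinatorial data.
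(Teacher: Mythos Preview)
Your proposal is correct and follows essentially the same approach as the paper, which does not give an explicit proof but presents the lemma as an immediate consequence of the preceding discussion (the proof of \cref{ob:h-is-planar} for the forward direction, together with the remark that contracting a realization yields alternation exactly when the corresponding vertices are adjacent in $X$). You have simply made both directions explicit, including the local expansion of each crossing vertex in the backward direction, which the paper leaves implicit.
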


\begin{figure}[t]
    \centering
    \includegraphics[page=1,scale=.92]{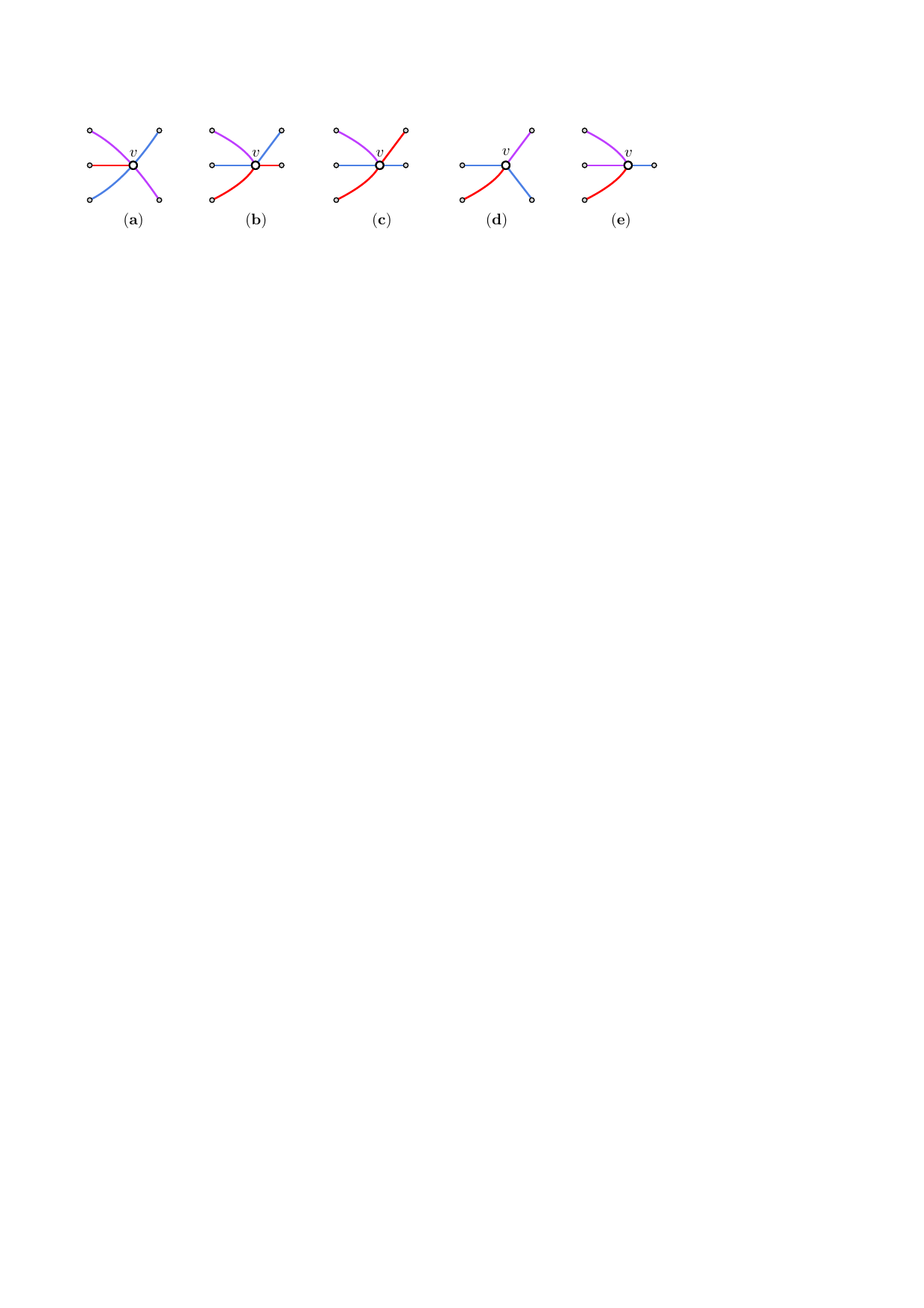}
    \caption{Circular orders of edges incident to a vertex~$v$ satisfying~$(a)$ a~$K_3^{-r}$- and a~$P_3^{-r}$-constraint,~$(b)$ a~$P_3^{-p}$- but not a~$K_3^{-p}$-constraint,~$(c)$ a~$K_3^{-p}$- but not a~$P_3^{-p}$-constraint,$(d)$ a~$P_3^{-(p, r)}$- but not a~$K_3^{-(p, r)}$-constraint ,~$(e)$ a~$P_3^{-(b, r)}$- and a $K_3^{-(b, r)}$-constraint.}
    \label{fig:deg5-constraints}
\end{figure}

 To find such an embedding, we decompose the graph into biconnected components.  It turns out that this may create additional types of alternation
 constraints that stem from the constraints described above, but do not fall into the category of an existing class of constraints.  For the sake of exposition, we introduce these constraints now, even though they will not be part of an instance obtained by 
 %the reduction from realizability of AT-graphs~above.
 the above reduction~from~\SATor.
 
Let~$v$ be a vertex of degree~5 and let~$c$ be a color. For a~$C$-constraint~$(C \in \{K_3, P_3, K_2\})$ as defined above, we define a corresponding~\emph{$C^{-c}$-constraint} of~$v$, which (i) colors the edges incident to~$v$ such that each color occurs at most twice but color~$c$ occurs only once and (ii) requires that in the rotation, it is possible to insert an edge of color~$c$ so that the original~$C$-constraint is satisfied; 
see~\cref{fig:deg5-constraints}. Observe that a~$K_2^{-c}$-constraint is always satisfied and is thus not needed.  Since the colors of a~$K_3$-constraint are entirely symmetric, we may assume without loss of generality that~$c=r$ in this case.  For~$P_3$-constraints, only red and blue are symmetric, i.e., we may assume without loss of generality that either~$c=p$ or~$c=r$.
In particular, the~$K_3^{-r}$-constraint and the~$P_3^{-r}$-constraint both require that purple and blue alternate around~$v$, whereas the position of the red edge is arbitrary; see~\cref{fig:deg5-constraints}\col{$(a)$}. Thus the~$K_3^{-r}$-constraint and the~$P_3^{-r}$-constraint are equivalent. 
For a~$P_3^{-p}$-constraint to be fulfilled, red and blue must not alternate and the purple edge either has to be between the two red edges or between the two blue edges; see~\cref{fig:deg5-constraints}\col{$(b)$}.

Now let~$v$ be a vertex of degree~4 and let~$c, c'$ be two colors.
For a~$C$-constraint, we define a corresponding~\emph{$C^{-c, c'}$-constraint} of~$v$, which (i) colors the edges incident to~$v$ such that the colors distinct from $c$ and $c'$ occur twice but colors~$c$ and $c'$ occur only once if $c \neq c'$, or not at all if~$c = c'$, and (ii) requires that in the rotation, it is possible to insert two edges of color~$c$ and $c'$, respectively, so that the original~$C$-constraint is satisfied.
Since the colors of a~$K_3$-constraint are entirely symmetric, we may assume w.l.o.g. that either~$c = c' = r$ or~$c=r, c' = b$ in this case.  For~$P_3$-constraints, only red and blue are symmetric, we may hence assume without loss of generality that $(c, c') \in \{(r, r), (p,p), (r, p), (r, b)\}$.
Observe that a~$K_2^{-c,c'}$-constraint is always satisfied and is thus not needed.
The same holds for a~$K_3^{-r,b}$-constraint, a~$P_3^{-r,b}$-constraint and a~$P_3^{-r,p}$-constraint; see \cref{fig:deg4-unconstrained}.
Also, note that a $K_3^{-r,r}$-constraint and a~$P_3^{-r, r}$-constraint are both equivalent to a $K_2$-constraint, while a~$P_3^{-p, p}$-constraint requires that red and blue do not alternate around~$v$.

\begin{figure}[t]
    \centering
    \includegraphics[width = 0.9\textwidth]{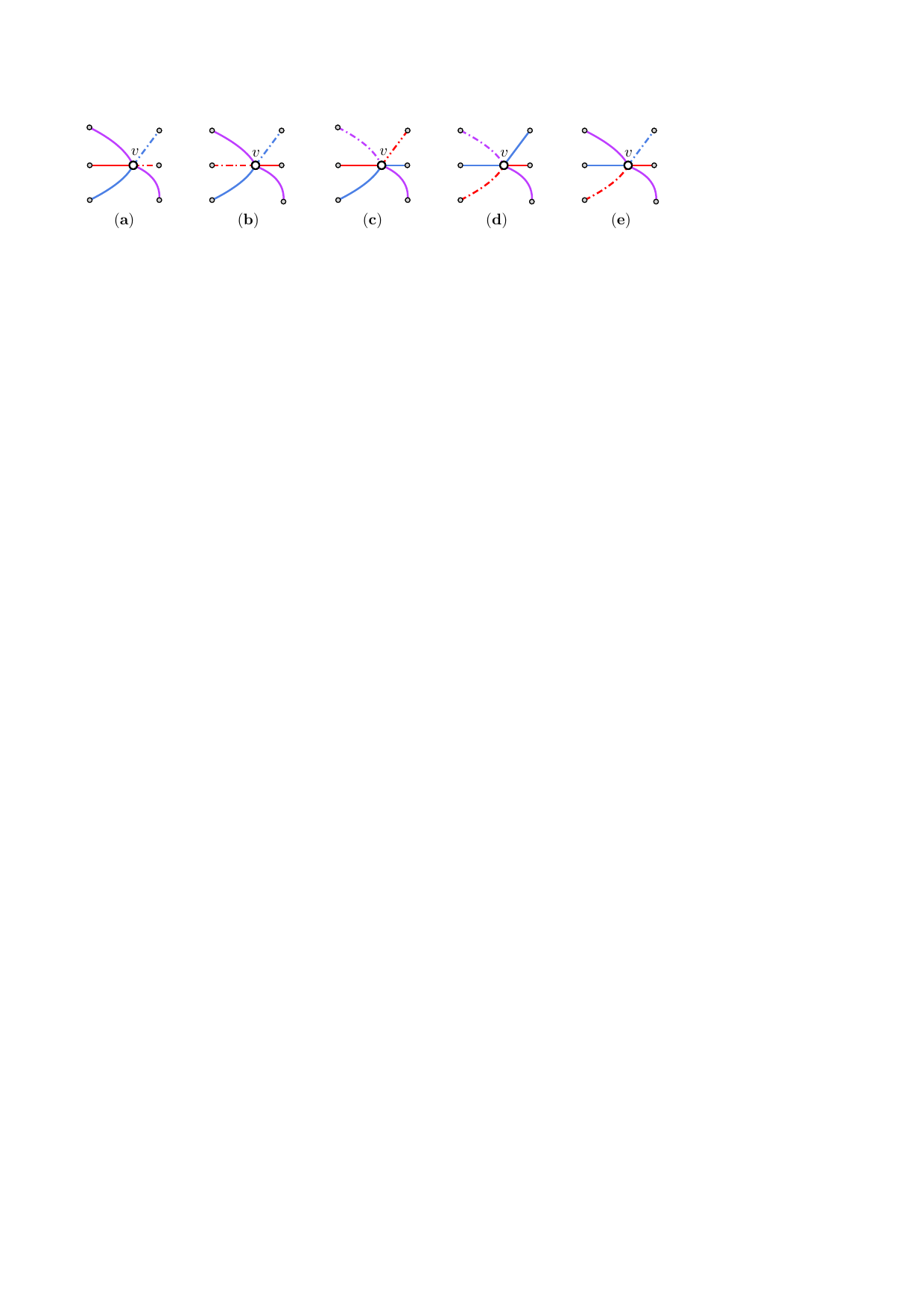}
    \caption{Circular orders of edges around \mbox{a vertex~$v$ allowing to insert two edges of distinct colors} (dashed) so that every color occurs twice \mbox{and 
    a~$(a)$--$(b)$~$K_3$-constraint,~$(c)$--$(e)$ $P_3$-constraint is satisfied.}}
    \label{fig:deg4-unconstrained}
\end{figure}

Finally for a~$C$-constraint, we define a corresponding~\emph{$C^{-(c, c')}$-constraint} of~$v$, which~(i) colors the edges incident to~$v$ such that the colors distinct from~$c$ and $c'$ occur twice but colors~$c$ and $c'$ occur only once if $c \neq c'$, or not at all if $c = c'$, and~(ii) requires that in the rotation, it is possible to insert
an edge of color~$c$ and an edge of color~$c'$ \emph{consecutively}, so that the~$C$-constraint is satisfied; 
see~\cref{fig:deg5-constraints}\col{$(d), (e)$} for examples. 
This type of constraints is motivated as follows.
Let $v$ be a cut vertex in a graph~$G$. 
The \emph{cut components} of $v$ in $G$ are the subgraphs of $G$ induced by $v$ together with the maximal subsets of the vertices of $G$ that are not disconnected by the removal of $v$.
Note that the edges belonging to two different cut components cannot alternate around~$v$ without resulting in a crossing
and observe that a~$K_2^{-(c, c')}$-constraint with~$c \neq c'$ is always satisfied and is thus not needed. 
Also note that a $C^{-(c, c)}$-constraint cannot be satisfied, since every $C$-constraint requires that every color alternates with at least one of the remaining colors.
Since the colors of a~$K_3$-constraint are entirely symmetric, we may assume without loss of generality that~$c=r, c' = b$ in this case.  For~$P_3$-constraints, only red and blue are symmetric, i.e., we may assume without loss of generality that~$(c, c') \in \{(r, p), (r, b)\}$.
In particular, a~$K_3^{-(r, b)}$-constraint and a $P_3^{-(r, b)}$-constraint both require the consecutivity of the two purple edges and are thus equivalent; see~\cref{fig:deg5-constraints}\col{$(e)$}.
For a~$P_3^{-(r, p)}$-constraint to be fulfilled, the two blue edges must not occur consecutively~(see \cref{fig:deg5-constraints}\col{$(d)$}); i.e., the two blue edges have to alternate with the two remaining edges. Thus a~$P_3^{-(p, r)}$-constraint is equivalent to a~$K_2$-constraint.
Table~\ref{tab:constraints} gives an overview over all types of constraints.
By the above discussion we may assume \mbox{that only $K_3$, $P_3$, $K_3^{-r}$, $P_3^{-p}$, $K_2$, $P_3^{-p,p}$ and $K_3^{-(r, b)}$ constraints~occur.}

\setlength{\tabcolsep}{12pt}
 \renewcommand{\arraystretch}{1.7}  
\begin{table}[t]
    \centering
    \begin{tabular}{l | l}
    Constraint & Satisfied if and only if  \\
    \hlineB{4}
        $K_3$ & \tabitem all three colors pairwise alternate \\
        \hline
         $P_3$ & \tabitem red and purple alternate \\[-2ex]
         & \tabitem blue and purple alternate \\[-2ex]
         & \tabitem red and blue do not alternate \\
         \hline
         $K_3^{-r}$, $P_3^{-r}$ 
         & \tabitem blue and purple alternate \\
          \hline
         $P_3^{-p}$ 
         & \tabitem red and blue do not alternate \\
          \hline
         $K_2, K_3^{-r, r}, P_3^{-r, r}$, $P_3^{-(r, p)}$ & \tabitem the two blue edges alternate with the two remaining edges \\
          \hline
         $P_3^{-p, p}$ & \tabitem red and blue do not alternate \\
          \hline
         $K_3^{-(r, b)}$, $P_3^{-(r, b)}$ & \tabitem red and blue are consecutive\\

    \end{tabular}
    \caption{An overview of all non-trivial constraints.}
    \label{tab:constraints}
\end{table}

\noindent The \textsc{Alternation-Constrained Planarity} (\acp) problem has as input a graph~$H$ with alternation constraints and asks whether~$H$ is feasible.
By~\cref{lem:reduction-realizability-acp}, there is a linear-time reduction from \SATR with~$\lcc(A) \le 3$ to~\acp.
Next, we further reduce \acp to \acpB, which is the restriction of \acp to instances for which every connected component of $H$ is 2-connected.

\begin{lemma}
    \label{lem:reduction-to-2-connected}
    There is a linear-time algorithm that either recognizes that an instance~$H$ of \acp is a no-instance or computes an equivalent instance~$H'$
   of~\acpB.
\end{lemma}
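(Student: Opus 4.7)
The plan is to use the block-cut-vertex decomposition of each connected component of $H$, which can be computed in linear time. The key structural fact we will exploit is the following: in any planar embedding, the edges incident to a cut vertex $v$ that belong to the same block form a contiguous arc in the rotation at $v$, the blocks can be arranged in any cyclic order around $v$, and each block can be flipped independently. In particular, two edges lying in distinct blocks cannot alternate around $v$. This observation severely constrains how an alternation constraint at a cut vertex can be satisfied and is exactly what motivates the introduction of the $C^{-c}$, $C^{-c,c'}$, and $C^{-(c,c')}$ constraint types: they encode, from the point of view of a single block, the ``view'' of a global constraint once the remaining incident edges have been relegated to other blocks.

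With the block-cut tree in hand, I would process each cut vertex $v$ that carries a $K_2$-, $K_3$-, or $P_3$-constraint as follows. Since $\deg(v)\le 6$ and there are at most three colors, the number of distinct color-to-block distributions around $v$ is bounded by a constant, so a complete case analysis takes constant time per cut vertex. Each case yields one of the following outcomes: either (i) the distribution is incompatible with the constraint (for instance, if two same-colored edges of a $K_3$-constraint lie alone in a block of their own, they are forced to be consecutive in the rotation, which contradicts the required alternation with the third color present only in a different block), in which case we output \textbf{no}; or (ii) the constraint is replaced, in the block containing $v$ and a strict majority of the incident edges, by the appropriate derived constraint, while the remaining blocks inherit only a trivially satisfied constraint on $v$. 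For example, a $K_3$-constraint split 5+1 across blocks yields a $K_3^{-c}$-constraint on the large block ($c$ being the singleton's color); a split 4+2 with the two edges having distinct colors yields a $K_3^{-(c,c')}$-constraint (the consecutivity in the derived constraint reflects the fact that the two edges of the small block's arc must appear consecutively around $v$); and analogous rules apply to $P_3$- and $K_2$-constraints.

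Once every cut vertex has been processed, I would split each graph at its cut vertices in the standard sense, producing one copy of each cut vertex per block, and carry the newly assigned per-block constraints along. The resulting graph $H'$ is an instance of \acpB by construction. Equivalence follows directly from the block-consecutivity observation: any planar embedding of $H'$ that respects the per-block constraints can be reassembled into a planar embedding of $H$ satisfying the original constraints by freely choosing the cyclic order of blocks around each cut vertex and filling in the remaining (non-block-local) edges as permitted by the derived constraint, and conversely any feasible embedding of $H$ restricts to a feasible embedding of each block respecting the derived constraint.

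The main obstacle is the exhaustive case analysis itself: one must enumerate, for each of $K_3$, $P_3$, and $K_2$, every partition of the (up to six) incident edges among the incident blocks and, for each such partition, decide infeasibility or pin down the exact derived constraint that captures the residual requirement inside the large block. The $P_3$ case is the most delicate because its alternation requirements are asymmetric (red and blue must \emph{not} alternate, but both must alternate with purple), so certain splits that are feasible for $K_3$ are infeasible for $P_3$ and vice versa; in particular, this is why the type $P_3^{-(r,p)}$ collapses to a $K_2$-constraint while $K_3^{-(r,b)}$ and $P_3^{-(r,b)}$ coincide. Checking these reductions carefully, so that no case is overlooked and each derived constraint is the right one, is where the bulk of the verification work lies; the running time bound of $O(n)$ is then immediate from the constant work per cut vertex and the linear-time block-cut-tree computation.
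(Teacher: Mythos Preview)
Your proposal is correct and follows essentially the same approach as the paper: split at cut vertices, observe that edges from different cut components (equivalently, different blocks containing the cut vertex) must be consecutive around~$v$, and perform a constant-time case analysis on the distribution of colored edges among components to either reject or pass an appropriate derived constraint ($C^{-c}$, $C^{-(c,c')}$, or $C^{-c,c'}$) to the large component while leaving the small ones unconstrained. The paper's write-up organizes the case analysis by the split-vector $(|E_1(v)|,\dots,|E_l(v)|)$ and explicitly treats the vectors $(5,1)$, $(4,2)$, $(4,1,1)$, and $(4,1)$ as well as the small cases with $|E_1(v)|\le 3$, which matches the examples you sketch; your acknowledgment that the full case enumeration (especially the asymmetric $P_3$ cases) is where the real work lies is exactly right.
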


\begin{proof}
   Our reduction strategy considers one cut vertex at a time and splits the graph at that vertex into a collection of connected components.  The reduction consists of applying this cut vertex split until all cut vertices are removed or we find out that~$H$ is a no-instance.
Consider an instance~$H$ of~\acp and one of its cut vertices~$v$ with cut components~$H_1, \dots, H_l$.
In the cut components, let every vertex except~$v$ preserve its alternation constraint (if any).
Now the goal is to find out which constraints have to be assigned to the copies of~$v$ in the cut components such that $H$ is a yes-instance if and only if the union of the $H_i$s is a yes-instance.
We denote by~$E(v)$ the edges incident to~$v$ in~$H$ and by $E_i(v)$ the edges incident to~$v$ in~$H_i$, for~$1 \leq i \leq l$.
Without loss of generality assume that~$|E_i(v)| \geq |E_j(v)|$ for~$1 \leq i < j \leq l$. We encode the distribution of edges from~$E(v)$ among the cut components
as a \emph{split-vector}~$(|E_1(v)|, |E_2(v)|, \dots, |E_l(v)|)$.  

If~$v$ has no alternation constraint,~$H$ admits a planar embedding that satisfies all alternation constraints
if and only if each cut component~$H_i$ with~$i = 1, \dots, l$ does and hence all copies of~$v$ remain unconstrained.
Otherwise~$v$ has an alternation constraint~$C$.
Observe that this implies~$|E(v)| \leq 6$ and thus the edges in~$E(v)$ are distributed among at least two and at most six cut components. 
Note that the edges belonging to two different cut components cannot alternate around~$v$ without resulting in a crossing.
Thus~$H$ is a no-instance if~$C \in \{K_3, K_3^{-r}, K_2\}$ and there are two cut components containing a pair of edges of the same color from~$E(v)$, respectively.
If $C = P_3$, the same holds if one cut component contains both purple edges whereas a distinct cut component contains both red or both blue edges. 
In the following, we assume that the above does not apply.
Now we consider cases based on the split-vectors.

\begin{figure}[t]
    \centering
    \includegraphics[width= \textwidth]{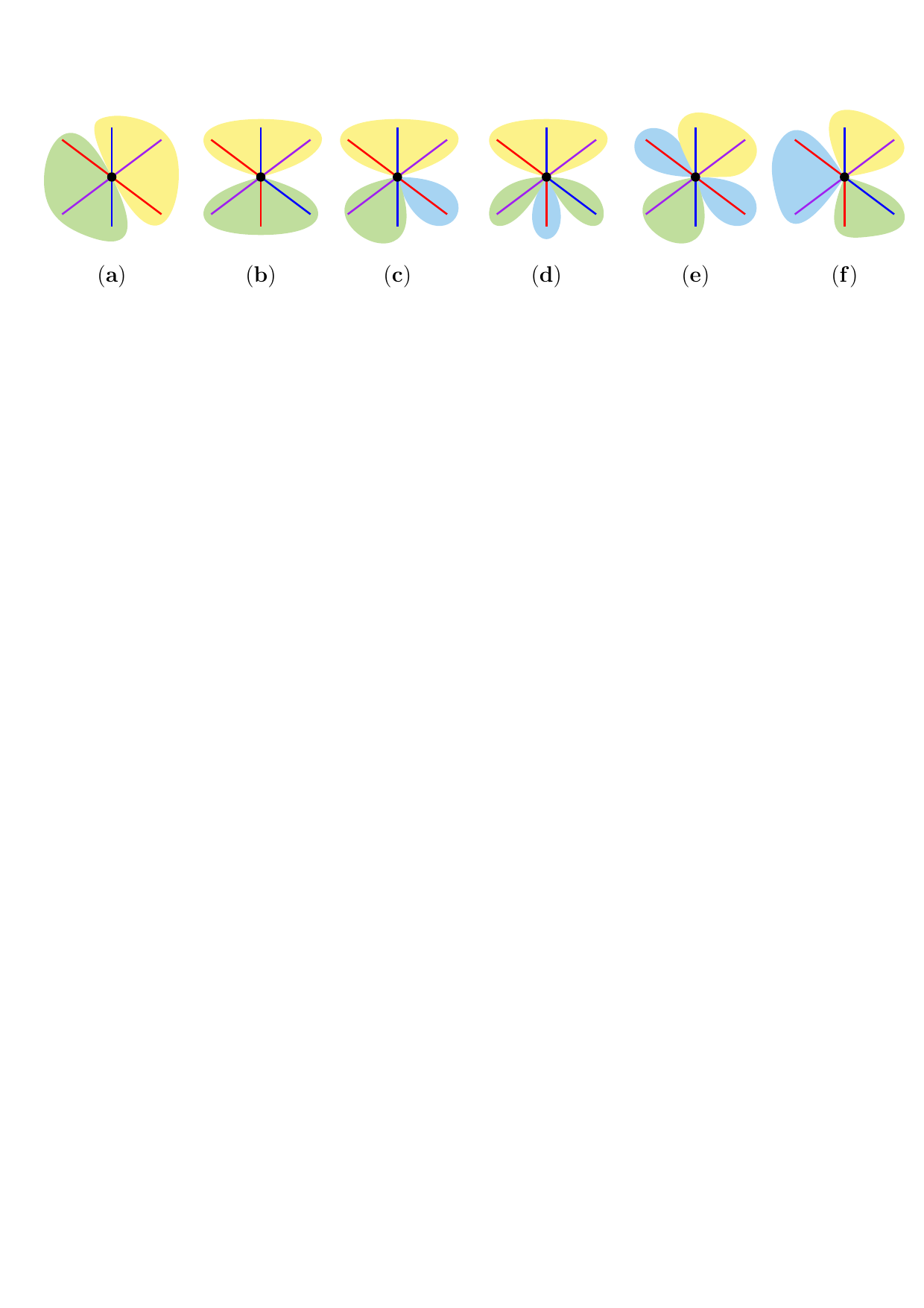}
    \caption{Examples of suitable merges of cut components satisfying a $(a)$~$K_3$-constraint in case $(3, 3)$, 
    $(b)$ $P_3$-constraint in case $(3, 3)$, $(c)$ $K_3$-constraint in case~$(3, 2, 1)$, $(d)$ $P_3$-constraint in case~$(3, 2, 1)$,
    $(e)$ $K_3$-constraint in case~$(2, 2, 2)$,~$(f)$~$P_3$-constraint in case $(2, 2, 2)$.}
    \label{fig:trivial-cases}
\end{figure}

\subparagraph{Case: $|E_1(v)| \leq 3$.}
Note that if $|E_i(v)| = 1$ for every $i > 1$, then we can always arrange the cut components around~$v$ such that $C$ is satisfied.
Now assume that at least two of the cut components contain more than one edge from $E(v)$; i.e., $E_2(v) \geq 2$. We are allowed to $(i)$ mirror the order of the edges in a cut component around~$v$ and to $(ii)$ insert the ordered edges in a cut component consecutively into the 
order of the edges in a different cut component.
It is easy to see that if there are only two cut components $H_1, H_2$,
 we get all cyclic orders of edges around~$v$ where no pair of edges of $H_1$ alternates with a pair of edges from $H_2$; see \cref{fig:trivial-cases}$(a-b)$ for examples.
 Next observe that the same holds if there are more cut components and all but~$H_1, H_2$ contain exactly one edge; see~\cref{fig:trivial-cases}$(c-d)$ for examples.
 It remains to consider the split-vector~$(2, 2, 2)$; i.e.,~$v$ has degree 6 and $C \in \{K_3, P_3\}$. In this case, we get all cyclic orders of edges around~$v$ where no pair of edges of a cut component alternates with a pair of edges from a different cut component. 
 Consider~$C = P_3$. Recall that we assume that no cut component contains both purple edges whereas a distinct cut component contains both red or both blue edges.
 If one component contains both red (or both blue) edges, then the remaining components both contain one blue and one purple edge. Since red has to alternate with purple but must not alternate with blue,~$C$ cannot be satisfied without alternating cut components; i.e., $H$ is a no-instance.
 In all remaining cases with split-vector~$(2, 2, 2)$,
 we can always arrange the cut components around~$v$ such that $C$ is satisfied (see \cref{fig:trivial-cases}\col{$(e)$} for an example), since we assume that there are no two cut components containing a pair of edges of the same color from~$E(v)$, respectively.
 In all positive cases, it suffices to leave each copy of~$v$ unconstrained.
In the following, we consider the remaining split-vectors with $|E_1(v)| \geq 4$.

\subparagraph{Case: $(5, 1)$.}
Let~$C \in \{K_3, P_3\}$ be the constraint of $v$ and let~$c$ be the color of the edge of $E(v)$ in $H_2$.
To merge embeddings of $H_1$ and $H_2$ to a planar embedding of~$H$ such that the $C$-constraint is satisfied, it is necessary that the embedding of~$H_1$ allows to insert an edge of color~$c$ such that the~$C$-constraint is satisfied. Thus it is necessary that the order of edges around~$v$ in~$H_1$ satisfies a~$C^{-c}$-constraint.
Note that if the $C^{-c}$-constraint is satisfied, it is guaranteed that the embeddings of $H_1$ and $H_2$ can be merged such that the original~$C$-constraint is satisfied.
Thus it is necessary and sufficient
to equip the copy of~$v$ in~$H_1$ with a $C^{-c}$-constraint whereas the copy of~$v$ in $H_2$ remains unconstrained.

\subparagraph{Case: $(4, 2)$.}
Let~$C \in \{K_3, P_3\}$ be the constraint of $v$ and let~$c, c'$ be the colors of the edges of $E(v)$ in $H_2$.
To merge embeddings of $H_1$ and $H_2$ to a planar embedding of~$H$ such that the $C$-constraint is satisfied, it is necessary that the embedding of~$H_1$ allows to insert two edges of color~$c, c'$ consecutively, such that the~$C$-constraint is satisfied. Thus, it is necessary that the order of edges around~$v$ in~$H_1$ satisfies a~$C^{-(c, c')}$-constraint.
Note that if the $C^{-(c, c')}$-constraint is satisfied, the embeddings of $H_1$ and $H_2$ can always be merged such that the original $C$-constraint is satisfied.
Thus it is necessary and sufficient
to equip the copy of~$v$ in $H_1$ with a $C^{-(c, c')}$-constraint whereas the copy of~$v$ in~$H_2$ remains unconstrained.

\subparagraph{Case: $(4, 1, 1)$.}
Let~$C \in \{K_3, P_3\}$ be the constraint of $v$ and let~$c, c'$ be the colors of the edges of $E(v)$ in $H_2$ and $H_3$, respectively.
To merge embeddings of $H_1$ and $H_2$ to a planar embedding of~$H$ such that the $C$-constraint is satisfied, it is necessary that the embedding of~$H_1$ allows to insert two edges of color~$c, c'$, such that the~$C$-constraint is satisfied. Thus it is necessary that the order of edges around~$v$ in~$H_1$ satisfies a~$C^{-c, c'}$-constraint.
Note that if the $C^{-c, c'}$-constraint is satisfied, it is guaranteed that the embeddings of $H_1$ and $H_2$ can be merged such that the original $C$-constraint is satisfied.
Thus it is necessary and sufficient
to equip the copy of~$v$ in $H_1$ with a $C^{-c, c'}$-constraint whereas the copies of~$v$ in $H_2$ and $H_3$ remain unconstrained.

\subparagraph{Case: $(4, 1)$.}
Let~$C^{-c}$ with~$C \in \{K_3, P_3\}$ be the constraint of $v$ and let~$c'$ be the color of the edge in $E_2(v)$.
To merge embeddings of $H_1$ and $H_2$ to a planar embedding of~$H$ that satisfies the $C^{-c}$-constraint, it is necessary that the embedding of~$H_1$ allows to insert an edge of color~$c$, such that the~$C^{-c, c'}$-constraint is satisfied. Thus it is necessary that the order of edges around~$v$ in~$H_1$ satisfies a~$C^{-c, c'}$-constraint.
Note that if the $C^{-c, c'}$-constraint is satisfied, the embeddings of $H_1$ and $H_2$ can always be merged such that the original~$C^{-c}$-constraint is satisfied.
Thus it is necessary and sufficient
to equip the copy of~$v$ in $H_1$ with a~$C^{-c, c'}$-constraint whereas the copy of~$v$ in $H_2$ remains unconstrained.

\smallskip
\noindent
We may assume that after a linear-time preprocessing every edge in~$H$ is labeled with~the block it belongs to.
Then for a cut vertex~$v$ a split as described above takes~$O(\deg(v))$-time.
When no cut vertex is left, we return the~obtained constrained~blocks.
\end{proof}

\subparagraph{Algorithm for the Embedding Problem.}
We define a more general problem \textsc{General Alternation-Constrained Planarity} (\acpG) whose input is a  graph~$H$ where vertices of degree 4, 5, or 6 may be equipped with an alternation constraint or with a (synchronized) PQ-tree ({\em but not both}).
The question is whether~$H$ admits a planar embedding such that all alternation constraints are satisfied (i.e., $H$ is feasible)
and the order of edges around a vertex with a PQ-tree~$B$ is compatible with~$B$.
The \acpGB problem is the restriction of \acpG to input graphs whose connected components are 2-connected.
Clearly, every instance of (\textsc{2-connected})~\acp is an instance of (\textsc{2-connected})~\acpG. 
For our purpose, however, it will turn out that PQ-tree constraints are easier to handle.
Thus, given an instance of~\acp we aim to construct an equivalent instance of~\acpG, where as many alternation constraints as possible are replaced by PQ-trees.
In particular, alternation constraints of degree-$4$ vertices can be replaced by the PQ-trees shown in \cref{fig:pq-reduction}. 

\begin{lemma}
\label{lem:reductionGACP}
 Given an instance of \textsc{2-connected} \acpG containing a \mbox{degree-$4$} vertex~$v$ with alternation constraint~$C$, 
  we obtain an equivalent instance by removing $C$ and equipping~$v$ with a suitable (synchronized) PQ-tree.
\end{lemma}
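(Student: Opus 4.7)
The plan is to enumerate the alternation constraints $C$ that can occur on a degree-$4$ vertex $v$ and, for each one, exhibit a (possibly synchronized) PQ-tree over the edges incident to $v$ whose admissible cyclic orders coincide exactly with the cyclic orders satisfying $C$. Since $C$ constrains only the rotation at $v$, and the substituted PQ-tree does the same while leaving every other constraint of the instance untouched, matching these two sets of admissible rotations immediately yields the claimed equivalence between the two \acpGB-instances.

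From the table of constraints and the discussion preceding the lemma, the non-trivial alternation constraints that can appear at a degree-$4$ vertex are, up to color symmetries, three in number. The \emph{$K_2$-type} family (i.e., $K_2$, $K_3^{-r,r}$, $P_3^{-r,r}$, $P_3^{-(r,p)}$) requires two distinguished edges to alternate around $v$ with the remaining two; the \emph{$P_3^{-p,p}$}-constraint on two red and two blue edges requires the two reds (equivalently, the two blues) to be consecutive; and the \emph{$K_3^{-(r,b)}$}/\emph{$P_3^{-(r,b)}$}-constraint on one red, one blue, and two purple edges requires the two purples to be consecutive. For the two consecutivity cases the admissible cyclic orders are precisely those in which a prescribed pair of edges is neighboring, which is faithfully captured by an unrooted PQ-tree consisting of a root P-node with three children, namely a Q-node grouping the prescribed pair together with the other two edges as individual leaves. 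A direct enumeration confirms that this tree admits exactly the desired set of cyclic orderings.

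The main obstacle is the alternating case, as PQ-trees naturally encode consecutivity rather than alternation. I resolve it by observing that, on four labeled edges $r_1,r_2,b_1,b_2$, the rotations satisfying the constraint amount to only two cyclic orderings, $(r_1,b_1,r_2,b_2)$ and its reflection $(r_1,b_2,r_2,b_1)$. These are exactly the cyclic orderings admitted by the PQ-tree obtained from a common root P-node with two Q-node children $(r_1,b_1)$ and $(r_2,b_2)$ that are declared to be \emph{synchronized}, so that they must be flipped simultaneously: joint flipping produces the two alternating orderings, whereas without synchronization one would additionally obtain two non-alternating orderings that violate the constraint. Once the constructed (synchronized) PQ-tree has been shown, for each of the three cases, to admit exactly the cyclic orderings satisfying $C$, the substitution is equivalent and the lemma follows.
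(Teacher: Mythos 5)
Your proposal is correct and follows essentially the same route as the paper: enumerate the possible alternation constraints on a degree-$4$ vertex ($K_2$-type, $P_3^{-p,p}$, $K_3^{-(r,b)}$) and exhibit, for each, a (synchronized) PQ-tree over the four incident edges whose admissible cyclic orders coincide with those satisfying the constraint. The consecutivity cases match the paper's treatment. The one place you diverge is the alternation ($K_2$-type) case: you build a synchronized pair of two-leaf Q-nodes and argue that the synchronization prunes the two non-alternating orders. This is correct, but it introduces a synchronized PQ-tree where none is needed — on four labeled elements the two alternating cyclic orders are precisely an order and its reversal, so a single Q-node with all four leaves in alternating order already captures exactly the admissible rotations. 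The paper uses this simpler Q-node, which has a practical advantage downstream, since the algorithm's final step has to track flips of synchronized Q-nodes via a 2-SAT instance, and it is preferable not to create synchronization constraints gratuitously.
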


\begin{proof}
    Let $C$ be the constraint of~$v$.
    We distinguish cases based on~$C$.

 \subparagraph{Case:~$C$ is a~$K_2$-constraint.}
 The constraint is fulfilled if and only if red and blue alternate, which can be 
 represented by a single Q-node as shown 
 in \cref{fig:pq-reduction}\col{$(a)$}.

   \subparagraph{Case:~$C$ is a~$P_3^{-p, p}$-constraint.}
 The constraint is fulfilled if and only if red and blue do not alternate. Hence it is necessary and sufficient that the two red edges appear consecutively, which can be enforced by the PQ-tree 
 in~\cref{fig:pq-reduction}\col{$(b)$}.

\subparagraph{Case:~$C$ is a~$K_3^{-(r,b)}$-constraint.}
The constraint is fulfilled if and only if a red and a blue edge can be inserted consecutively, such that all colors pairwise alternate. 
This is the case if and only if the two purple edges appear consecutively, which can be enforced by
 the PQ-tree in \cref{fig:pq-reduction}\col{$(c)$}.
\end{proof}

Hence we may assume from now on that no vertex with an alternation constraint in $H$ has degree less or equal to~4; i.e., all these vertices have degree~5 or~6.

\begin{figure}[t]
\centering
\includegraphics[page=13]{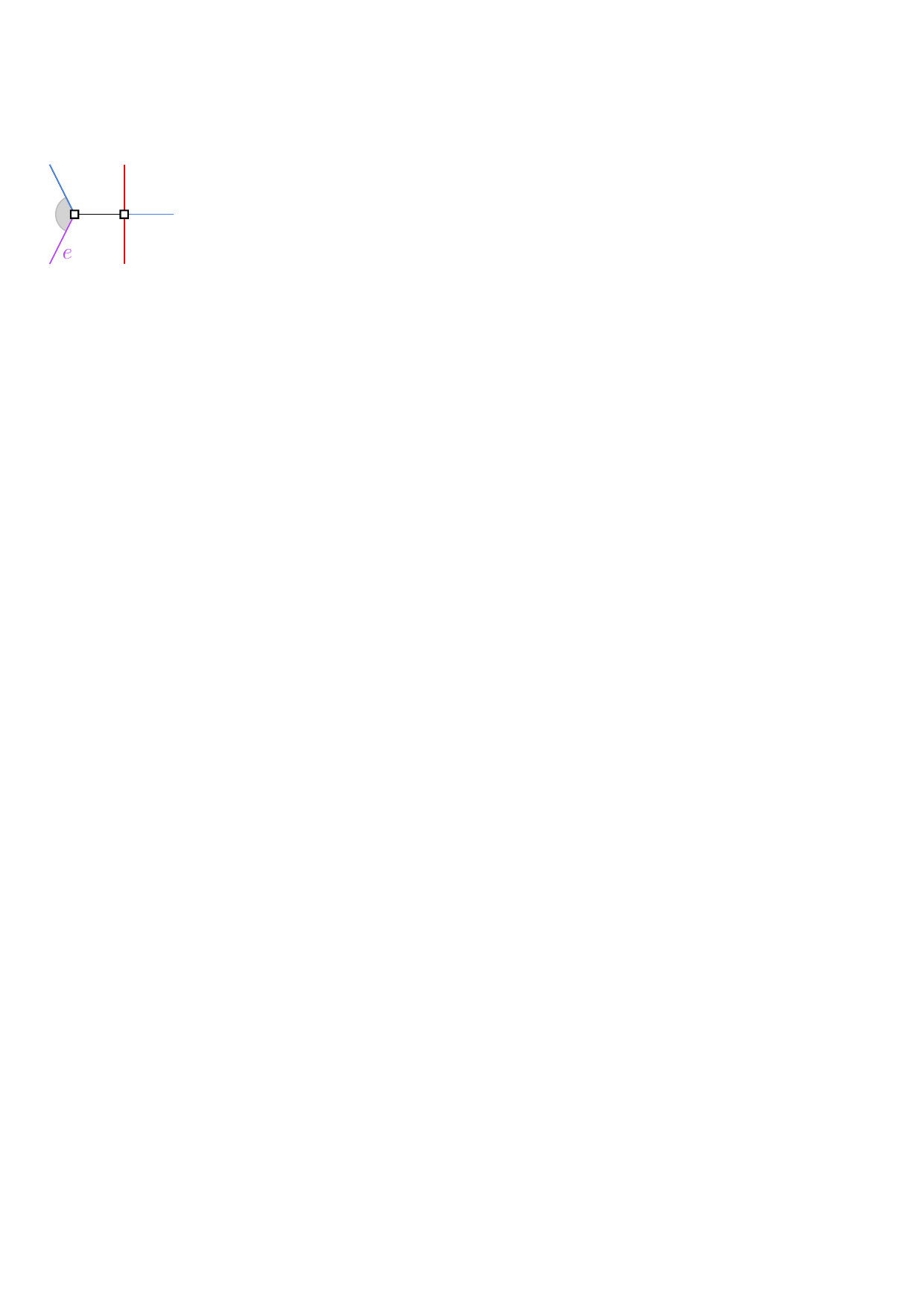}
\caption{The PQ-trees representing alternation constraints of degree-4 vertices.
(a) $K_2$-constraint,
(b) $P_3^{-p, p}$-constraint, and
(c) $K_3^{-(r,b)}$-constraint.}
    \label{fig:pq-reduction}
\end{figure}

Let~$v$ be a vertex with alternation constraints. We call two edges~$e, f$
incident to~$v$ a \emph{consecutive edge pair}, if they are consecutive (around~$v$) in {\em every} planar embedding of~$H$ that satisfies all constraints.
We show that, in most cases, an alternation constraint at a vertex incident to a consecutive edge pair can be replaced by a PQ-tree.

\begin{figure}[t]
    \centering
    \includegraphics[page=4]{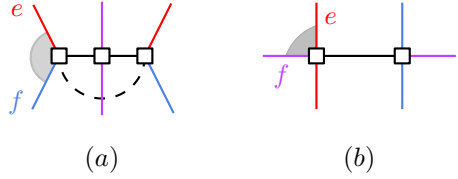}
    \caption{The (synchronized) PQ-trees for degree-6 vertices with consecutive edge pair and~$(a)$~$K_3$-constraint,~$(b), (c)$ $P_3$-constraint.}
      \label{fig:consec6}
   \end{figure}

     \begin{figure}[t]
    \centering
    \includegraphics[page=9]{figs/replacements.pdf}
    \caption{The (synchronized) PQ-trees for degree-5 vertices with consecutive edge pair and~$(a -~c)$~$P_3^{-p}$-constraint, $(d)$~$K_3^{-r}$-constraint.}
    \label{fig:consec5P3}
    \end{figure}

\begin{lemma}
    \label{lem:consecConstraints}
    Let~$H$ be an instance of \textsc{2-connected}~\acpG and let $v$ be
    a vertex in~$H$ with alternation constraint~$C$  incident to a consecutive edge pair.
    Then~$H$ is either a no-instance, or we obtain an equivalent instance~$H'$ by removing~$C$ and equipping~$v$ with a suitable (synchronized) PQ-tree if
    \begin{enumerate}[(i)]
        \item $C \neq K_3^{-r}$ or
        \item $C = K_3^{-r}$ and one of the consecutive edges is red or
        \item $C = K_3^{-r}$ and $v$ is incident to two distinct consecutive edge-pairs.
    \end{enumerate}
\end{lemma}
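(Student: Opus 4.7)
The plan is to perform a case analysis on the alternation constraint~$C$ of $v$ and on the colors of the edges in the consecutive edge pair~$(e, f)$. By \cref{lem:reductionGACP} we may assume every vertex with an alternation constraint in $H$ has degree~5 or~6, so $C \in \{K_3, P_3, K_3^{-r}, P_3^{-p}\}$. For each case, I would first characterize the cyclic orders of the edges around~$v$ that satisfy~$C$, then intersect this family with the adjacency requirement on~$e$ and~$f$, and finally either exhibit a structural obstruction (so $H$ is a no-instance) or a (synchronized) PQ-tree whose admissible cyclic orders coincide exactly with this restricted family.

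For~$C = K_3$, the valid rotations have the rigid color pattern~$(r, b, p, r, b, p)$ up to cyclic rotation and reversal, so no two same-colored edges are ever consecutive: a monochromatic pair~$(e, f)$ makes~$H$ a no-instance. Otherwise the adjacency of~$e$ and~$f$ pins the pattern up to a single global flip, yielding the synchronized PQ-tree of \cref{fig:consec6}$(a)$. For~$C = P_3$, the valid rotations are cyclic rotations and reversals of~$(r, p, r, b, p, b)$; monochromatic pairs are again infeasible, and the remaining sub-cases (up to the red-blue color symmetry of $P_3$) are captured by the PQ-trees of \cref{fig:consec6}$(b)$ and~$(c)$. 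For~$C = P_3^{-p}$, the valid rotations are the four insertions of the purple edge into the pattern~$(r, r, b, b)$; a brief enumeration over the color combinations of~$(e, f)$ yields either infeasibility or one of the PQ-trees in \cref{fig:consec5P3}$(a\text{--}c)$.

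The case~$C = K_3^{-r}$ is where hypotheses~(ii) and~(iii) come into play. Its valid rotations are the four insertions of the single red edge into the alternating pattern~$(b, p, b, p)$. Under hypothesis~(ii), one of~$e, f$ is red, so forcing the pair to be adjacent pins the red edge next to a specified non-red neighbor; the only remaining freedom is which blue follows which purple in the non-red sub-rotation, which is captured by \cref{fig:consec5P3}$(d)$. Under hypothesis~(iii), a second consecutive pair at~$v$ provides additional rigidity: a short sub-case analysis on the colors involved in the two pairs either detects infeasibility (for instance, if the combined adjacencies would force two blues or two purples to be consecutive, violating blue-purple alternation) or leaves a restricted family of rotations still representable by a PQ-tree.

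I expect the main obstacle to be the excluded case, namely $C = K_3^{-r}$ with $\{e, f\}$ consisting of one blue and one purple edge and no second consecutive pair at~$v$; by enumeration, the set of cyclic orders satisfying~$C$ with~$e$ and~$f$ adjacent has six elements, and this family lacks the nested/consecutive structure needed for PQ-tree representability, which is precisely why the lemma restricts to hypotheses~(i)--(iii) and postpones this case to later machinery. Under hypotheses~(i)--(iii), the correctness of each replacement follows by verifying that a cyclic order of the edges around~$v$ satisfies~$C$ and makes~$e$ and~$f$ adjacent if and only if it is admissible for the constructed PQ-tree, and the construction runs in~$O(\deg(v))$ time per vertex.
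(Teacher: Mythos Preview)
Your approach is essentially the paper's: a case split on~$C$ and on the colors of the consecutive pair, followed by an explicit (synchronized) PQ-tree for each surviving sub-case, with the $K_3^{-r}$ case~(iii) handled by reducing to~(ii) when the two pairs overlap or when one contains the red edge, and by a synchronized tree when the two blue--purple pairs are disjoint. One slip to fix: your description of~$P_3^{-p}$ as ``the four insertions of the purple edge into~$(r,r,b,b)$'' is too permissive---only the two insertions where the single purple sits strictly between the two reds or strictly between the two blues admit a completion to a full~$P_3$ order (the other two placements make it impossible for the inserted second purple to alternate with both reds and blues), and this sharper characterization is what actually yields the PQ-trees of \cref{fig:consec5P3}\col{$(a)$--$(c)$}.
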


\begin{proof}
Let $e, f$ be a consecutive edge pair. 
Note that if~$\deg(v) = 6$, we may assume that~$e, f$ have distinct colors, since otherwise $H$ is a no-instance as every color has to alternate with at least one other color in order to satisfy a $K_3$- or $P_3$-constraint.

\subparagraph{Case:~$C$ is a~$K_3$-constraint.}
Without loss of generality assume that~$e$ and~$f$ are red and blue, respectively.
Then the orders of edges around~$v$ in which $e$, $f$ appear consecutively and the alternation constraint is satisfied, can be represented by the synchronized PQ-tree shown in \mbox{\cref{fig:consec6}\col{$(a)$}}.

\subparagraph{Case:~$C$ is a~$P_3$-constraint.}
First consider the case that~$e$ and~$f$ are red and blue. Then the synchronized PQ-tree shown in \mbox{\cref{fig:consec6}\col{$(b)$}} represents all allowed orders around~$v$.
Otherwise, if~$e$ is purple we may assume that~$f$ is red and label~$v$ with the PQ-tree shown in~\mbox{\cref{fig:consec6}\col{$(c)$}}.

\subparagraph{Case:~$C$ is a~$P_3^{-p}$-constraint.} 
If~$e$ and~$f$ have the same color, say red, we label~$v$ with the PQ-tree shown in \mbox{\cref{fig:consec5P3}\col{$(a)$}}.
If~$e$ and~$f$ are red and blue, respectively, we label~$v$ with the synchronized PQ-tree shown in \mbox{\cref{fig:consec5P3}\col{$(b)$}}.
Finally, if~$e$ is purple, then~$f$ has a different color, say red. Then we label~$v$ with the PQ-tree shown in \mbox{\cref{fig:consec5P3}\col{$(c)$}}.

\subparagraph{Case:~$C$ is a~$K_3^{-r}$-constraint and one of the consecutive edges is red.} 
In this case~$e$ and~$f$ must be of distinct colors and we may assume without loss of generality that $e$ is red and $f$ is blue. Then we label~$v$ with the PQ-tree shown in \mbox{\cref{fig:consec5P3}\col{$(d)$}}.

\subparagraph{Case:~$C$ is a~$K_3^{-r}$-constraint and $v$ is incident to two distinct consecutive edge pairs.} 
If the red edge is contained in a consecutive pair, we are in the previous case. Hence assume this is not the case.
Note that~$H$ is a no-instance if a consecutive edge pair contains two edges of the same color.
Hence we may assume that both consecutive edge pairs~$\{e,f\}$ and~$\{e',f'\}$ contain a blue and a purple edge. If the two consecutive pairs are disjoint, then we label~$v$ with the synchronized PQ-tree shown in \mbox{\cref{fig:consec5P3}\col{$(e)$}}.  Otherwise, we may assume that~$f=f'$.  But then the two pairs enforce that~$\{e,e',f\}$ is consecutive, and consequently we obtain that the remaining two edges, one of which is red, also form a consecutive pair.  We can thus apply the previous case.
\end{proof}

The overall strategy of the remaining section consists of three steps.
In Step 1 we identify consecutive edge pairs in~$H$ that allow us to replace alternation constraints by PQ-trees with the help of the SPQR-tree of~$H$. Then we show how to compute an equivalent instance that does not contain~$P_3$- or $K_3^{-r}$-constraints. Finally, we end up with an instance whose alternation constraints are all $K_3$-constraints and every vertex with such a constraint appears in the skeletons of exactly two $P$-nodes and one $S$-node in the SPQR-tree.
In Step~2, we handle such constraints by considering them on a more global scale. We show that they form cyclic structures, where either the constraints cannot be satisfied or can be dropped and satisfied irrespective of the remaining solution.  Eventually, we arrive at an instance with only (synchronized) PQ-trees as constraints, which we solve with standard techniques in Step~3.

For the rest of this section let $H$ be an instance of \acpGB and let~$T$ be the SPQR-tree of~$H$. We begin with Step~1 and identify consecutive edge pairs.
Let $\mu$ be a node of~$T$ and let~$v$ be a vertex of its skeleton incident to the virtual edges~$e_1,\dots,e_k$.
Then the \emph{distribution vector} $(d_1, \dots, d_k)$ of~$v$, with $d_i \geq d_{i+1}$ for every~$1 \leq~i <~k$, contains for each virtual edge~$e_i$ the number~$d_i$ of edges from $E(v)$ contained~in~$e_i$.

\begin{lemma}
\label{lem:R-nodes}
    Let $H$ be an instance of \textsc{$2$-connected}~\acpG and let~$T$ be the SPQR-tree of~$H$.
    A vertex~$v$ with alternation constraint~$C$ in $H$ is incident to a consecutive edge pair if~$v$ appears in a skeleton of~$T$ that
    \begin{enumerate}[(i)]
        \item  has a virtual edge that contains exactly two edges from~$E(v)$, 
        \item has a virtual edge that contains all but two edges from~$E(v)$, or
        \item is an $R$-node.
    \end{enumerate}
    Moreover, a vertex~$v$ of degree~$5$ is incident to two consecutive edge pairs if $v$ appears in a skeleton of~$T$ that
    \begin{enumerate}[(i)]
    \setcounter{enumi}{3}
        \item has two virtual edges, each of which contains exactly two edges from~$E(v)$, or
        \item is an $R$-node that has at least four virtual edges incident to~$v$.
    \end{enumerate}
\end{lemma}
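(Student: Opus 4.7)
My plan is to exploit the central SPQR-tree property recalled earlier: any planar embedding of~$H$ induces planar embeddings of every skeleton~$\skel(\mu)$, and conversely the embeddings of the skeletons can be chosen independently. Concretely, for every virtual edge~$\eta$ incident to~$v$ in~$\skel(\mu)$, the edges of~$E(v)$ contained in the pertinent subgraph represented by~$\eta$ must form a contiguous block in the rotation around~$v$ in every planar embedding of~$H$, while the internal order of such a block can be chosen freely. This single observation drives the whole argument.

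For~(i), if a virtual edge contains exactly two edges $e_1,e_2\in E(v)$, they form a block of size two, so $\{e_1,e_2\}$ is consecutive around~$v$ in every planar embedding of~$H$. For~(ii), since the complement of a contiguous block in a cyclic order is again contiguous, the two edges of~$E(v)$ that are \emph{not} in the given virtual edge form a consecutive edge pair.

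For~(iii), I would invoke the additional fact that in an R-node skeleton the cyclic order of virtual edges around any vertex is fixed up to a global reflection. Writing the distribution vector of~$v$ in the R-skeleton as~$(d_1,\dots,d_k)$ with $k\geq 3$ (triconnectivity) and $\sum_i d_i=\deg_H(v)\in\{5,6\}$, I would case-split: if some~$d_i=2$, apply~(i); if some $d_i=|E(v)|-2$, apply~(ii). The only residual distributions are then $(3,1,1,1)$ and $(1,1,1,1,1,1)$ when $\deg_H(v)=6$, and $(1,1,1,1,1)$ when $\deg_H(v)=5$ (the distributions $(3,1,1)$, $(4,1,1)$, and the ones containing a~$2$ are all handled by~(i) or~(ii)). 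In each residual case, the fixed cyclic order forces at least two singleton virtual edges to occupy adjacent positions around~$v$, and the two edges they carry are therefore a consecutive edge pair. The most error-prone step, and the one I view as the main obstacle, is the $(3,1,1,1)$ case: although no individual virtual edge satisfies~(i) or~(ii), the complement of a single position in a cyclic order of four positions is a contiguous run of three, which pigeonholes two singletons into adjacent positions.

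For~(iv), the two size-$2$ virtual edges supply two consecutive pairs that are disjoint by construction. For~(v), an R-node skeleton with at least four virtual edges at a degree-$5$ vertex forces distribution $(2,1,1,1)$ or $(1,1,1,1,1)$: in the former, the size-$2$ virtual edge provides one pair, and the three singleton virtual edges -- which by the cyclic-order argument of~(iii) form a contiguous run of three around~$v$ -- supply a second (disjoint) pair from any two adjacent singletons; in the latter, the fixed cyclic order on all five singleton virtual edges directly yields several distinct consecutive pairs.
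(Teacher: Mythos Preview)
Your proposal is correct and follows essentially the same approach as the paper: both use the SPQR-tree fact that each virtual edge at~$v$ contributes a contiguous block in the rotation, dispose of~(i),~(ii),~(iv) directly, and for~(iii) and~(v) reduce to the residual distribution vectors $(1,1,1,1,1,1)$, $(1,1,1,1,1)$, $(3,1,1,1)$ (resp.\ $(2,1,1,1)$, $(1,1,1,1,1)$) and then exploit the rigid cyclic order of virtual edges in an $R$-node skeleton to locate adjacent singleton virtual edges. The only cosmetic difference is in~(v) for $(2,1,1,1)$: you take one pair from the size-$2$ virtual edge and one from two adjacent singletons, whereas the paper extracts both pairs $\{f,g\},\{g,h\}$ from a run of three consecutive singletons; both variants prove the claim. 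One small remark: your aside that the internal order of a block ``can be chosen freely'' is not used in the argument and is not literally true in the presence of constraints, but this does not affect correctness since only the contiguity direction is needed.
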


\begin{proof}
    It is easy to see that in cases $(i)$ and $(ii)$ there is a consecutive edge pair.
    For~$(iii)$ consider a vertex~$v$ in the skeleton of an $R$-node and assume that neither~$(i)$ nor $(ii)$ applies.
    Recall that~$v$ has degree~5 or 6 and
that the vertices in the skeleton of an~$R$-node have degree at least~3; i.e., the edges of~$E(v)$ are distributed among at least three virtual edges.
   Thus it remains to consider the distribution vectors $(1, 1, 1, 1, 1, 1), (1, 1, 1, 1, 1), (3, 1, 1, 1)$.
  Recall that the order of the virtual edges around a vertex in the skeleton of an $R$-node
is fixed up to reversal. 
In all three cases, there are two edges~$f, g$ appearing in two virtual edges~$e_f, e_g$, respectively, such that 
$e_f, e_g$ both contain exactly one edge from $E(v)$ and appear consecutively in every planar embedding of the skeleton of the $R$-node.
Thus, $\{f, g\}$ is a consecutive edge pair.

It is easy to see that in case~$(iv)$ there are two consecutive edge pairs.
For~$(v)$ consider a vertex~$v$ in the skeleton of an $R$-node~$\mu$ and assume that~$(iv)$ does not apply.
Thus it remains to consider the distribution vectors~$(2, 1, 1, 1)$ and~$(1, 1, 1, 1, 1)$. 
Recall  that the order of the virtual edges around a vertex in the skeleton of an $R$-node is fixed up to reversal. In both cases there are three virtual edges~$e, e', e''$, each of which contains exactly one edge from~$E(v)$ that appear consecutively around~$v$ with~$e'$ between $e$ and $e''$ in every planar embedding of the skeleton of~$v$. 
Let $f, g, h$ be the corresponding edges in~$E(v)$ such that~$f \in e, g \in e', h\in e''$. Then $\{f, g\}$ and $\{g, h\}$ are two consecutive edge pairs.
\end{proof}

Since we immediately replace alternation constraints by PQ-trees whenever we find consecutive edge pairs, we assume from now on that no vertex with alternation constraint satisfies one of the conditions of \cref{lem:R-nodes}.  We will show later that this implies that every vertex that has an alternation constraint shows up in the skeleton of a $P$-node.  Thus, in the following we focus on vertices~$v$ with alternation constraints that show up in the skeleton of a P-node and show how to get rid of their alternation constraints.  By the assumption that none of these vertices satisfies the conditions of \Cref{lem:R-nodes}, we can exclude distribution vectors that contain two 2s as well as distribution vectors that contain a 2 and that sum to 6. It thus remains to consider vertices that appear in $P$-nodes with the distribution vectors~$(1, 1, 1, 1, 1, 1), (1, 1, 1, 1, 1), (3, 1, 1, 1), (3, 1, 1)$ and~$(2, 1, 1, 1)$.
For each of these cases we have to distinguish three subcases, as the second pole~$u$ of the $P$-node is either unconstrained, has a PQ-constraint, or also has an alternation-constraint. In the following we show that in most cases we can compute an equivalent instance with fewer alternation constraints in constant time and thus may assume that these cases do not occur; see~\cref{fig:table} for an overview.
We start by showing that in the case of a $(1, 1, 1, 1, 1, 1)$ or a $(1, 1, 1, 1, 1)$ distribution vector we can get rid of the alternation constraint~$C$ of~$v$ as it is either always possible to reorder the children of~$\mu$ according to~$C$ within a realization of~$H$ without~$C$, or~$H$ without~$C$ (and thus~$H$) is not realizable.

\begin{figure}
    \centering
    \includegraphics[]{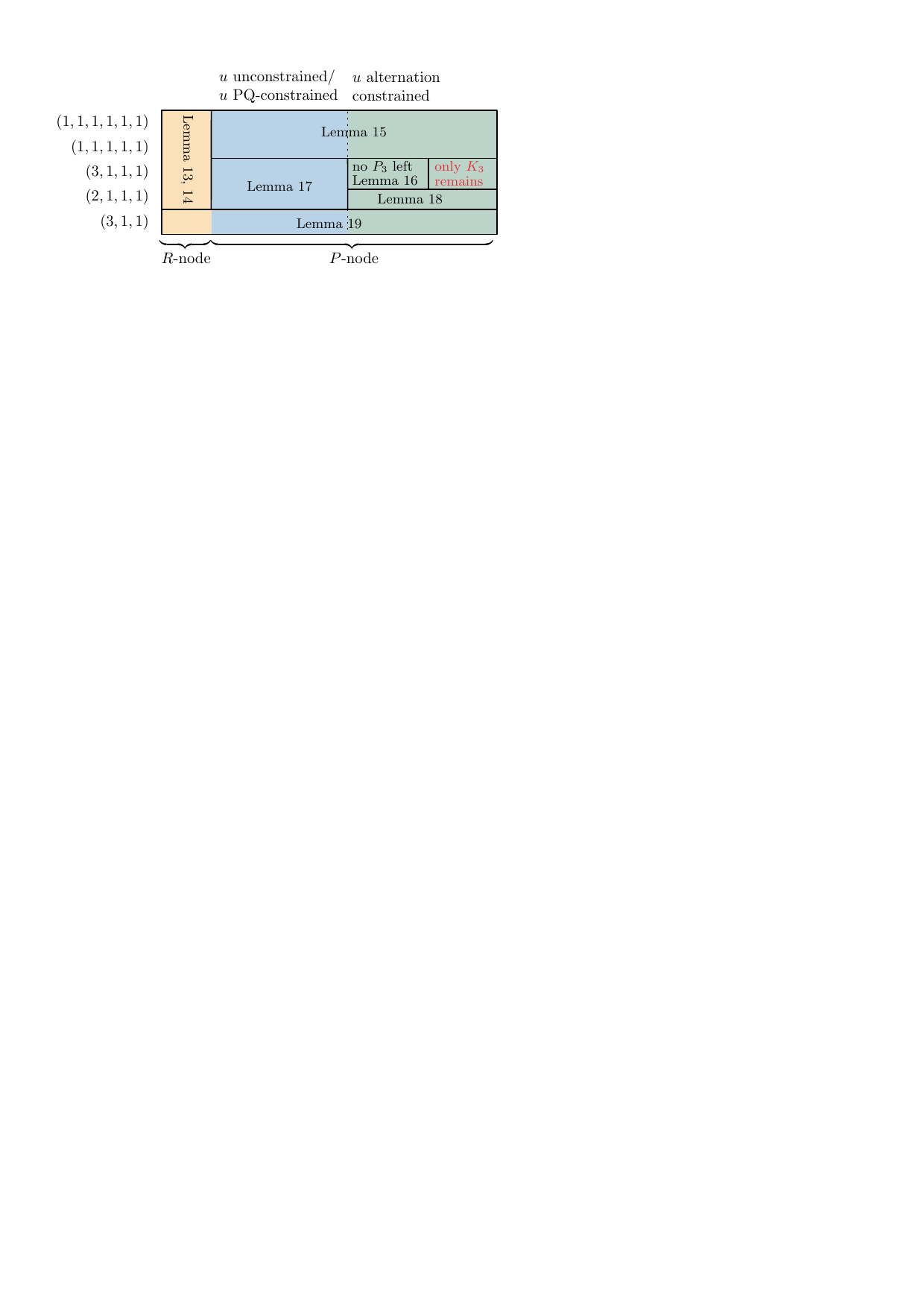}
    \caption{Overview of which cases are handled by which lemma.}
    \label{fig:table}
\end{figure}

\begin{lemma}
  \label{lem:(11111)}
 Let~$v$ be a vertex in~$H$ with alternation constraint~$C$ that appears in a $P$-node~$\mu$ in $T$ with distribution vector $(1, 1, 1, 1, 1, 1)$ or $(1, 1, 1, 1, 1)$. There is a constant-time algorithm that either recognizes that~$H$ is a no-instance, or
  computes an equivalent instance~$H'$ with fewer alternation constraints by removing $C$ and possibly other constraints.
\end{lemma}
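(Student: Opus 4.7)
The plan exploits the combinatorial flexibility of the $P$-node~$\mu$. Let $k \in \{5, 6\}$ be the number of virtual edges of $\mu$, and let $u$ denote its other pole. Since the distribution vector at~$v$ is all-ones, each of the $k$ virtual edges of $\mu$ contains exactly one edge of $E(v)$, inducing a canonical bijection between these virtual edges and $E(v)$. Because $\mu$ is a $P$-node, its virtual edges may be cyclically permuted arbitrarily, and this permutation is independent of every other embedding choice in $H$ except for those at~$u$. Consequently, every cyclic order of $E(v)$ around~$v$ is realizable, and the validity of~$C$ at $v$ is determined entirely by the permutation selected at~$\mu$.

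The algorithm enumerates, in $O(1)$ time, all $k!/2 \le 360$ cyclic permutations of $\mu$'s virtual edges and retains the set $\Pi$ of those whose induced cyclic order at~$v$ satisfies~$C$. If $\Pi = \emptyset$, then no embedding of~$H$ can satisfy~$C$, and we declare~$H$ a no-instance. Otherwise, we argue that $C$ (together with further constraints in certain subcases) may be removed. The key is the locality of $P$-node permutations: modifying the permutation at $\mu$ affects only the cyclic orders at $v$ and $u$, leaving every other vertex and every internal ordering within the virtual edges untouched. Hence, given any embedding of the reduced instance, we may overwrite $\mu$'s current permutation with an element of~$\Pi$ that is compatible with the constraint (if any) at~$u$, thereby recovering an embedding of~$H$.

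We complete the argument by case analysis on the constraint carried by~$u$. If~$u$ is unconstrained, any element of $\Pi$ can be installed, and~$C$ is removed immediately. If $u$ has a (possibly synchronized) PQ-constraint, we check whether some permutation in $\Pi$ admits intra-block orderings compatible with the PQ-tree at $u$; this is again an $O(1)$ check because $\deg(u) \le 6$, and upon success we remove~$C$. If $u$ carries an alternation constraint $C'$, we brute-force search for a permutation simultaneously satisfying $C$ at~$v$ and $C'$ at~$u$, and on success remove both $C$ and $C'$; a failure in this case yields a no-instance answer. The principal obstacle is this last scenario, where the chosen permutation at $\mu$ must reconcile constraints at both poles; however, since all relevant objects have constant size, the required compatibility can be certified by a bounded enumeration, delivering the claimed constant-time algorithm.
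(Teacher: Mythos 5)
The proposal follows the same high-level plan as the paper's proof: observe that the $P$-node $\mu$ can realize any cyclic order of $E(v)$, then case on whether the other pole $u$ is unconstrained, carries a PQ-tree, or carries an alternation constraint, and resolve each case by a constant-size brute force. This matches the paper. However, there is a gap in the reconstruction step for the PQ-tree case. You write that ``modifying the permutation at~$\mu$ affects only the cyclic orders at $v$ and $u$, leaving every other vertex and every internal ordering within the virtual edges untouched,'' and accordingly your reconstruction only overwrites the permutation at~$\mu$. But when $\deg(v)=5$ and $u$ carries a PQ-tree with $\deg(u)=6$, the distribution vector of~$u$ in~$\mu$ is $(2,1,1,1,1)$: one virtual edge $e_1$ contributes two consecutive edges of $E(u)$, and their relative order at~$u$ is an \emph{internal} choice of the subgraph represented by~$e_1$. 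Your feasibility check (whether some permutation in $\Pi$ ``admits intra-block orderings compatible with the PQ-tree at~$u$'') existentially quantifies over that internal choice, but your reconstruction leaves it untouched; so it may produce a rotation at $u$ incompatible with~$B$. The paper closes exactly this hole by noting that one may additionally ``flip the embedding of the subgraph represented by~$e_1$,'' which is harmless since PQ- and alternation constraints are reversal-invariant. You should make this flip explicit and retract the blanket ``internal orderings untouched'' claim. Two smaller omissions: the paper shows that if $u$ has an alternation constraint then necessarily $\deg(u)=\deg(v)$ (otherwise $u$ would have a consecutive edge pair, contradicting the standing assumption after \cref{lem:consecConstraints,lem:R-nodes}), and it assumes w.l.o.g.\ that $u$'s PQ-tree already encodes the consecutivities induced by $\mu$'s virtual edges; neither omission causes an error in your brute force, but both streamline the argument.
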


\begin{proof}
  Let $u$ be the pole of $\mu$ distinct from~$v$.
   If~$u$ is unconstrained let $H'$ be the instance we obtain by removing $C$ from $H$.
   Since every solution for $H'$ remains a valid solution if we reorder the children of~$\mu$ according to~$C$, $H$ and $H'$ are equivalent.
  Now assume that~$u$ also has an alternation constraint~$C'$. 
  Recall that $u$ and $v$ have either degree 5 or 6. If~$\deg(u) \neq \deg(v)$, say $\deg(u) = 6, \deg(v) = 5$, then 
  there is a virtual edge in the skeleton of~$\mu$ that contains exactly two edges $e, f$ from $E(u)$; a contradiction 
to the assumption that no degree-6 vertex with alternation constraint is incident to a consecutive edge pair. Thus~$\deg(u) = \deg(v)$ and we check whether
one of the constantly many orderings of the virtual edges satisfies both~$C$ and~$C'$. 
In the negative case, $H$ is a no-instance;
in the positive case, let $H'$ be the instance we obtain by removing both ~$C$ and~$C'$ from $H$.
Clearly, every solution of $H$ is a solution for~$H'$.
Note that every solution for~$H'$ remains valid if we reorder the children of~$\mu$ according to one of the orders that satisfy both $C$ and~$C'$. Thus $H$ and~$H'$ are~equivalent.

It remains to consider the case that~$u$ has a PQ-tree~$B$.
We may assume without loss of generality that $B$ contains the consecutivities induced by the virtual edges in the skeleton of~$\mu$.
If~$\deg(u) = \deg(v)$, we check whether
one of the constantly many orderings of the virtual edges satisfies~$C$ and is compatible with~$B$.
In this case, similarly to the discussion above, we get an equivalent instance~$H'$ by removing~$C$.
Otherwise,~$\deg(v) = 5$ and $\deg(u) = 6$, i.e.,~$v$ has distribution vector~$(1, 1, 1, 1, 1)$ in~$\mu$, whereas~$u$ has distribution vector~$(2, 1, 1, 1, 1)$ in~$\mu$.
Let~$e, f$ be the two edges from $E(u)$ contained in the same virtual edge~$e_1$.
Then, similar to the case above, we check if one of the constantly many orderings of the virtual edges satisfies~$C$ and is compatible with~$B$.
In the negative case, $H$ is a no-instance.
In the positive case we get an equivalent instance~$H'$ by removing~$C$ from $H$, as
every solution for~$H'$ remains valid if we reorder the children of~$\mu$ according to one of the orders that satisfies both~$C$ and~$B$ and, if necessary, flip the embedding of the subgraph represented~by~$e_1$. 
\end{proof}

Similar techniques allow us to get rid of~$P_3$-constraints.

\begin{lemma}
\label{lem:(3111)P3}
  Let~$v$ be a vertex in~$H$ with $P_3$-constraint $C$ that appears in a $P$-node~$\mu$ in~$T$ with distribution vector $(3, 1, 1, 1)$. 
  Then there is a constant-time algorithm that either recognizes that~$H$ is a no-instance, or
  computes an equivalent instance $H'$ with fewer alternation constraints by replacing~$C$ by a suitable (synchronized) PQ-tree.
\end{lemma}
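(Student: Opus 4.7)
Here is a plan. I will analyze the color multiset of the three edges of~$E(v)$ sitting in the virtual edge~$e_1$ of~$\mu$'s skeleton, and on the basis of this multiset either produce the PQ-tree replacement or report a no-instance. The driving observation is that the $P_3$-constraint at the degree-$6$ vertex~$v$ forces its cyclic color pattern to be equivalent to~$(r,p,r,b,p,b)$: this is the unique cyclic pattern with two edges of each color in which red--purple and blue--purple alternate while red--blue do not. Because~$\mu$ is a~$P$-node, the three edges of~$E(v)$ inside~$e_1$ are consecutive in the rotation at~$v$, so their color multiset has to coincide with some three-consecutive window of this pattern, i.e.\ $\{r,r,p\}$, $\{b,b,p\}$, or $\{r,p,b\}$. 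Any other multiset yields a no-instance, and is reported immediately.

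For~$\{r,r,p\}$ (and symmetrically~$\{b,b,p\}$), the only matching window is~$(r,p,r)$: the purple edge in~$e_1$ is forced to the middle of the $e_1$-block and its two reds to the ends, and the external block must realize the complementary window~$(b,p,b)$. I replace~$C$ by the (non-synchronized) PQ-tree obtained by joining a Q-node~$(r_1,p_1,r_2)$ with a Q-node~$(b_1,p_2,b_2)$; together with the consecutivity forced by~$\mu$, this admits exactly the rotations at~$v$ compatible with~$C$.

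For~$\{r,p,b\}$, each of the two blocks admits four admissible internal orderings (``purple at one end, the remaining two leaves in either order''), so there are sixteen combinations; a direct check on the $P_3$-pattern shows that only four of them avoid producing an alternating red--blue subsequence. These four valid combinations pair each choice of the $e_1$-block's middle color with the opposite choice in the external block, in a coordinated way. I replace~$C$ by a synchronized PQ-tree in which each block is encoded by a Q-node whose children are ``the purple leaf'' and a Q-node on the two non-purple leaves, with the two outer Q-nodes as well as the two inner Q-nodes forced to flip together across the two blocks. This precisely selects the four valid rotations and excludes the eight forbidden combinations.

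In every case the construction inspects only constantly many items from a constant-size neighborhood of~$v$ in~$\mu$, so the whole procedure runs in constant time and strictly decreases the number of alternation constraints. The main obstacle is the last case: a single synchronization pair produces a strict superset of the valid rotations, and one has to couple two layers of Q-nodes to rule out precisely the combinations that generate the forbidden red--blue alternation. I will check equivalence by showing, in each case, that the constructed PQ-tree combined with the consecutivity imposed by~$\mu$ admits exactly the rotations at~$v$ allowed by~$C$, so the replacement preserves the set of feasible embeddings of~$H$.
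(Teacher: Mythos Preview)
Your approach differs from the paper's in a meaningful way. The paper does not build the PQ-trees from scratch here; instead, in each sub-case it exhibits a \emph{consecutive edge pair} at~$v$ and then invokes the earlier \cref{lem:consecConstraints} to produce the replacing (synchronized) PQ-tree. Concretely: if~$e_1$ carries one purple and two reds, the purple must sit between the two reds (so purple--red is a consecutive pair); if~$e_1$ carries one edge of each colour, the purple in~$e_1$ can never lie between its red and blue neighbours (so the red--blue pair in~$e_1$ is consecutive). This reduces the whole lemma to two one-line observations plus a pointer to \cref{lem:consecConstraints}, with no new PQ-tree to verify.

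Your direct construction is correct for the $\{r,r,p\}$ (and $\{b,b,p\}$) case. In the $\{r,p,b\}$ case, however, there is a genuine gap. You correctly enumerate the four admissible block-pairs and even note that ``each choice of the $e_1$-block's middle colour pairs with the \emph{opposite} choice in the external block.'' But the tree you then describe --- two outer Q-nodes synchronized, two inner Q-nodes synchronized, both ``forced to flip together'' --- does not realize that opposite pairing. With the natural symmetric reference orderings $(p_1,r_1,b_1)$ and $(p_2,r_2,b_2)$, co-flipping both inner Q-nodes yields the four cyclic orders $(p,r,b,p,r,b)$, $(r,b,p,r,b,p)$, $(p,b,r,p,b,r)$, $(b,r,p,b,r,p)$, and in \emph{all} of them red and blue alternate, i.e.\ the $P_3$-constraint fails. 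What you actually need is either to choose the reference order of~$I_2$ as $(b_2,r_2)$ (opposite to~$I_1$) before synchronizing, or equivalently to anti-synchronize the inner pair; this is precisely the ``opposite middle colour'' observation you made but did not carry into the construction. As written, the equivalence check you promise would fail.

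A secondary point: your $\{r,p,b\}$ tree uses two synchronized pairs, whereas the paper's notion of ``synchronized PQ-tree'' (and the trees supplied by \cref{lem:consecConstraints}) involves a single pair. The final 2-SAT step would tolerate multiple pairs, so this is more a framework mismatch than a mathematical obstacle, but it is another place where routing through \cref{lem:consecConstraints} buys cleanliness: you get the correct single-pair tree for free and avoid the bookkeeping error above.
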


\begin{proof}
    Note that $H$ is a no-instance, if~$e_1$ contains either both or none of the purple edges from $E(v)$.
    Hence assume that~$e_1$ contains exactly one purple edge~$f$ from $E(v)$.
    If $e_1$ contains a red edge~$g$ and a blue edge~$h$ from $E(v)$, then 
    no planar embedding of~$H$ with~$f$ between~$g$ and $h$ around~$v$ satisfies~$C$, since then purple cannot alternate with both blue and red such that red and blue do not alternate.
    Thus $g, h$ are a consecutive edge pair and~$C$ can be replaced by a suitable (synchronized) PQ-tree by \mbox{\cref{lem:consecConstraints}}.
    Otherwise, assume without loss of generality that $e_1$ contains one purple and two red edges from~$E(v)$. In this case, in any planar embedding of~$H$ that satisfies~$C$,
    the purple edge~$f$ has to appear between the two red edges.
    Thus~$f$ forms a consecutive edge pair with each of the red edges and thus again by \mbox{\cref{lem:consecConstraints}},~$C$ can be replaced by a suitable (synchronized) PQ-tree.
\end{proof}

Next we show that if only one pole of a $P$-node has an alternation constraint, then we can also get rid of this alternation constraint in case of a $(3, 1, 1, 1)$ or a $(2, 1, 1, 1)$ distribution.

    \begin{figure}[t]
        \centering
        \includegraphics{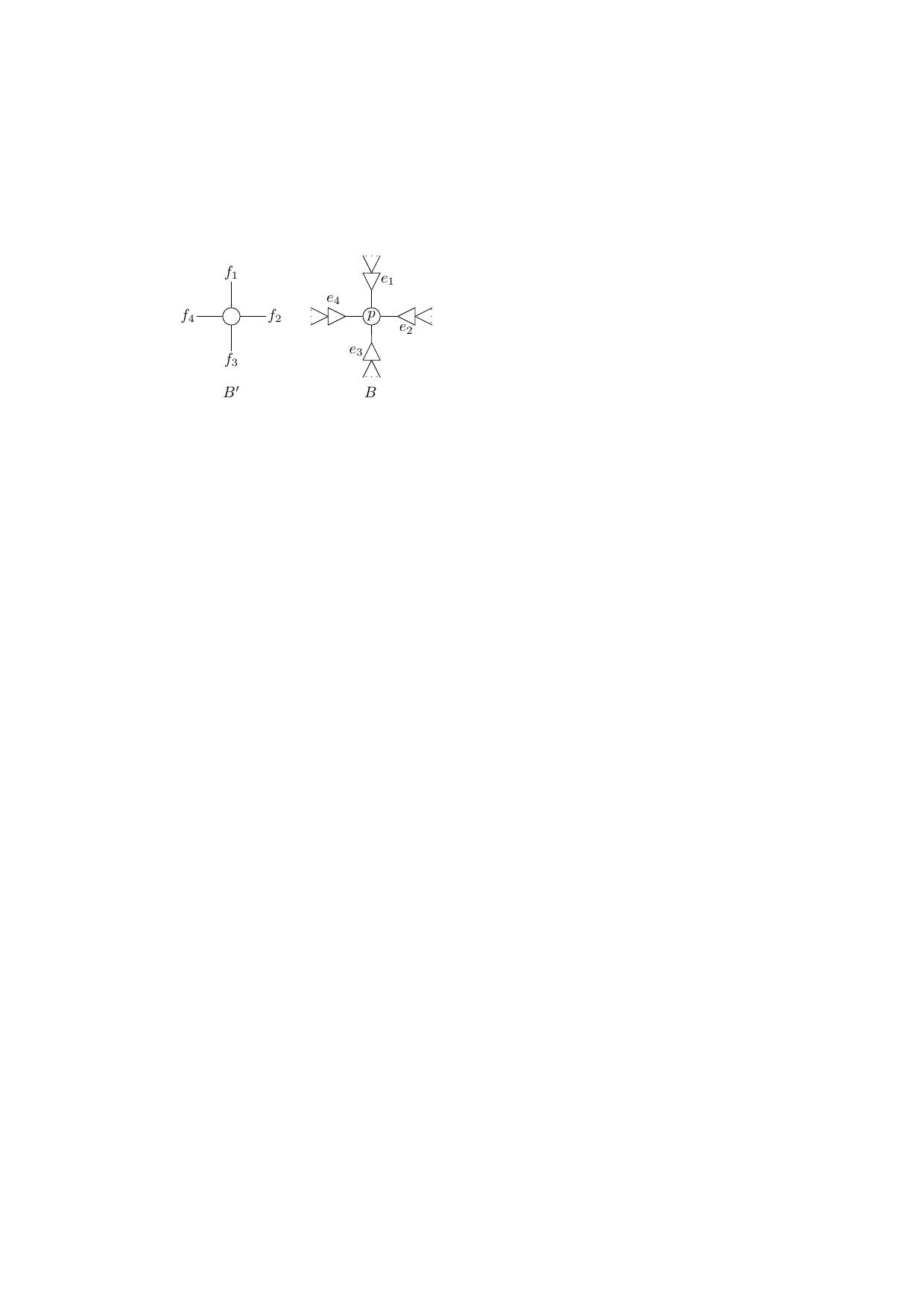}
        \caption{Illustrations of the PQ-trees used in the proof of \cref{lem:(3111)unconstrainedPQ}. Triangles represent subtrees of~$B$.}
        \label{fig:trivialPQtree}
    \end{figure}

\begin{lemma}
\label{lem:(3111)unconstrainedPQ}
  Let~$v$ be a vertex in~$H$ with alternation constraint $C \in \{K_3, K_3^{-r}\}$ that appears in a $P$-node~$\mu$ in $T$ with distribution vector $(3, 1, 1, 1)$ or $(2, 1, 1, 1)$. 
  Let $u$ be the pole of~$\mu$ distinct from~$v$. If~$u$ is unconstrained or has a (synchronized) PQ-constraint, 
  there is a constant-time algorithm that either recognizes that~$H$ is a no-instance,
  or computes an equivalent instance~$H'$ with fewer alternation constraints by either removing~$C$ or replacing~$C$ with a suitable (synchronized) PQ-tree.
\end{lemma}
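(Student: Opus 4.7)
My plan is to reduce the claim to a constant-time case analysis based on the color distribution of $E(v) \cap e_1$ and the constraint at~$u$. In each case, I either show that the $P$-node $\mu$ provides sufficient flexibility to delete~$C$ outright, or I encode the residual requirement as a small (possibly synchronized) PQ-tree at~$v$. Since $\deg(v) \leq 6$ and $\mu$ has exactly four virtual edges, all enumerations involve only a constant number of configurations and thus run in constant time.

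The first step is a color-feasibility check. For $C = K_3$ with distribution $(3,1,1,1)$, the three edges of $E(v) \cap e_1$ are consecutive around~$v$, so their color multiset must be a length-$3$ consecutive substring of the cyclic pattern $r,b,p,r,b,p$; this forces $e_1$ to carry exactly one edge of each color and the three single edges in $e_2, e_3, e_4$ to supply the complementary one of each color. For $C = K_3^{-r}$ with distribution $(2,1,1,1)$, the two edges of $E(v) \cap e_1$ must form a consecutive pair in a $K_3^{-r}$-valid rotation, which excludes the monochromatic multisets $\{b,b\}$ and $\{p,p\}$ and leaves $\{r,b\}$, $\{r,p\}$, or $\{b,p\}$ as the only admissible possibilities. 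A failure of any of these color conditions immediately yields a no-instance.

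The second step treats the two sub-cases for the constraint at~$u$. If $u$ is unconstrained, the $P$-node admits every cyclic permutation of $e_1, \ldots, e_4$ around~$v$; because each of $e_2, e_3, e_4$ carries a single edge of $E(v)$ (so no internal order is fixed there) and the internal cyclic order of $E(v) \cap e_1$ is automatically a valid color substring of~$C$ by Step~1, every embedding of $H \setminus \{C\}$ extends to a valid embedding of $H$ by reordering the children of~$\mu$ to complete the color pattern; hence $C$ can be removed. If instead $u$ is equipped with a (synchronized) PQ-tree $B_u$, I consider the PQ-tree $\widehat B_u$ that $B_u$ induces on the four virtual edges (obtained by collapsing each consecutive group $E(u) \cap e_i$). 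For each of the constant number of internal orders of $E(v) \cap e_1$ that the subgraph of~$e_1$ admits, I enumerate the cyclic permutations of $e_2, e_3, e_4$ that complete a $C$-valid rotation around~$v$ and test each against~$\widehat B_u$. If no combination is admissible, I return no-instance; otherwise, the surviving rotations admit a compact encoding, which I install at~$v$ as the replacement~$B_v$, possibly synchronized with a Q-node at~$u$ to couple the reflection of $e_1$'s internal order with the direction of the $P$-node permutation.

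The main obstacle will be verifying, in the last case, that the admissible rotations at~$v$ modulo $B_u$ can indeed be represented by a (possibly synchronized) PQ-tree equivalent to~$C$ in the presence of~$B_u$: the coupling between the internal orientation of~$e_1$ and the cyclic order of the loose virtual edges gives rise to several sub-configurations, each of which must be inspected and resolved by writing down $B_v$ explicitly. Because $|E(v)| \leq 6$ and $\mu$ has only four virtual edges, this enumeration is of constant size and yields the stated complexity.
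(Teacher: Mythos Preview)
Your proposal has a genuine gap in the case where $u$ carries a PQ-tree $B_u$. You assert that ``the surviving rotations admit a compact encoding'' as a (synchronized) PQ-tree $B_v$, but this is precisely the crux of the lemma and you do not verify it. In fact, when your reduced tree $\widehat B_u$ is trivial (a single P-node on the four virtual edges), the constraint from $u$ imposes nothing beyond the grouping already forced by the $P$-node; the surviving rotations at $v$ are then exactly the $K_3$-valid (resp.\ $K_3^{-r}$-valid) rotations. These sets are defined by \emph{alternation}, not consecutivity, and the paper explicitly remarks that such constraints ``cannot be handled straightforwardly with PQ-trees.'' So in this sub-case your plan of installing a PQ-tree $B_v$ at $v$ cannot succeed; the only correct move is to \emph{drop} $C$ entirely and argue equivalence via the reordering freedom of $\mu$, which your PQ-case does not allow for.

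The paper's proof avoids this trap via a clean dichotomy on the restriction $B'$ of $B_u$ to one representative per virtual edge. If $B'$ is a single P-node, the four virtual edges can be permuted freely without violating $B_u$, so $C$ is simply removed. If $B'$ is non-trivial, then (since $B'$ has only four leaves) some pair of virtual edges is forced consecutive, and one such pair can always be chosen among $e_2,e_3,e_4$; this yields a consecutive edge pair at $v$ among the singleton virtual edges. For $C=K_3$ this suffices to invoke \cref{lem:consecConstraints}; for $C=K_3^{-r}$ the two edges in $e_1$ (which, after a preliminary reduction, are blue and purple) supply a \emph{second} consecutive pair, so case~(iii) of \cref{lem:consecConstraints} applies. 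Thus PQ-representability is never asserted for the raw alternation constraint; it is inherited from the already-proved \cref{lem:consecConstraints}. Your brute-force enumeration would have to rediscover this dichotomy to close the gap, and your Step~1 also leaves the cases $\{r,b\},\{r,p\}\subset e_1$ for $K_3^{-r}$ open, whereas the paper disposes of them immediately via \cref{lem:consecConstraints}(ii).
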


\begin{proof}
     Observe that in case of a $(3, 1, 1, 1)$ distribution vector, $C = K_3$ and the three edges from $E(v)$ in $e_1$ must have pairwise distinct colors; otherwise~$H$ is a no-instance.
    If the distribution vector of~$v$ is $(2, 1, 1, 1)$, $C = K_3^{-r}$ and the two edges from $E(v)$ in $e_1$ must be purple and blue; otherwise $H$ is a no-instance or the alternation constraint can be replaced by a PQ-tree by~\mbox{\cref{lem:consecConstraints}}.
     If~$u$ is unconstrained, let $H'$ be the instance we obtain from $H$ by removing~$C$.
    Clearly, every solution of $H$ is also a solution for $H'$.
    Note that every solution for~$H'$ remains a valid solution if we reorder the children of~$\mu$ according to~$C$. Thus,~$H$ is a yes-instance if and only if $H'$ is a yes-instance.

    Now assume that $u$ has a (synchronized) PQ-constraint and let~$B$ be the (synchronized) PQ-tree of~$u$. Note that we may assume without loss of generality that $B$ contains the consecutivities induced by the virtual edges in the skeleton of~$\mu$.
    Let~$e_1, e_2, e_3, e_4$ be the four virtual edges incident to~$u$ in the skeleton of~$\mu$ and 
    let~$f_1, f_2, f_3, f_4$ be four edges from~$E(u)$ such that for every $1 \leq i \leq 4$,~$f_i$ is contained in $e_i$.
    We set~$F = \{f_1, f_2, f_3, f_4\}$ and denote the restriction of~$B$ to~$F$ by~$B'$.
    Observe that~$B'$ has four leaves and thus contains at most two non-leaf nodes. Moreover, note that in case $B'$ is synchronized, there is an equivalent PQ-tree containing exactly one non-leaf node of type~$Q$, thus we may assume that~$B'$ is not a synchronized PQ-tree.
    If $B'$ is trivial, i.e., it contains exactly one non-leaf node that is of type~$P$, then we let $H'$ be the instance we obtain from~$H$ by removing~$C$.
    Clearly, every solution of~$H$ is also a solution of~$H'$.
    Conversely, note that the structure of~$B'$ implies that in $B$ there is a $P$-node~$p$ of degree~4 such that for every virtual edge $e_i$ in the skeleton of~$\mu$, there is an edge in~$B$ incident to~$p$ whose removal yields two subtrees $B_1$, $B_2$ such that the leaf-set of~$B_1$ corresponds exactly to the edges from $E(u)$ in~$e_i$; see~\mbox{\cref{fig:trivialPQtree}}.
    Hence the embeddings of the subgraphs represented by the virtual edges in the skeleton of~$\mu$ are completely independent.
    In other words, every solution of~$H'$ remains a valid solution after reordering the children of~$\mu$ according to~$C$, since~$B$ is compatible with each of these orders.
    Thus $H$ is a yes-instance if and only if~$H'$ is a yes-instance.

    Otherwise, there is an edge pair~$f_i, f_j$ in~$F$ that is consecutive in every order represented by~$B'$.
    Note that also~$F \setminus \{f_i, f_j\}$ is consecutive in every order represented by~$B'$, and that this is also a consecutive a pair of edges, since $B'$ has only four leaves.
    Hence we may assume without loss of generality that $i, j \neq 1$. 
    In particular, we get that two of the virtual edges containing exactly one edge from $E(v)$ have to appear consecutively in any planar embedding of~$H$ that satisfies all constraints. 
    This yields a consecutive edge pair around~$v$. 
    Note that in case $C= K_3^{-r}$ the two edges of~$E(v)$ in $e_1$ are a second consecutive edge pair.
    Thus, independent of its type,  $C$ can be replaced by a suitable (synchronized) PQ-tree by \mbox{\cref{lem:consecConstraints}}.
\end{proof}

\begin{figure}
    \centering
    \includegraphics{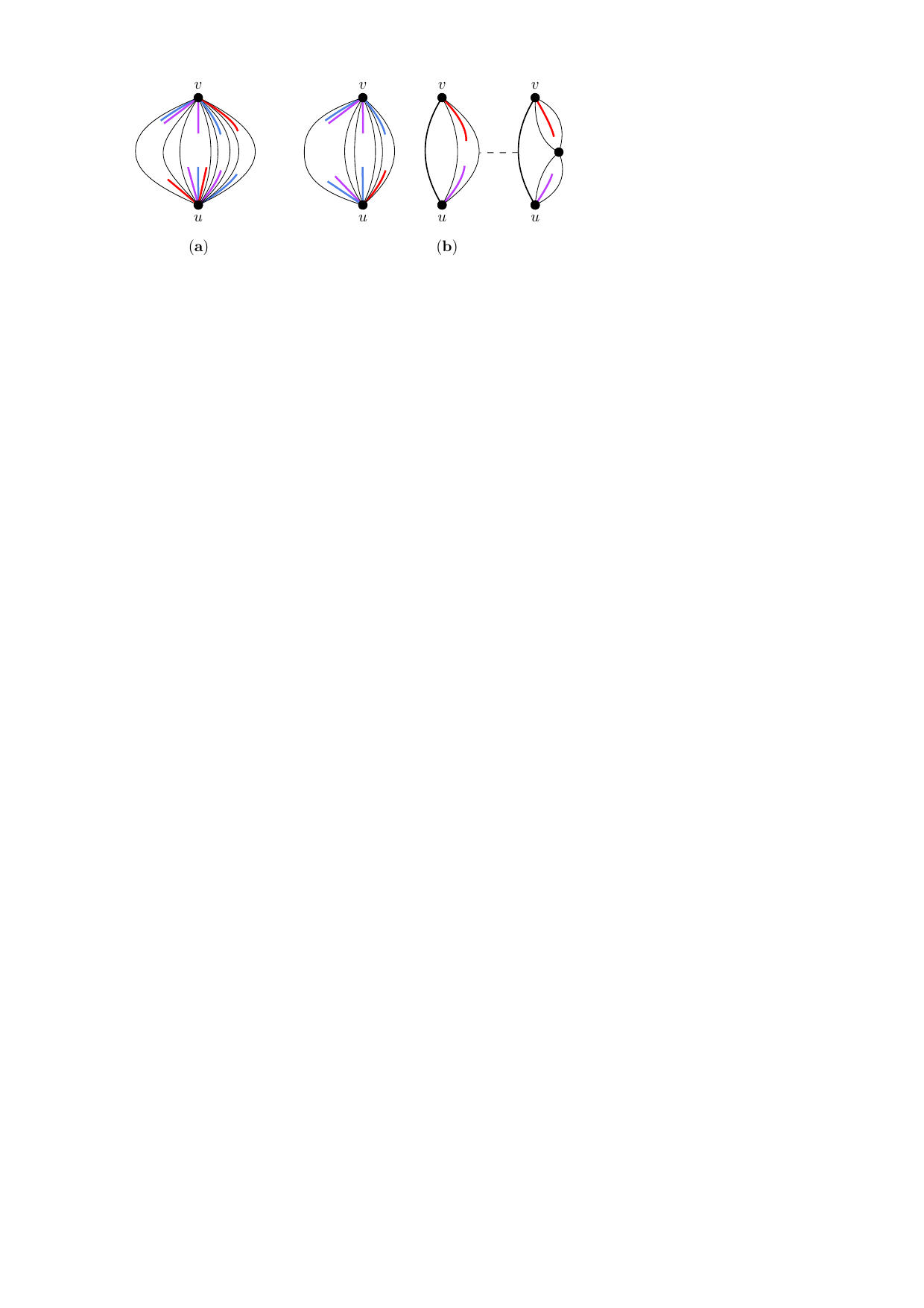}
    \caption{Illustrations of operations used in the proof of~\cref{lem:(2111)}.}
    \label{fig:P(2111)}
\end{figure}

\begin{figure}[t]
    \centering
    \includegraphics{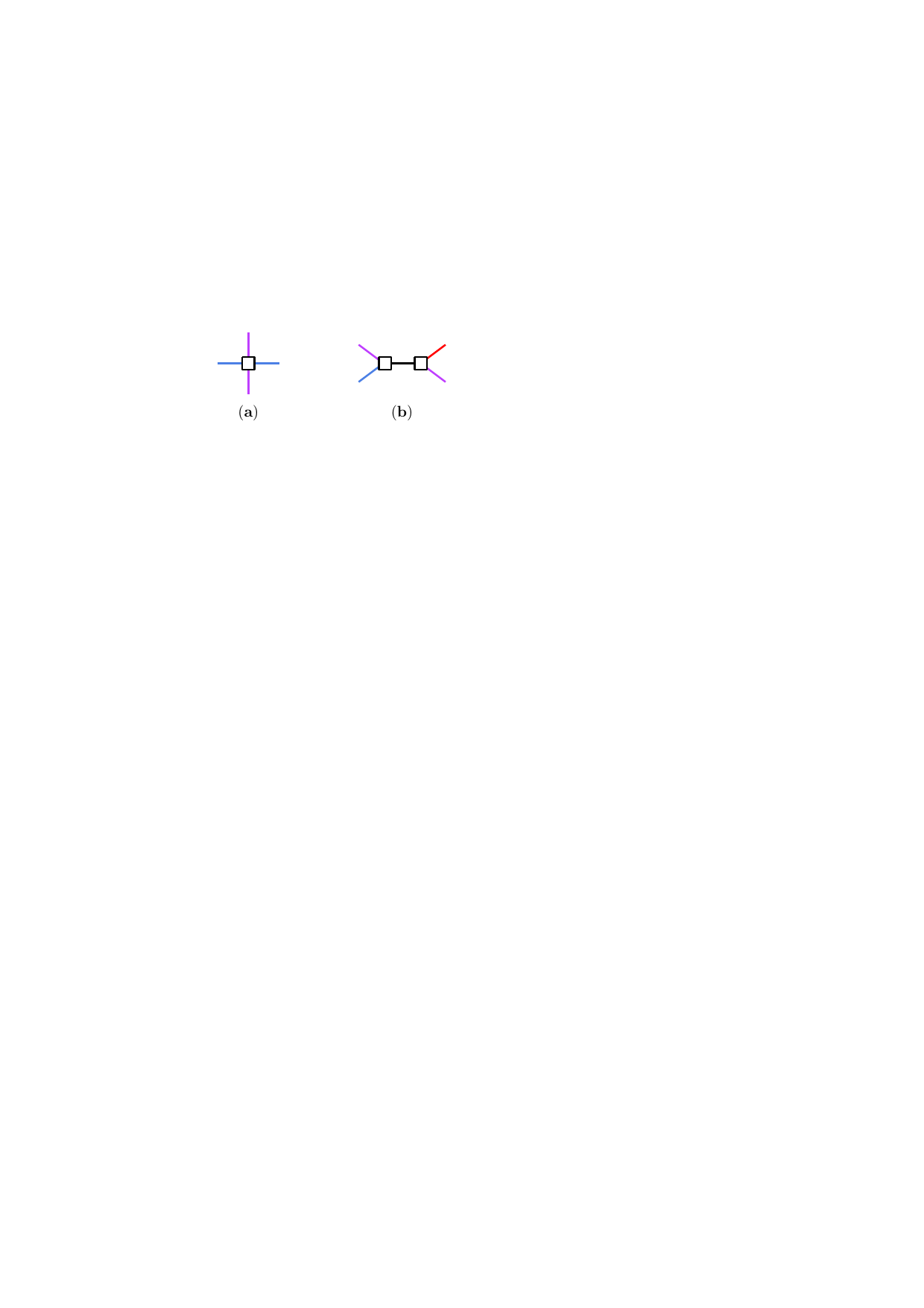}
    \caption{The PQ-trees used in the proof of~\cref{lem:(2111)}.}
    \label{fig:PQnew}
\end{figure}

In case of a $(2, 1, 1, 1)$ distribution in a $P$-node~$\mu$ with two alternation constrained poles, we get rid of both alternation constraints in constant time.

\begin{lemma}
\label{lem:(2111)}
Let~$v$ be a vertex in~$H$ with $K_3^{-r}$-constraint $C$ that appears in a $P$-node~$\mu$ in~$T$ with distribution vector~$(2, 1, 1, 1)$. 
  Let $u$ be the pole of~$\mu$ distinct from~$v$. If~$u$ also has an alternation constraint~$C'$, then 
  we can compute an equivalent instance~$H'$ with fewer alternation constraints in constant time.
\end{lemma}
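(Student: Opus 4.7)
We first extract the structural information forced by $v$'s constraint. The $P$-node $\mu$ has four virtual edges $e_1,e_2,e_3,e_4$. By the argument preceding \cref{lem:(3111)unconstrainedPQ}, we may assume that the two edges of $E(v)$ inside $e_1$ have colors blue and purple (call them $b_1,p_1$), while $e_2,e_3,e_4$ contain, in some assignment, the remaining blue $b_2$, the remaining purple $p_2$, and the unique red $r$; otherwise we either have a no-instance or we may apply \cref{lem:consecConstraints}. A direct enumeration of the five-edge cyclic orderings in which blue and purple alternate shows that the cyclic order around $v$, starting from $e_1$, must list the virtual edge carrying $b_2$ before the one carrying $p_2$ (with the $r$-virtual-edge placed anywhere in between, or in the end positions), and that the internal order $(b_1,p_1)$ inside $e_1$ at $v$ determines which direction around $v$ we are traversing. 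In other words, the set of valid cyclic orders of the four virtual edges around $v$, together with the internal ordering of $e_1$ at $v$, is captured by a synchronized PQ-tree $B_v$ that (i) has a Q-node forcing $e_1$ to be consecutive with the $b_2$- and $p_2$-virtual-edges in the correct direction, and (ii) synchronizes the flip of $e_1$ at $v$ with the Q-node's reversal.

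Next, we case-analyze on $u$. Since $u$ has an alternation constraint and appears with $\mu$, its distribution is either $(3,1,1,1)$ (so $\deg(u)=6$ and $C'\in\{K_3,P_3\}$) or $(2,1,1,1)$ (so $\deg(u)=5$ and $C'\in\{K_3^{-r},P_3^{-p}\}$). The subcase $C'=P_3$ with distribution $(3,1,1,1)$ is already handled by \cref{lem:(3111)P3}, so we may ignore it. In each remaining subcase, we repeat the same type of local analysis as above: the alternation constraint $C'$ together with the distribution of $E(u)$ among $e_1,\dots,e_4$ forces a restricted set of valid cyclic orderings of the virtual edges around $u$, which we express as a (synchronized) PQ-tree $B_u$. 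There are only constantly many (at most $4!\cdot 2^4$) combinations of permutations and flips of the virtual edges of $\mu$, so the combined feasibility check — does there exist a permutation and a choice of flips that simultaneously satisfy $C$ at $v$ and $C'$ at $u$? — takes constant time; if it fails we return a no-instance.

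In the positive case, we form $H'$ by deleting both alternation constraints $C$ and $C'$ and equipping $v$ with $B_v$ and $u$ with $B_u$. It remains to prove equivalence. One direction is immediate: every realization of $H$ restricts to the ordering-sets represented by $B_v$ and $B_u$. For the converse, we use the fact that the cyclic order of virtual edges seen at $u$ is the reverse of the one seen at $v$, combined with the fact that in the remaining three ``thin'' virtual edges (those containing at most one edge of $E(v)$ and of $E(u)$) the internal flips do not affect either rotation; consequently, any realization of $H'$ can be rearranged at $\mu$ to satisfy both original alternation constraints.

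\textbf{Main obstacle.} The delicate case is the one where the ``big'' virtual edge of $u$ (the one containing two or three of $u$'s edges) coincides with $e_1$. Then a flip of $e_1$ simultaneously reverses the internal order of $u$'s edges at $u$ \emph{and} the order $(b_1,p_1)$ at $v$, so the PQ-trees $B_u$ and $B_v$ cannot be chosen independently: the Q-node at $u$ that controls the internal ordering of $e_1$ must be synchronized with the Q-node at $v$ that controls the internal ordering of $e_1$. Showing that this coupling can indeed be captured by a synchronized PQ-tree (rather than by a richer class of constraints) is the content that \cref{fig:PQnew} is designed to exhibit, and is the step that requires the most careful verification.
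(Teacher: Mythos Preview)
Your plan has a genuine gap at its first step. You claim that the $K_3^{-r}$-constraint at~$v$, together with the consecutivity forced by the $(2,1,1,1)$ distribution, can be captured by a synchronized PQ-tree~$B_v$. But the only consecutive pair at~$v$ is the blue--purple pair inside~$e_1$, and this is exactly the case that \cref{lem:consecConstraints} deliberately excludes for~$K_3^{-r}$. Concretely, with~$e_1$ holding~$b_1,p_1$ and the other parallels holding~$b_2,p_2,r$, the valid unoriented cyclic orders at~$v$ are $(b_1,p_1,r,b_2,p_2)$, $(b_1,p_1,b_2,r,p_2)$, $(b_1,p_1,b_2,p_2,r)$. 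The only set that is consecutive in all three is~$\{b_1,p_1\}$, so the minimal PQ-tree containing these orders has a degree-$4$ P-node over the blocks $\{b_1,p_1\},b_2,p_2,r$, which represents six orders, not three; the extra orders violate blue--purple alternation. No Q-node is available to synchronize with the~$\{b_1,p_1\}$ Q-node, so synchronization cannot cut the six down to three. Your~$B_v$ simply does not exist within the (synchronized) PQ-tree model used here. The same obstruction hits~$B_u$ when~$C'=K_3^{-r}$, and a similar counting obstruction arises for~$C'=K_3$ with the $(3,1,1,1)$ distribution. Your fallback in the ``main obstacle'' paragraph --- synchronizing a Q-node of~$B_v$ with a Q-node of~$B_u$ --- is a cross-vertex coupling that the \acpG framework does not support (a synchronized PQ-tree is a single tree with an internal pair of coupled Q-nodes), so this would change the target problem rather than reduce to it.

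The paper avoids all of this by \emph{not} attempting to encode~$C$ as a PQ-tree. It splits on whether the ``big'' virtual edges of~$u$ and~$v$ coincide. If~$e_1^v\neq e_1^u$, it simply drops~$C$: given any solution of the reduced instance (which still satisfies~$C'$), one can reorder the parallels of~$\mu$ to satisfy~$C'$ and then flip~$e_1^v$ if necessary to make blue and purple alternate at~$v$; since~$u$ has only one edge in~$e_1^v$, that flip cannot disturb~$C'$. If~$e_1^v=e_1^u$, the paper instead \emph{splits the block}: it pulls out the parallel~$e_r$ carrying the red edge of~$v$ (adding a $uv$-edge to keep it biconnected) and leaves the rest as a separate block; in the remaining block, $v$ drops to degree~$4$ and gets an honest PQ-tree, while~$u$ receives either a PQ-tree (if~$\deg u=5$) or a single~$K_3^{-r}$-constraint (if~$\deg u=6$). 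Either way the number of alternation constraints strictly decreases, and the freedom to place~$e_r$ anywhere around~$\mu$ is exactly what absorbs the ``red-can-go-anywhere'' slack that your PQ-tree could not express.
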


\begin{proof}
    Since~$u$ has an alternation constraint, it has either degree~5 or 6.
    Let~$e_1^v$ be the virtual edge in~$\mu$ that contains more than one edge from~$E(v)$ and let $e_1^u$ be the virtual edge in~$\mu$ that contains more than one edge from~$E(u)$.
    First assume $e_1^v \neq e_1^u$. Let~$H'$ be the instance we obtain from~$H$ by removing~$C$.
    Clearly every solution of~$H$ is also a solution for~$H'$. Note that every solution for~$H'$ remains valid if we reorder the children of~$\mu$ according to~$C'$ and, if necessary, flip the embedding of~$e_1^v$ in order to satisfy~$C$; see~\cref{fig:P(2111)}$(a)$ for an illustration.
    
    If~$e_1^v = e_1^u$, let~$e_r$ denote the virtual edge in~$\mu$ that contains the red edge in $E(v)$. 
    Let~$G_1$ be the graph represented by $e_r$ in~$\mu$ plus a $uv$-edge. Let $G_2$ be the graph represented by the virtual edges distinct from~$e_r$ in~$\mu$. Note that both~$G_1$ and $G_2$ are biconnected.
    Let~$H'$ be the instance we obtain from~$H$ by $(i)$ replacing the biconnected component containing~$v$ by $G_1$ and $G_2$ such that every vertex except for the copies of~$u,v$ keeps its constraint (if any) and~$(ii)$ equipping the copy
    of~$v$ in~$G_2$ with the PQ-tree shown in~\cref{fig:PQnew}$(a)$ and~$(iii)$
    equipping the copy
    of~$u$ in~$G_2$ with a suitable constraint depending on its degree and the distribution of its edges.
    The constraint for $u$ depends on the degree of~$u$.
    First assume $\deg(u) = 5$. Then $u$ has a $K_3^{-r}$-constraint in~$H$ and the two edges of~$E(u)$ in $e_1^v$ are blue and purple. 
     If the red edge in~$E(u)$ is contained in $G_1$, then we label the copy of~$u$ in~$G_2$ with the  PQ-tree shown in~\cref{fig:PQnew}$(a)$. Otherwise we may assume without loss of generality that the copy of~$u$ in $G_2$ is incident to two purple edges and thus we label it with the PQ-tree shown in~\cref{fig:PQnew}$(b)$.
     It remains to consider the case $\deg(u) = 6$.
     By~\cref{lem:(3111)P3} we may assume that~$u$ has a $K_3$-constraint in~$H$ and
     that the edges of~$E(u)$ in $e_1^v$ have pairwise distinct colors. As the edge colors are completely symmetric in this case, we may assume that the edge of~$E(u)$ contained in~$e_r$ is red and we equip the copy of~$u$ in~$G_2$ with a $K_3^{-r}$-constraint.
     
    From every solution of~$H$ we obtain a solution of~$H'$ by copying the embedding of  the connected component of~$v$, removing unnecessary vertices and edges from the first copy and replacing unnecessary vertices and edges by a single $uv$-edge in the second copy.
    Conversely, assume that we have a solution for~$H'$. Since the position of the red edge around~$v$ is arbitrary, it is always possible to
    reinsert the graph represented by~$e_r$ into an embedding of the remaining graph, such that both alternation constraints are fulfilled. 

    Note that we obtain the SPQR-tree~$T'$ of~$H'$
    from $T$ in constant-time as follows.
    Let~$\nu$ be the node adjacent to~$\mu$ that contains the twin edge of~$e_r$. Note that $\nu$ is an $S$-node since~$v$ has degree~$2$ in the skeleton of~$\nu$.
    We remove~$e_r$ from~$\mu$ and add a new $Q$-node adjacent to~$\nu$; see~\cref{fig:P(2111)}$(b)$ for an illustration. 
\end{proof}

Finally we show that every vertex that appears with a distribution vector~$(3, 1, 1)$ in the SPQR-tree, also appears in a $P$-node with distribution vector~$(2, 1, 1, 1)$. By~\cref{lem:(2111),lem:(3111)unconstrainedPQ}, we may thus assume  that no vertex with alternation constraint appears with distribution vector~$(3, 1, 1)$ in~$T$.

\begin{lemma}
    \label{lem:311}
    Let~$v$ be a vertex with alternation constraint~$C$ that appears in a node~$\mu$ in~$T$ with distribution vector~$(3, 1, 1)$.
    Then there is a $P$-node~$\nu$ in~$T$ that contains~$v$ with distribution vector~$(2,1 ,1 ,1)$.
\end{lemma}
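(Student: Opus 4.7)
My plan is a short case analysis on the type of $\mu$ together with, when needed, walking one or two edges down the SPQR-tree of $H$ from $\mu$ into the subgraph represented by the virtual edge carrying the three edges of $E(v)$.

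I first pin down $\mu$ and the nature of $C$. The distribution $(3,1,1)$ has three entries, so $v$ has degree $3$ in $\skel(\mu)$; this rules out $\mu$ being an $S$-node, leaving $P$ or $R$. Since $\deg(v) = 5$, the constraint $C$ is either $P_3^{-p}$ or $K_3^{-r}$. By \cref{lem:R-nodes}$(ii)$ applied to $\mu$, the two singleton edges at $v$ already form a consecutive edge pair. If $C = P_3^{-p}$, then \cref{lem:consecConstraints} would replace $C$ with a (synchronized) PQ-tree, contradicting the hypothesis that $v$ still carries an alternation constraint; thus $C = K_3^{-r}$, and since $C$ remains in place, the non-replaceability clauses of \cref{lem:consecConstraints} imply that neither of the two singleton edges is red and that there is no second consecutive pair at $v$.

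Next, I let $e_1$ be the virtual edge carrying the three edges of $E(v)$ and $\mu'$ be its neighbor in $T$. The twin of $e_1$ in $\mu'$ carries the remaining $5-3=2$ edges of $E(v)$, so the distribution of $v$ in $\mu'$ contains a $2$, and its other entries sum to $3$, each being at least $1$. Enumerating by node type (and using the minimum-degree condition of $R$-nodes), the only admissible distributions are $(3,2)$ if $\mu'$ is an $S$-node, and $(2,2,1)$ or $(2,1,1,1)$ if $\mu'$ is a $P$- or $R$-node. I rule out $\mu'$ being an $R$-node: by \cref{lem:R-nodes}$(iv)$ or $(v)$, such a node yields a \emph{second} consecutive pair, distinct from the original singleton pair, and then \cref{lem:consecConstraints}$(iii)$ would replace $C$ --- contradiction. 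A $P$-node with $(2,2,1)$ is excluded by the same argument, since the two virtual edges carrying two edges each produce two distinct consecutive pairs. If $\mu'$ is a $P$-node with $(2,1,1,1)$, I take $\nu = \mu'$ and we are done. Otherwise $\mu'$ is an $S$-node with $(3,2)$, and I iterate the walk one more step to the neighbor $\mu''$ reached via the virtual edge at $v$ in $\mu'$ carrying three edges; since $\mu'$ is an $S$-node, $\mu''$ is a $P$- or $R$-node, the twin at $v$ in $\mu''$ again carries two edges, and the same enumeration forces $\mu''$ to be a $P$-node with $(2,1,1,1)$. Thus $\nu = \mu''$.

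The main obstacle I anticipate is the bookkeeping of consecutive edge pairs when $\mu'$ (or $\mu''$) is a $P$- or $R$-node with distribution $(2,2,1)$: one of the two consecutive pairs necessarily coincides with the original singleton pair --- it is the pair realized by the twin virtual edge carrying the two ``outside'' edges of $E(v)$ --- so I must carefully justify that the \emph{other} virtual edge with two edges produces a genuinely distinct pair, which is precisely what activates \cref{lem:consecConstraints}$(iii)$ and yields the contradiction. Once this distinctness is articulated cleanly, the remainder reduces to a routine enumeration over the handful of admissible distributions and node types.
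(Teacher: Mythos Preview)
Your proof is correct and takes essentially the same approach as the paper: both pass to a node in which the three bundled edges of $E(v)$ are split, observe that the resulting distribution vector is $(2,2,1)$ or $(2,1,1,1)$, and rule out $(2,2,1)$ and the $R$-node case via \cref{lem:R-nodes}(iv),(v) together with the standing assumption that \cref{lem:consecConstraints} has already been applied exhaustively. The only difference is presentational --- you walk step by step from $\mu$ to $\mu'$ (and, if $\mu'$ is an $S$-node, one further step to $\mu''$), whereas the paper jumps directly to the splitting node; your explicit identification of $C$ as a $K_3^{-r}$-constraint is implicit in the paper's context, and your worry about distinctness of the two consecutive pairs in the $(2,2,1)$ case is harmless over-caution since the pairs come from distinct virtual edges and are therefore disjoint.
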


\begin{proof}
    The distribution vector~$(3,1, 1)$ implies that there is a node~$\nu$ in $T$ containing~$v$, such that the three edges of~$E(v)$ that appear in the same virtual edge in~$\mu$, are distributed among at least two virtual edges in~$\nu$. Moreover, 
there is a distinct virtual edge in the skeleton of~$\nu$ that contains the two remaining edges from~$E(v)$. Hence the distribution vector of~$v$ in~$\nu$ is either $(2, 2, 1)$ or $(2, 1, 1, 1)$.
In the first case we immediately get two consecutive edge pairs by~\cref{lem:R-nodes}$(iv)$, thus we assume  the latter case.
Note that~$\nu$ cannot be an $S$-node, since~$v$ has degree~$4$
in the skeleton of~$\nu$.
If~$\nu$ is an $R$-node, we 
have two consecutive edge pairs by~\cref{lem:R-nodes}~$(v)$, hence~$v$ appears in a $P$-node with distribution vector~$(2, 1, 1, 1)$.
\end{proof}

Next we show that by applying~\cref{lem:consecConstraints,lem:R-nodes,lem:(11111),lem:(2111),lem:(3111)P3,lem:(3111)unconstrainedPQ} we can assume from now on that only $K_3$-constraints and PQ-constraints occur and that every vertex with alternation constraint appears in a $P$-node in~$T$ with distribution vector~$(3,1, 1, 1)$.

\begin{lemma}
    \label{lem:only3111left}
    Let~$H$ be an instance of \acpGB. There is a linear-time algorithm that either recognizes that~$H$ is a no-instance, or computes an equivalent instance~$H'$ such that every alternation constraint is a $K_3$-constraint and every vertex with $K_3$-constraint appears in the skeleton of a $P$-node with distribution vector~$(3, 1, 1, 1)$.
\end{lemma}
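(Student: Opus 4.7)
The plan is to normalize $H$ by processing its SPQR-tree $T$ and invoking the preceding reduction lemmas in a suitable order. First I would compute $T$ in linear time and, for every vertex $v$ that still carries an alternation constraint, inspect the distribution vectors of $v$ in every skeleton of $T$ in which it appears, together with the constraint type of the opposite pole when $v$ is in a $P$-node skeleton.

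Whenever $v$ is incident to a consecutive edge pair (detected via \cref{lem:R-nodes}: either $v$ lies in the skeleton of an $R$-node, or in a skeleton with a virtual edge containing exactly two or exactly all-but-two edges of $E(v)$) I would apply \cref{lem:consecConstraints} and either conclude that $H$ is a no-instance or replace the constraint at $v$ by a suitable (synchronized) PQ-tree. After this sweep every surviving constrained vertex only shows up with one of the distribution vectors $(1,1,1,1,1,1)$, $(1,1,1,1,1)$, $(3,1,1,1)$, $(3,1,1)$, $(2,1,1,1)$ in some $P$-node, as already argued in the text immediately preceding the statement.

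I would then walk through these remaining distributions using the dedicated lemmas. \cref{lem:(11111)} disposes of both all-singleton distributions; \cref{lem:(3111)P3} turns every $P_3$-constraint with distribution $(3,1,1,1)$ into a PQ-tree; \cref{lem:311} guarantees that every vertex with a $(3,1,1)$ occurrence also appears in a $P$-node with distribution $(2,1,1,1)$, which is then handled together with the genuine $(2,1,1,1)$ cases by \cref{lem:(3111)unconstrainedPQ} if the opposite pole is unconstrained or carries a PQ-tree, or by \cref{lem:(2111)} if both poles are alternation-constrained; finally, \cref{lem:(3111)unconstrainedPQ} also disposes of every $K_3$- or $K_3^{-r}$-constraint with distribution $(3,1,1,1)$ whose opposite pole is unconstrained or already carries a PQ-tree. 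After this exhaustive sweep only $K_3$-constraints remain, each sitting at a vertex that appears in a $P$-node of $T$ with distribution vector $(3,1,1,1)$, exactly as required. If during the process any applied lemma reports infeasibility, I return \texttt{no-instance}.

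The main obstacle is the linear running time. Each individual reduction touches only a constant amount of skeleton data and either eliminates an alternation constraint outright or (in the case of \cref{lem:(2111)}) splits one biconnected component into two while still strictly decreasing the number of alternation constraints; in either case the local update of $T$ (deleting a virtual edge and, when needed, attaching a fresh $Q$-node as described in the proof of \cref{lem:(2111)}) takes constant time. I would maintain a work list of constrained vertices and charge the $O(\deg(v))$ cost of each reduction to the disappearing constraint, while being careful to re-insert into the work list any vertex whose neighborhood in $T$ has changed (so that newly exposed consecutive edge pairs or new adjacencies in the SPQR-tree are not missed). Because every vertex is re-enqueued at most a constant number of times and the total amount of skeleton data in $T$ is $O(n)$, the sum of the charges is $O(n)$, giving the linear bound claimed in the statement.
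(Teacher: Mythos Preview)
Your proposal is correct and follows essentially the same strategy as the paper: detect consecutive edge pairs via \cref{lem:R-nodes}, discharge them with \cref{lem:consecConstraints}, and then eliminate the remaining $P$-node distributions one by one using \cref{lem:(11111),lem:(3111)P3,lem:(3111)unconstrainedPQ,lem:(2111),lem:311}; your running-time bookkeeping is in fact more explicit than the paper's, which defers most of that to \cref{th:main-theorem}.

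One small slip is worth flagging. You justify the claim that after the first sweep ``every surviving constrained vertex only shows up \dots\ in some $P$-node'' by pointing to the text immediately preceding the lemma, but that passage explicitly defers the claim (``We will show later that this implies\dots''); the present proof is where it must be established. The delicate case is a $K_3^{-r}$-constrained degree-$5$ vertex in an $R$-node with distribution vector $(3,1,1)$: here \cref{lem:R-nodes} guarantees only a single consecutive pair, which may contain no red edge, so none of the hypotheses (i)--(iii) of \cref{lem:consecConstraints} fire and the constraint survives your sweep while the vertex is, so far, only known to lie in an $R$-node. The paper closes this by invoking \cref{lem:311} at that point---its hypothesis reads ``a node $\mu$'', not ``a $P$-node $\mu$''---to conclude the vertex also appears in a $P$-node with distribution $(2,1,1,1)$. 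You do invoke \cref{lem:311}, but only after already asserting the vertex sits in a $P$-node; moving that invocation earlier (or noting that your $(3,1,1)$ item covers $R$-nodes as well) removes the circularity.
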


\begin{proof}
    Let~$v$ be a vertex in~$H$ with alternation constraint~$C$. First we show that~$v$ appears in a $P$-node in~$T$.
    Assume that~$v$ appears in an $R$-node~$\mu$. Then by~\cref{lem:consecConstraints,lem:R-nodes} and since every vertex in the skeleton of an $R$-node has degree at least~3, we may assume that the distribution vector
    of~$v$ in $\mu$ is~$(3, 1, 1)$. Now by~\cref{lem:311},~$v$ appears in a $P$-node.
Next assume that~$v$ appears in an $S$-node $\nu$ in~$T$.
Since the vertices in the skeleton of an~$S$-node have degree~2 and~$v$ has degree~5 or 6, 
there is a node~$\mu$ adjacent to~$\nu$ in $T$ also containing~$v$. Note that~$\mu$ is either a $P$-node or an $R$-node, as there are no two adjacent $S$-nodes in an SPQR-tree.
If~$C \neq K_3^{-r}$,~$\mu$ is a $P$-node by~\mbox{\cref{lem:R-nodes}}$(iii)$. Otherwise, if $C = K_3^{-r}$ then $\mu$ is either a $P$-node or an $R$-node. 
In the latter case it follows by the above discussion that~$v$ appears in a $P$-node.
Hence every vertex with alternation constraint appears in a $P$-node in~$T$.

Recall that the vertices in the skeleton of a~$P$-node have degree at least~3; i.e., the edges of~$E(v)$ are distributed among at least three virtual edges.
The only possible distributions without consecutive edge pairs that can be handled by~\cref{lem:R-nodes}, are $(1, 1, 1, 1, 1, 1)$, $(1, 1, 1, 1, 1)$, $(2, 1, 1, 1)$, $(3, 1, 1)$ and~$(3, 1, 1, 1)$. By~\cref{lem:311} the case of a $(3,1 ,1)$ distribution vector does not have to be considered and thus by applying~\cref{lem:(11111),lem:(3111)unconstrainedPQ,lem:(2111)} exhaustively, we either reject or only the last case remains; see~\cref{fig:table}.
Finally, by~\cref{lem:(3111)P3}, a vertex~$v$ with alternation constraint that appears in a $P$-node with distribution vector $(3, 1, 1, 1)$ has a $K_3$-constraint.
\end{proof}

Hence, we may assume that only $K_3$-constraints occur and that every vertex~$v$ with $K_3$-constraint appears in the skeleton of a $P$-node~$\nu$ in~$T$ with distribution vector~$(3, 1, 1, 1)$, whose pole distinct from~$v$ also has a~$K_3$-constraint. This concludes Step~1.

Now move to Step~2.
Let~$v$ be a vertex with~$K_3$-constraint.
The three edges from~$E(v)$ contained in the same virtual edge in the skeleton of~$\nu$ must have pairwise distinct colors; otherwise,~$H$ is a no-instance.
Since there are no two adjacent $P$-nodes in an SPQR-tree and by assumption no vertex with alternation constraint appears in an $R$-node,~$v$ also appears in an $S$-node with distribution vector~$(3, 3)$. 
Let~$\mu$ be an $S$-node in~$T$ that contains~$v$.
Since there are no two adjacent~$S$-nodes in an SPQR-tree, for each neighbor~$u$ of~$v$ in the skeleton of~$\mu$, there is a $P$-node adjacent to~$\mu$ in~$T$ with poles~$v$,~$u$. 
Thus, by assumption, the neighbors of~$v$ in the skeleton of~$\mu$ also have a $K_3$-constraint.
Observe that the distribution vector of the neighbors of~$v$ in~$\mu$ 
cannot be $(5, 1)$ as all neighbors of~$\mu$ are $R$- or $P$-nodes; i.e., the vertices in the corresponding skeletons have degree at least~3. 
Hence the neighbors of~$v$ in~$\mu$ also have distribution vectors $(3, 3)$.
Iteratively, it follows that every vertex in the skeleton of~$\mu$ has a $K_3$-constraint and shares a $P$-node with each of its two neighbors.

Consider an $S$-node $\mu$ in $T$ that contains alternation-constrained vertices $v_0, \dots, v_{k-1}$ in this order; see \cref{fig:S-node}. 
In the following, we consider the indices of the vertices and edges in $\mu$ modulo~$k$.
For every~$0 \leq i < k$, we denote the virtual edge between $v_i$ and~$v_{i+1}$ by~$e_i$ and
let~$\nu_i$ be the $P$-node adjacent to~$\mu$ in $T$ with poles~$v_i, v_{i+1}$.
Note that for every~$i$, the virtual edge~$e$ in the skeleton of $\nu_i$ that contains three edges from $E(v_i)$ also contains three edges from $E(v_{i+1})$, 
since~$e$ is the virtual edge representing~$\mu$.
Thus, if we fix the order of the three edges from $E(v_i)$ in~$e_i$, this fixes
the order of the three edges from $E(v_{i+1})$ in $e_i$.
Since~$v_{i+1}$ has an alternation constraint, this also fixes the order of the edges from $E(v_{i+1})$ in $e_{i+1}$. 
In this way, a fixed order of the three edges from $E(v_1)$ in $e_1$ implies an order of the edges from $E(v_{k-1})$ in~$e_{k-1}$, which in turn implies
an order on the three edges from $E(v_1)$ in~$e_{k-1}$.
If there exists an order of the three edges from $E(v_1)$ in $e_1$ that implies an order of the remaining edges from~$E(v_1)$ in~$e_{k-1}$ such that the $K_3$-constraint is satisfied, we obtain an equivalent instance by removing all $K_3$-constraints of vertices in the skeleton of~$\mu$, since we can reorder the parallels adjacent to~$\nu$ independently of the remaining graph.
Otherwise,~$H$ is a no-instance.
By the discussion above, we have the following.

\begin{figure}
\centering
    \includegraphics{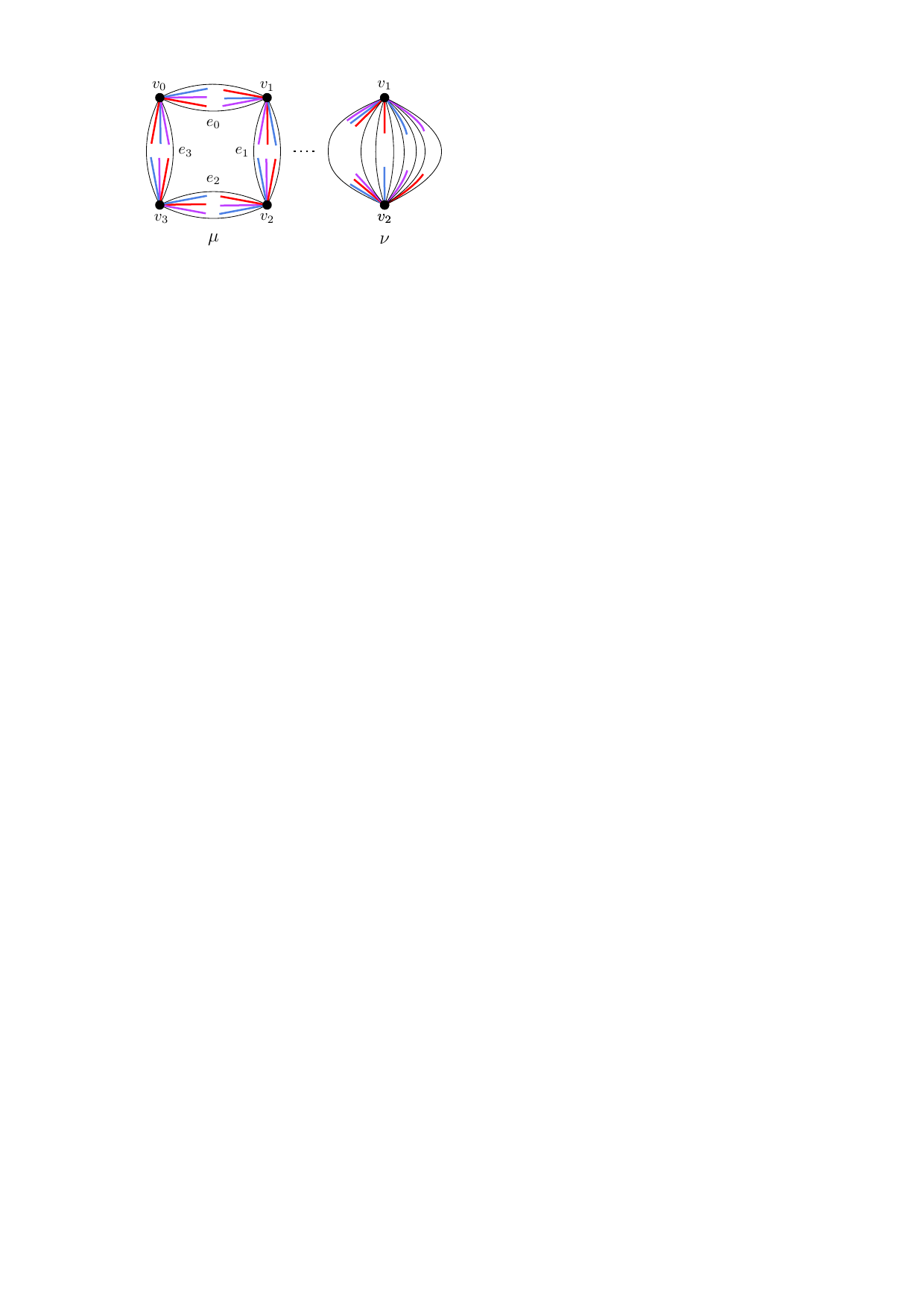}
  \caption{An $S$-node $\mu$ and an adjacent $P$-node~$\nu$.}
  \label{fig:S-node}
\end{figure}

\begin{lemma}
\label{lem:S-nodes}
    Let $\mu$ be an $S$-node in $T$ that contains vertices with  $K_3$-constraint. There is an $O(\deg(\mu))$-algorithm that either 
    recognizes that~$H$ is a no-instance, or computes an equivalent instance by removing all $K_3$-constraints of vertices in the skeleton of~$\mu$.
\end{lemma}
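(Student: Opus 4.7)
The plan is to formalize the cyclic-propagation argument sketched in the paragraph preceding the lemma. First I would fix notation: let $v_0, \ldots, v_{k-1}$ be the cycle vertices of $\skel(\mu)$, all of which carry a $K_3$-constraint by the earlier discussion, so $k = \deg(\mu)$; let $e_i$ denote the virtual edge $(v_i, v_{i+1})$ in $\skel(\mu)$; and let $\nu_i$ be the adjacent $P$-node whose skeleton contains the twin $\tilde{e}_i$. In a single $O(\deg(\mu))$ sweep, I would reject $H$ as a no-instance whenever, at some $v_i$, the three edges of $E(v_i)$ on one side of $\mu$ fail to carry three distinct colors.

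For each $i$, let $\sigma_i$ encode the rotation at $v_i$ of the $P$-node $\nu_i$; equivalently, $\sigma_i$ is the clockwise order of the three non-$\tilde{e}_i$ virtual edges of $\nu_i$ around $v_i$, read starting just after the $\tilde{e}_i$-block, so that $\sigma_i$ ranges over at most six values. The core claim I would prove is that $\sigma_{i+1}$ is a deterministic function of $\sigma_i$. Since $\nu_i$ is a $P$-node, its three branches appear around $v_{i+1}$ in the reverse order, which fixes the color sequence of the three $e_i$-side edges of $E(v_{i+1})$ at $v_{i+1}$. Because $v_{i+1}$ has degree two in $\skel(\mu)$, its six incident edges split into two consecutive triples around $v_{i+1}$ in $H$, and the $K_3$-constraint forces the $e_{i+1}$-side triple to be the unique completion to an alternating RBPRBP pattern; combined with the fixed color labels of the three non-$\tilde{e}_{i+1}$ branches of $\nu_{i+1}$, this pins down $\sigma_{i+1}$. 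Composing the $k$ steps yields a permutation $\pi$ of the at-most-six possible orderings. The algorithm then starts from an arbitrary $\sigma_0$, walks once around $\mu$ computing each $\sigma_i$ in $O(1)$ time, and finally determines in constant time whether $\pi$ has a fixed point: if not, $H$ is a no-instance; otherwise, we drop the $K_3$-constraints from $v_0, \ldots, v_{k-1}$. The total running time is $O(\deg(\mu))$.

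Correctness splits into two directions. The forward direction is routine: a realization of $H$ induces an actual sequence $\sigma_0, \ldots, \sigma_{k-1}$ that must close up consistently under $\pi$, so absence of a fixed point precludes any realization. The reverse direction, which I expect to be the main subtlety, asserts that any realization of the instance $H'$ obtained from $H$ by removing the $K_3$-constraints at $v_0, \ldots, v_{k-1}$ can be upgraded into a realization of $H$ by locally reordering the parallel virtual edges inside each $\nu_i$ so as to realize a fixed-point cycle $\sigma_0^*, \ldots, \sigma_{k-1}^*$. The main obstacle is arguing that these local changes interact with no other constraint of $H'$: this relies on the invariants accumulated in Step~1, namely that every pole of every $\nu_i$ is some $v_j$ and carries only the $K_3$-constraint (now removed) and no PQ-tree, whence permuting the four virtual edges of $\nu_i$ is a genuinely free $P$-node choice that leaves the rest of the embedding untouched.
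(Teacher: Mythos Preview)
Your proposal is correct and follows essentially the same approach as the paper: both set up the cyclic propagation around the $S$-node cycle, observe that fixing the color order of the triple on one side of $v_i$ uniquely determines the triple on the other side (via the $P$-node reversal and the forced RBPRBP pattern of the $K_3$-constraint), and check whether some starting order closes up consistently around the cycle. Your ``fixed point of $\pi$'' formulation is just a repackaging of the paper's ``there exists a starting order that implies a consistent order at $v_0$'' check, and your justification for the converse direction---that the $P$-node permutations at each $\nu_i$ are genuinely free because both poles carry only the (now-removed) $K_3$-constraint---matches the paper's one-line remark that ``we can reorder the parallels \ldots\ independently of the remaining graph.''
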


    Now we may assume that our graph~$H$ does not contain alternation constraints and start with Step~3.  To solve such an instance, we expand each vertex with its associated PQ-tree, if any, into a gadget that allows the same circular orders of its incident edges as the PQ-tree (essentially a P-node becomes a normal vertex, whereas a Q-node expands into a wheel as described in~\cite{DBLP:journals/jgaa/GutwengerKM08}).  Embedding the resulting graph~$H^\star$ and then contracting the gadgets back into single vertices already ensures that for each vertex of $H$ the order of its incident edges is represented by its PQ-tree.  Since our synchronized PQ-trees only involve Q-nodes, it then suffices to ensure that synchronized Q-nodes are flipped consistently.  To this end, observe that each wheel is 3-connected and hence its embedding is determined by a single R-node in the SPQR-tree of~$H^\star$.  This allows us to express such constraints in a simple 2-SAT formula of linear size, similar to, e.g.,~\cite{DBLP:journals/talg/BlasiusFR23,DBLP:journals/jgaa/GutwengerKM08}. Therefore, we obtain the following.

\begin{theorem}
\label{th:main-theorem}
Let~$A=(G,\Chi)$ be an~$n$-vertex AT-graph such that~$\lcc(A) \leq 3$. There exists an~$O(n)$-time algorithm that decides whether~$A$ is a positive instance of \SATor and that, in the positive case, computes a simple realization of $A$.
\end{theorem}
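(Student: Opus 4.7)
The plan is to assemble the theorem from the chain of reductions and algorithmic lemmas developed throughout this section. First I would apply \cref{lem:reduction-realizability-acp} to reduce the given instance $A$ of \SATR to an equivalent instance $H$ of \acp, where each vertex of degree $4$, $5$, or $6$ that arose from a non-trivial component of $\mathcal C(A)$ carries the corresponding $K_2$-, $P_3$-, or $K_3$-constraint. This reduction is clearly linear, as the construction just contracts the crossings within a hypothetical simple realization and can be done by a single traversal of $\mathcal C(A)$ together with $G$. Next, using \cref{lem:reduction-to-2-connected}, I would split $H$ at every cut vertex in linear time, obtaining an equivalent instance of \acpB (possibly with the augmented constraint types $C^{-c}$, $C^{-c,c'}$, $C^{-(c,c')}$ at the copies of the cut vertices).

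Second, by \cref{lem:reductionGACP} I would replace every alternation constraint carried by a degree-$4$ vertex with a suitable (possibly synchronized) PQ-tree, so that from this point on only degree-$5$ and degree-$6$ vertices may retain alternation constraints. I would then compute the SPQR-tree $T$ of each block in linear time and, by a single walk through the skeletons of $T$, detect the consecutive edge pairs isolated in \cref{lem:R-nodes}; wherever such a pair is found, \cref{lem:consecConstraints} either rejects or substitutes a PQ-tree for the remaining alternation constraint. After this sweep, every surviving alternation-constrained vertex appears at a pole of a $P$-node with one of a short list of distribution vectors, so I dispose of each such case in constant work per $P$-node: \cref{lem:(11111)} handles $(1,1,1,1,1,1)$ and $(1,1,1,1,1)$; \cref{lem:(3111)P3} handles $P_3$-constraints with $(3,1,1,1)$; \cref{lem:(3111)unconstrainedPQ} handles $K_3$- and $K_3^{-r}$-constraints against a non-alternation-constrained second pole; and \cref{lem:(2111)} handles the doubly alternation-constrained $(2,1,1,1)$ case. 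Together with \cref{lem:311}, this is exactly what \cref{lem:only3111left} packages: in linear total time, either rejection or an equivalent instance whose alternation constraints are only $K_3$-constraints at poles of $P$-nodes with distribution vector $(3,1,1,1)$.

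Third, the observations preceding \cref{lem:S-nodes} show that every remaining $K_3$-constrained vertex lies on an $S$-node cycle of $K_3$-constrained vertices. I would iterate \cref{lem:S-nodes} over these $S$-nodes, spending $O(\deg(\mu))$ at each, to either reject or eliminate all remaining alternation constraints; since the skeletons of $S$-nodes of $T$ have total size $O(n)$, this phase is linear. At this point the instance carries only (synchronized) PQ-tree constraints, and Step $3$ of the overall strategy takes over: I expand each PQ-tree–labelled vertex into the standard $P$/wheel gadget of~\cite{DBLP:journals/jgaa/GutwengerKM08} so that the allowed cyclic orders of its incident edges are exactly those represented by the tree, compute the SPQR-tree of the resulting graph $H^\star$, and encode the consistent-flip requirements of the (now $R$-node–induced) synchronized Q-nodes as a linear-size $2$-SAT formula in the spirit of~\cite{DBLP:journals/talg/BlasiusFR23,DBLP:journals/jgaa/GutwengerKM08}. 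A satisfying assignment, together with the chosen skeleton embeddings, induces a planar embedding of $H^\star$; contracting the gadgets back into vertices and then the crossing vertices back into crossings produces the desired simple realization of $A$.

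The main obstacle is bookkeeping: each local reduction must be implementable so that its cost is either constant or proportional to the degree of the pertinent skeleton vertex, and the reductions must not interact destructively (e.g., newly introduced PQ-trees must respect the consecutivities imposed by the skeletons to which their hosting vertex belongs, so that subsequent SPQR-based lemmas still apply). I expect that precomputing, once and for all, block labels on edges, distribution vectors at every (skeleton vertex, node) pair of $T$, and twin-virtual-edge pointers suffices to execute all of \cref{lem:R-nodes,lem:consecConstraints,lem:(11111),lem:(3111)P3,lem:(3111)unconstrainedPQ,lem:(2111),lem:only3111left,lem:S-nodes} within the promised linear budget, and that the final PQ-tree expansion and $2$-SAT solve close the proof.
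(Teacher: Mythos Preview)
Your proposal is correct and follows essentially the same approach as the paper's proof: the same chain of reductions (\cref{lem:reduction-realizability-acp,lem:reduction-to-2-connected}), the same SPQR-based elimination of alternation constraints via \cref{lem:reductionGACP,lem:R-nodes,lem:consecConstraints,lem:(11111),lem:(3111)P3,lem:(3111)unconstrainedPQ,lem:(2111),lem:311,lem:only3111left,lem:S-nodes}, and the same PQ-gadget expansion plus 2-SAT finish. The paper adds only two small bookkeeping details you anticipated: a worklist~$L$ to iterate \cref{lem:(3111)unconstrainedPQ} in overall linear time, and an explicit back-substitution step (reordering $P$-node children and recombining blocks) to turn the feasible embedding of the reduced instance~$H'$ into one of~$H$ before replacing crossing vertices by crossings.
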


\begin{proof}
By \cref{lem:reduction-realizability-acp,lem:reduction-to-2-connected}, AT-graph realizability with~$\lcc(A) \leq 3$ can be reduced to \textsc{2-connected}~\acpG in linear time.
Each of the resulting instances~$H$ of~\textsc{2-connected}~\acpG we solve independently as follows.
First we compute an equivalent instance $H'$ without alternation constraints by applying \cref{lem:consecConstraints,lem:(11111),lem:(3111)P3,lem:(3111)unconstrainedPQ,lem:(2111),lem:S-nodes,lem:only3111left} exhaustively (and reject if necessary).
Since in~$H'$, no vertex has an alternation constraint, a planar embedding of~$H'$ is feasible if and only if it satisfies the PQ-constraints.
%Now we may assume that our graph~$H$ does not contain alternation constraints.  
To solve such an instance, we expand each vertex with its associated PQ-tree, if any, into a gadget that allows the same circular orders of its incident edges as the PQ-tree (essentially a P-node becomes a normal vertex, whereas a Q-node expands into a wheel as described in~\cite{DBLP:journals/jgaa/GutwengerKM08}).  Embedding the resulting graph~$H^\star$ and then contracting the gadgets back into single vertices already ensures that for each vertex of $H$ the order of its incident edges is represented by its PQ-tree.  Since our synchronized PQ-trees only involve Q-nodes, it then suffices to ensure that synchronized Q-nodes are flipped consistently.  To this end, observe that each wheel is 3-connected and hence its embedding is determined by a single R-node in the SPQR-tree of~$H^\star$.  This allows us to express such constraints in a simple 2-SAT formula of linear size, similar to, e.g.,~\cite{DBLP:journals/talg/BlasiusFR23,DBLP:journals/jgaa/GutwengerKM08}. 
If~$H'$ is a yes-instance, such a procedure also provides a feasible embedding of $H'$, which can be obtained from an embedding of $H^\star$ that satisfies all the synchronization constraints by contracting each gadget to a single vertex.
Starting from the embedding of $H'$, by selecting suitable permutations for the children of the P-nodes (according to \cref{lem:consecConstraints,lem:(11111),lem:(3111)P3,lem:(3111)unconstrainedPQ,lem:S-nodes,lem:(2111)}), we can obtain a feasible embedding of $H$. Similarly, feasible embeddings of each block of the input auxiliary graph can be combined, by reordering the blocks around common cut vertices (according to \cref{lem:reduction-to-2-connected}), to obtain a feasible embedding $\Gamma$ of it.
To get a realization of~$A$ from~$\Gamma$,
we delete all crossing vertices and reinsert for every connected component~$X$ of the crossing graph~$\mathcal{C}(A)$, the edges in $E(X)$.

Clearly, applying \cref{lem:consecConstraints,lem:(11111),lem:(3111)P3,lem:(2111),lem:only3111left} exhaustively takes linear time in total.
Also applying \cref{lem:(3111)unconstrainedPQ} exhaustively can be done efficiently, if
we store all vertices that initially satisfy the conditions of the lemma in a list~$L$. When the alternation constraint of a vertex~$v$ in $L$ is removed, we remove $v$ from $L$ and add all vertices to~$L$ that have an alternation constraint and appear in a $P$-node together with~$v$.
This way every vertex is processed only once.
For applying \cref{lem:S-nodes} exhaustively, we iterate over all $S$-nodes in~$T$. For each $S$-node we need time linear in its degree.
This yields a linear running time in total.
\end{proof}

\section{Conclusions and Open Problems}\label{se:conclusions}

We proved that deciding whether an AT-graph $A$ is simply realizable is \NP-complete, already when the largest connected components of the crossing graph ${\cal C}(A)$ have 6 vertices only. On the other hand, we described an optimal linear-time algorithm that solves the problem when the largest components of ${\cal C}(A)$ have size at most $3$. This is the first efficient algorithm for the \SATRfull problem that works on general graphs.

An open problem that naturally arises from our findings is filling the gap between tractability and intractability: What is the complexity of \SATRfull if~$\lambda(A)$ is 4 or 5?
A first issue is that \Cref{lem:untangling} only allows to untangle cliques of size~3 and it is not clear whether a similar result can be proved for components of size~4.  Furthermore, contracting larger crossing structures yields more complicated alternation constraints and it is not clear whether they can be turned into PQ-trees, similar to the case of components of size~3. We therefore feel that different techniques may be necessary to tackle the cases where~$4 \le \lambda(A) \le 5$. 

Another interesting direction is to study alternative structural parameters under which the problem can be tackled, and which are not ruled out by our hardness result, as discussed in the introduction; for example the vertex cover number of ${\cal C}(A)$. 
Finally, one can try to extend our approach to the ``weak'' setting (i.e., the \WATRfull problem), still requiring a simple realization.

\bibliographystyle{plainurl}
\bibliography{ref}
\end{document}